\tikzset{cross/.style={cross out, draw=black, minimum size=2*(#1-\pgflinewidth), inner sep=0pt, outer sep=0pt},cross/.default={1pt}}
\newtheorem{theorem}{Theorem}
\newtheorem{lemma}[theorem]{Lemma}
\newtheorem{definition}[theorem]{Definition}
\def\RR{{\mathbb R}}
\def\ZZ{{\mathbb Z}}
\def\PPP{{\mathcal P}}
\def\HHH{{\mathcal H}}
\def\VVV{{\mathcal V}}
\def\GGG{{\mathcal G}}
\def\QQQ{{\mathcal Q}}
\def\CCC{{\mathcal C}}
\def\Pr{{\mathrm Pr}}
\def \d{{\mathrm d}}
\def \e{{\mathrm e}}
\def \eps{\epsilon}
\def \notables{s}
\DeclareMathOperator*{\OO}{{\mathcal O}}
\DeclareMathOperator*{\diam}{\mathrm diam}
\DeclareMathOperator*{\poly}{\mathrm poly}
\colorlet{shade}{gray!5}
\newcommand{\algorithmus}[2]{
{\centering
\vspace{0.1cm}
\begin{tikzpicture}
\node [fill=shade,draw]
{\parbox{\textwidth-0.5cm}{
{\bf #1}\vspace{-0.5cm}\\
#2}
};
\end{tikzpicture}\\
\vspace{0.2cm}
}
}
\algrenewcommand{\algorithmiccomment}[1]{\hfill// #1}
\begin{document}
\title{ANN for time series under the Fr\'echet distance\thanks{We thank Karl Bringmann and André Nusser for useful discussions on the topic of this paper. Special thanks go to the anonymous reviewer who pointed out an error in an earlier version of the manuscript, and to Andrea Cremer for careful reading.}}

\author[1]{Anne Driemel} 
\author[1]{Ioannis Psarros}
\affil[1]{Hausdorff Center for Mathematics, University of Bonn, Germany\\
}
\affil[ ]{\textit{driemel@cs.uni-bonn.de}, \textit{ipsarros@uni-bonn.de}}
\maketitle             

\begin{abstract}
We study approximate-near-neighbor data structures for time series under the continuous Fr\'echet distance.
For an attainable approximation factor $c>1$ and a query radius $r$,  an approximate-near-neighbor data structure can be used to preprocess $n$ curves in $\mathbb{R}$ (aka time series), each of complexity $m$, to answer queries with a curve of complexity $k$ by either returning a curve that lies within Fr\'echet distance $cr$, or answering that there exists no curve in the input within distance $r$. In both cases, the answer is correct. 
Our first data structure achieves a $(5+\eps)$ approximation factor, uses space in 
$n\cdot  \mathcal{O}\left({\epsilon^{-1}}\right)^{k} + \mathcal{O}(nm)$ 
and has query time in $\mathcal{O}\left(k\right)$. Our second data structure achieves a $(2+\eps)$ approximation factor, uses space in 
$n\cdot  \mathcal{O}\left(\frac{m}{k\epsilon}\right)^{k} + \mathcal{O}(nm)$ 
and has query time in $\mathcal{O}\left(k\cdot 2^k\right)$.
Our third positive result is a probabilistic data structure based on locality-sensitive hashing, which achieves space in $\mathcal{O}(n\log n+nm)$ and query time in $\mathcal{O}(k\log n)$, and which answers queries with an approximation factor in $\mathcal{O}(k)$. All of our data structures make use of the concept of signatures, which were originally introduced for the problem of clustering time series under the Fr\'echet distance.
In addition, we show lower bounds for this problem. Consider any data structure which achieves an approximation factor less than $2$ and which supports curves of arclength up to $L$ and answers the query using only a constant number of probes. We show that under reasonable assumptions on the word size any such data structure needs space in $L^{\Omega(k)}$.

\end{abstract}

\section{Introduction}
For a long time, Indyk's result on approximate nearest neighbor algorithms for the discrete Fr\'echet distance of 2002~\cite{Indyk-approxnn-02} was the only result known for proximity searching under the Fr\'echet distance. However, recently there has been a raised interest in this area and several new results have been published~\cite{deBerg-2013fast, Gudmundsson-fqt-15, giscup2017, AD18, DPP-VC-19, Driemel-lshc-17,emiris2018, FFK20, aronov-nn-19, M20}. 
An intuitive definition of the Fr\'echet distance uses the metaphor of a person walking a dog. Imagine the dog walker being restricted to follow the path defined by the first curve while the dog is restricted to the second curve. In this analogy, the Fr\'echet distance is the shortest length of a dog leash that makes a dog walk feasible. 
Despite the many results in this area and despite the popularity of the Fr\'echet distance it is still an open problem how to build efficient data structures for it. 
Known results either suffer from a large approximation factor or high complexity bounds with dependency on the arclength of the curve, or only support a very restricted set of queries. Before we discuss previous work in more detail, we give a formal definition of the problem we study.

\begin{definition}[Fr\'{e}chet distance]
Given two curves $\pi,\tau:~ [0,1] \mapsto \RR $, their Fr\'{e}chet distance is:
\[
	\d_F(\pi,\tau)=\min_{\substack{f: [0,1] \mapsto [0,1] \\ g:[0,1] \mapsto [0,1]}} ~\max_{\alpha\in[0,1]} \|\pi(f(\alpha))- \tau (g(\alpha))\|_2,
\]
where $f$ and $g$ range over all continuous, non-decreasing functions with $f(0) = g(0) = 0$, and $f(1) = g(1) = 1$.
\end{definition}

 \begin{definition}[$c$-ANN problem]\label{Dgenann}
The input consists of $n$ curves $\Pi$ in $\RR^d$. Given a distance threshold $r>0,$ an approximation factor $c>1$, preprocess $\Pi$ into a data structure such that for any query $\tau$, the data structure reports as follows:
\begin{compactitem}
    \item if $\exists \pi\in \Pi$ s.t.~$\d_{F}(\pi,\tau)\leq r$, then it returns $\pi'\in \PPP$ s.t.~$\d_{F}(\pi,\tau)\leq cr$,
    \item if $\forall \pi \in \Pi$, $\d_{F}(\pi,\tau)\geq cr$ then it returns “no”,
    \item otherwise, it either returns a curve $\pi\in \Pi$ s.t.~$\d_{F}(\pi,\tau)\leq cr$, or  “no”. 
\end{compactitem}
\end{definition}

\subsection{Previous work}\label{sec:previouswork}
Most previous results on data structures for ANN search of curves, concern the \emph{discrete} Fr\'{e}chet distance. This is a simplification of the distance measure that only takes into account the vertices of the curves.
The first non-trivial ANN-data structure for the discrete Fr\'echet distance from 2002 by Indyk \cite{Indyk-approxnn-02} achieved approximation factor 
$\mathcal{O}((\log m + \log \log n)^{t-1})$, where $m$ is the maximum length of a sequence, and $t>1$ is a trade-off parameter. 
 More recently, in 2017, Driemel and Silvestri~\cite{Driemel-lshc-17} showed that locality-sensitive hashing can be applied and obtained a data structure of near-linear size which achieves approximation factor $\mathcal{O}(k)$, where $k$ is the length of the query sequence. They show how to improve the approximation factor to $\mathcal{O}(d^{3/2})$ at the expense of additional space usage (now exponential in $k$), and  a follow-up result by Emiris and Psarros~\cite{emiris2018} achieves a ($1+\eps$) approximation, at the expense of further increasing space usage. Recently,  
Filtser et al.~\cite{FFK20} showed how to build a $(1+\eps)$-approximate data structure using space in $ n \cdot \mathcal{O}(1/\eps)^{kd}$ and 
with query time in $\mathcal{O}(k d)$. 

These results are relevant in our setting, since the continuous Fr\'echet distance can naively be approximated using the discrete Fr\'echet distance.  
However, to the best of our knowledge, all known such methods introduce a dependency on the arclength of the curves (resp. the maximum length of an edge), either in the complexity bounds or in the approximation factor. It is not at all obvious how to avoid this when approximating the continuous with the discrete Fr\'echet distance. 

For the continuous Fr\'echet distance, a recent result by Mirzanezhad~\cite{M20} can be described as follows.
The main ingredient of this data structure is the discretization of the space of query curves with a grid, achieving an approximation factor of $1+\epsilon$. Alas, the space required for each input curve is high, namely roughly $D^{dk}$, where $D$ is the diameter of the set of vertices of the input, $d$ is the dimension of the input space and $k$ is the complexity of the query.  

Interestingly, there are some data structures for the related problem of {range searching}, which are especially tailored to the case of the continuous Fr\'echet distance and which do not have a dependency on the arclength. 
The subset of input curves, that lie within the search radius of the query curve is called the {range} of the query.  
A range query should return all input curves inside the range, or a statistic thereof. 
Driemel and Afshani \cite{AD18} consider the exact range searching problem for polygonal curves under the Fr\'echet distance. For $n$ curves of complexity $m$ in $\RR^2$, their data structure uses space in $\OO\left(n (\log \log n)^{\OO(m^2)}\right)$ and the query time costs 
$\OO\left(\sqrt{n} \log^{\OO(m^2)} n\right)$, assuming that the complexity of the query curves is at most $\log^{\OO(1)} n$. They also show lower bounds in the pointer model of computation that match the number of log factors used in the upper bounds asymptotically. The new lower bounds that we show in this paper also hold for the case of range searching (more specificially, range emptiness queries), but we assume a different computational model, namely the cell-probe model. While the lower bound of Afshani and Driemel only holds in the case of exact range reporting and uses curves in the plane, our new lower bound also holds in the case of approximation and is meaningful from $d \geq 1$.

\subsection{Known techniques}

Our techniques are based on a number of different techniques that were previously used only for the discrete Fr\'echet distance. 
In this section we give an overview of these techniques and highlight the main challenges that distinguish the discrete Fr\'echet distance from the continuous Fr\'echet distance.

The locality-sensitive hashing scheme proposed by Driemel and Silvestri \cite{Driemel-lshc-17} achieves linear space and query time in $\OO(k)$, with an approximation factor of $\OO(k)$ for the \emph{discrete} Fr\'echet distance. The data structure is based on snapping vertices to a randomly shifted grid and then removing consecutive duplicates in the sequence of grid points produced by snapping. Any two near curves produce the same sequence of grid points with constant probability while any two curves, which are sufficiently far away from each other, produce two non-equal sequences of grid points with certainty. The main argument used in the analysis of this scheme involves the optimal discrete matching of the vertices of the two curves. This analysis is not directly applicable to the continuous Fr\'echet distance as the optimal matching is not always realized at the vertices of the curves.  

There are several ANN data structures with fast query time and small approximation factor which store a set of representative  query candidates together with precomputed answers for these queries so that a query can be answered approximately with a lookup table. One example of this is the $(1+\eps)$-ANN data structure for the $\ell_{p}$ norms \cite{HIM12}, which employs a grid and stores all those grid points which are near to some data point, and a pointer to the data point that they represent. The side-length of the grid controls the approximation factor and using hashing for storing precomputed solutions leads to an efficient query time. A similar approach was used by Filtser et~al.~\cite{FFK20} for the $(1+\eps)$-ANN problem under the \emph{discrete} Fr\'echet distance. The algorithm discretizes the 
query space with a canonical grid and stores representative point sequences on this grid. 

There are several challenges when trying to apply the same approach to the ANN problem under the continuous Fr\'echet distance. 
Computing good representatives in this case is more intricate: two curves may be near but some of their vertices may be far from any other vertex on the other curve. Hence, picking representative curves which are defined by vertices in the proximity of the vertices of the data curve is not sufficient. In case the input consists of curves with bounded arclength only, 
one can enumerate all curves which are defined by grid points and lie within a given Fr\'echet distance. However, this results in a large dependency on the arclength.

The question, whether efficient ANN data structures for the continuous Fr\'echet distance which do not have a dependency on the arclength of the input curves are possible, is an intriguing question, which we attempt to answer in this paper.

\subsection{Preliminaries}\label{sec:prelims}

For any $x\in \RR$, $|x|$ denotes the absolute value of $x$. For any positive integer $n$, $[n]$ denotes the set $\{1,\ldots,n\}$. 
Throughout this paper, a \emph{curve} is a continuous function $[0,1]\mapsto \RR$ and we may refer to such a curve as a \emph{time series}. We can define a curve $\pi$ as $\pi:=\langle x_1,\ldots,x_m\rangle$, which means that $\pi$ is obtained by linearly interpolating $x_1,\ldots,x_m$. The \emph{vertices} of $\pi: [0,1]\mapsto \RR$ are those points  which are local extrema in $\pi$. 
For any curve $\pi$, $\VVV(\pi)$ denotes the sequence of vertices of $\pi$. 
The number of vertices $|\VVV(\pi)|$ is called the \emph{complexity} of $\pi$ and it is also denoted by $|\pi|$.   For any two points $x$, $y$, $\overline{xy}$ denotes the directed line segment connecting $x$ with $y$ in the direction from $x$ to $y$. The segment defined by two consecutive vertices is called an \emph{edge}. 
For any two $0\leq p_a<p_b\leq1$ and any curve $\pi$, we denote by $\pi[p_a,p_b]$ the subcurve of $\pi$ starting at $\pi(p_a)$ and ending at $\pi(p_b)$.  
For any two curves $\pi_1$, $\pi_2$, with vertices $x_1,\ldots,x_k$ and $x_k,\ldots,x_m$ respectively, 
$\pi_1 \oplus \pi_2 $ denotes the curve $\langle x_1,\ldots,x_k, \ldots x_m \rangle  $, that is the concatenation of $\pi_1$ and $\pi_2$. 
We define the \emph{arclength} $\lambda(\pi)$ of a curve $\pi$ as the total sum of lengths of the edges of $\pi$.
We refer to a pair of continuous, non-decreasing  functions $f:[0,1]\mapsto [0,1]$, $g: [0,1]\mapsto [0,1]$ such that $f(0)=g(0)$, $f(1)=g(1)$, as a \emph{matching}. If a matching $\phi=(f,g)$ of two curves $\pi$, $\tau$  satisfies $\max_{\alpha\in[0,1]}\|\pi(f(\alpha))-\tau(g(\alpha)) \|\leq \delta$, then we say that $\phi$ is a $\delta$-matching of $\pi$ and $\tau$. Given two curves $\pi:[0,1]\rightarrow \RR$, $\tau:[0,1]\rightarrow \RR$. The $\delta$-free space is the subset of the parametric space $[0,1]^2$ defined as 
$\{ (x,y) \in [0,1]^2 \mid | \pi(x) - \tau(y)| \leq \delta \}$.

Our data structures make use of a \emph{dictionary} data structure. A dictionary stores a set of (key, value) pairs and when presented with a key, returns the associated value. Assume we have to store $n$ (key,value) pairs, where the keys come from a universe $U^k$. Perfect hashing provides us with a dictionary using $O(n)$ space and $O(k)$ query time which can be constructed in  $O(n)$ expected time~\cite{FKS84}. During look-up, we compute the hash function in $\OO(k)$ time, we access the corresponding bucket in the hashtable in $\OO(1)$ time and check if the key stored there is equal to the query in $\OO(k)$ time.

All of our data structures operate in the \emph{real-RAM model}, enhanced with floor function operations in constant time. See also Appendix~\ref{sectionappendix:compmodels} for a more detailed discussion on the computational models. Our lower bounds are for the cell-probe model.
The cell-probe model of computation counts the number of memory accesses (cell probes) to the data structure which are performed by a query. Given a universe of data and a universe of queries, a cell-probe data structure with performance parameters $s$, $t$, $w$, is a structure which consists of $s$ memory cells, each able to store $w$ bits, and any query can be answered by accessing $t$ memory cells.
Our lower bound concerns approximate distance oracles. A Fr\'echet distance oracle is a data structure which, given one input curve $\pi$, a distance threshold $r$, and an approximation factor $c>0$, it reports for any query curve $\tau$ as follows: \begin{enumerate*}[label=(\roman*)]
\item if $\d_F(\pi,\tau)\leq r$ then the answer is “yes”,
\item if $\d_F(\pi,\tau)>c r$ then the answer is “no”,
\item otherwise the answer can be either “yes” or “no”.
\end{enumerate*}

The standard algorithm by Alt and Godau \cite{altgodau-95} for computing the Fr\'echet distance between two curves $\pi,\tau$, finds a matching in the parametric space of the two curves, where a matching is realized by a monotone path which starts at $(0,0)$ and ends at $(1,1)$. If such a path is entirely contained in the $\delta$-free space, then $\d_F(\pi,\tau)\leq\delta$. 
The Fr\'echet distance is known to satisfy the triangle inequality. We use the following two observations repeatedly in the paper.
\begin{enumerate}
    \item[(i)] For any curves $\tau_1,\tau_2$, $\pi_1,\pi_2$, 
which satisfy the property that the last vertex of $\tau_1$ is the first vertex of $\tau_2$ and the last vertex of $\pi_1$ is the first vertex of $\pi_2$, 
it holds that 
$\d_F(\tau_1\oplus \tau_2, \pi_1 \oplus \pi_2) \leq \max\{\d_F(\tau_1,\tau_2),\d_F(\pi_1,\pi_2) \}$. 
\item[(ii)] For any two edges $\overline{a_1a_2}$, $\overline{b_1b_2}$, it holds that  $\d_F(\overline{a_1a_2},\overline{b_1b_2})=\max\{|a_1-b_1|,|a_2-b_2|\}$. 
\end{enumerate}
 These two facts imply that if $\pi_1=\langle x_1,\ldots,x_m \rangle$ and $\pi_2=\langle y_1,\ldots,y_m \rangle$ such that for each $i=1,\ldots,m$, $|x_i-y_i|\leq \eps$ then $\d_F(\pi_1,\pi_2)\leq \eps$. This is a key property that we exploit when we snap vertices of a curve to a grid, since it allows us to bound the distance between the original curve, and the curve defined by the sequence of snapped  vertices. 

We end this section with the standard definition of the discrete Fr\'echet distance. For any positive integer $m$, $\left(\RR^{d}\right)^m$ denotes the space of sequences of $m$ real  vectors of dimension $d$. 

 \begin{definition}[Traversal]
 Given $P=p_1, \ldots, p_{m}\in \left(\RR^d\right)^m$ and $Q=q_1, \ldots, q_{k}\in \left(\RR^d\right)^k$, a traversal $T=(i_1,j_1),\ldots,(i_t,j_t)$ of $P$ and $Q$ is a sequence of pairs of indices referring to a pairing of points from the two sequences such that:
\begin{compactenum}[(i)]
 \item $i_1,j_1=1$, $i_t=m$, $j_t=k$. 
 \item $\forall (i_u, j_u)\in T:$ $i_{u+1}-i_u \in \{0,1\}$ and $j_{u+1}-j_u \in \{0,1\}$.
 \item $\forall (i_u, j_u)\in T:$ $(i_{u+1}-i_u)+(j_{u+1}-j_u)\geq1$.
\end{compactenum} 
 For any traversal $T$, we define $\d_T(P,Q):= \max_{(i,j)\in T} \|p_i -q_j\|_2$.
 \end{definition} 

\begin{definition}[Discrete Fr\'{e}chet distance]\label{Ddist}
 Given $P=p_1,\ldots,p_m \in \left(\RR^d\right)^m$ and $Q=q_1,\ldots,q_k\in \left(\RR^d\right)^k$, we define the discrete Fr\'{e}chet distance between $P$ and $Q$ as follows:
 \[
 \d_{dF}(P,Q)= \min_{T\in\mathcal{T}}  \max_{(i_u,j_u)\in T} \|p_{i_u}-q_{j_u}\|_2  ,
 \]
 where $\mathcal{T}$ denotes the set of all possible traversals for $P,Q$. Thus, $\d_{dF}(P,Q)=\min_{T\in \mathcal{T}} \d_T(P,Q)$. 
 \end{definition}

\subsection{Our contributions}
We study the $c$-ANN problem for time series under the continuous Fr\'echet distance. 
Our first result is data structure that achieves approximation factor $5+\eps$ for any $\eps>0$. The data structure is described in Section~\ref{section:firstds} and leads to the following theorem.

\begin{restatable}{theorem}{thmfirstANNFD}\label{theorem:firstconstantapprx}
Let $\eps\in(0,1]$. There is a data structure for the $(5+\eps)$-ANN problem, 
which stores $n$ time series of complexity $m$ and supports query time series of complexity $k$, which uses space in $n\cdot \OO\left(\frac{1}{\eps}\right)^k+\OO(nm)$, needs $\OO\left(n m \right)\cdot \OO\left(\frac{1}{\eps}\right)^k$ expected  preprocessing time and answers a query in $\OO(k)$ time. 
\end{restatable}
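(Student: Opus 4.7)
The strategy is to use the signature technique advertised in the introduction. For each input curve $\pi\in\Pi$, compute a \emph{signature} $s(\pi)$ at scale $\Theta(r)$, i.e., a subsequence of the vertices of $\pi$ with two properties: (i) $\d_F(\pi, s(\pi))\leq c_1 r$ for a small constant $c_1$, and (ii) every curve $\tau$ with $\d_F(\pi,\tau)\leq r$ satisfies $|\VVV(\tau)|\geq|\VVV(s(\pi))|$. Property~(ii) implies that if $\pi$ has a $k$-vertex query within distance $r$, then $|\VVV(s(\pi))|\leq k$; input curves violating this can safely be discarded for queries of complexity $k$.

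\textbf{Data structure.} Fix a canonical grid on $\RR$ of cell size $\Theta(\eps r)$. For each $\pi$ with $|\VVV(s(\pi))|\leq k$, enumerate the set $S(\pi)$ of all sequences of at most $k$ grid values whose induced piecewise-linear curve $\sigma$ satisfies $\d_F(s(\pi),\sigma)\leq \beta r$ for a constant $\beta$ depending on $c_1$. Because $s(\pi)$ is a sequence of $k'\leq k$ alternating 1D extrema, any Fr\'echet matching to $\sigma$ forces each vertex of $\sigma$ to lie within $\beta r$ of the corresponding monotone piece of $s(\pi)$, giving only $\OO(\beta/\eps)=\OO(1/\eps)$ admissible grid values per vertex; a case analysis over how the $\leq k$ vertices of $\sigma$ are distributed among the monotone pieces of $s(\pi)$ yields $|S(\pi)|=\OO(1/\eps)^k$. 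Store every pair $(\sigma,\pi)$ with $\sigma\in S(\pi)$ in a perfect-hashing dictionary; keys are grid sequences of length $\leq k$ so each lookup costs $\OO(k)$, the extra $\OO(nm)$ space stores the original curves, and the preprocessing cost is dominated by the enumeration.

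\textbf{Query and correctness.} Given $\tau$ of complexity $k$, compute $s(\tau)$ by a single scan over its vertices in $\OO(k)$ time, snap each vertex of $s(\tau)$ to its grid value to form $\tilde\tau$, and look $\tilde\tau$ up in the dictionary. If some $\pi\in\Pi$ satisfies $\d_F(\pi,\tau)\leq r$, then by the triangle inequality, the signature property~(i), and the fact that snapping on an $\eps r$-grid moves a curve by at most $\eps r$ in Fr\'echet distance,
\[
\d_F(s(\pi),\tilde\tau)\leq \d_F(s(\pi),\pi)+\d_F(\pi,\tau)+\d_F(\tau,s(\tau))+\d_F(s(\tau),\tilde\tau)\leq (2c_1+1+\eps)r=\beta r,
\]
so $\tilde\tau\in S(\pi)$ and the lookup returns some input curve $\pi'$. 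Since $\tilde\tau\in S(\pi')$, the same chain of inequalities with $\pi'$ in place of $\pi$ yields
\[
\d_F(\pi',\tau)\leq \d_F(\pi',s(\pi'))+\d_F(s(\pi'),\tilde\tau)+\d_F(\tilde\tau,s(\tau))+\d_F(s(\tau),\tau)\leq (2c_1+\beta+\eps)r=(4c_1+1+\OO(\eps))r,
\]
which becomes $(5+\OO(\eps))r$ when $c_1=1$, matching the claimed approximation after a rescaling of $\eps$.

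\textbf{Main obstacle.} The most delicate part is establishing signature property~(ii) for the \emph{continuous} Fr\'echet distance on 1D time series with a constant $c_1$ close to $1$, since any larger constant would push the approximation factor beyond $5+\eps$; this is expected to require an inductive argument exploiting the alternating-extrema structure of 1D signatures and a careful analysis of how a low-complexity $\tau$ can simultaneously follow all oscillations of $\pi$ that survive at scale $r$. A secondary obstacle is the combinatorial bound $|S(\pi)|=\OO(1/\eps)^k$: even when $\sigma$ has strictly more vertices than $s(\pi)$ (up to $k$), one must argue that distributing the extra vertices among the monotone pieces of $s(\pi)$ inflates the count by only a constant factor per added vertex, so the total remains $\OO(1/\eps)^k$ and the per-curve space bound is preserved.
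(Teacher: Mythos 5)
Your overall blueprint (signature of the input, grid snapping, dictionary of precomputed keys, signature-plus-snap at query time) is the same as the paper's, and your correctness chain of triangle inequalities is essentially the one used there. But there is a genuine gap in the space bound, and it sits exactly where the paper has to work hardest. You define $S(\pi)$ as \emph{all} grid sequences $\sigma$ of length at most $k$ with $\d_F(s(\pi),\sigma)\leq \beta r$, and you claim $|S(\pi)|=\OO(1/\eps)^k$ because each vertex of $\sigma$ "lies within $\beta r$ of the corresponding monotone piece of $s(\pi)$, giving only $\OO(1/\eps)$ admissible grid values per vertex." That last step is false: a monotone piece of $s(\pi)$ is an edge, whose length can be arbitrarily large, so the set of grid values within $\beta r$ of it has size $\Theta(\lambda/(\eps r))$ where $\lambda$ is the edge length, not $\OO(1/\eps)$. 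Concretely, if $s(\pi)$ is the single segment $\overline{0\,(100r)}$, then $\langle 0,\, x,\, x-\eps r,\, 100r\rangle$ is within Fréchet distance $\OO(\eps r)\leq\beta r$ of it for \emph{every} grid value $x$ along the edge, so $|S(\pi)|$ grows with the arclength of $\pi$ raised to the number of spare vertices. This reintroduces precisely the arclength dependence the theorem is supposed to avoid.

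The missing ingredient is a vertex-localization property, not your property (ii) (which only lower-bounds the complexity of near curves and is indeed an easy consequence). The paper's Lemma~\ref{lemma:signatures2} states that if $\sigma$ is a $\delta$-signature and $\d_F(\cdot,\cdot)\leq\delta$, then the other curve has a vertex in each range $[v_i-\delta,v_i+\delta]$ around the signature \emph{vertices}, in order. The query algorithm therefore computes a $2r$-signature of $\tau$ (note the scale $2r$, not $r$: one needs $\d_F(\tau,\sigma_\pi)\leq 2r$ for the lemma to apply after one triangle inequality), and the lemma then forces every vertex of the lookup key into an interval of length $\OO(r)$ centered at a vertex of the input's $r$-signature. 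Only the candidates of this restricted form are enumerated (and then filtered by $\d_F(\cdot,\pi)\leq 3r$), which is what yields $\OO(1/\eps)$ grid choices per range and hence, via the stars-and-bars count you also use, $\OO(1/\eps)^k$ keys per input curve. Without this localization your enumeration is not of the claimed size, and if you shrink $S(\pi)$ to make it enumerable you lose the guarantee that $\tilde\tau\in S(\pi)$ for every near query.
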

To achieve this result, we generate a discrete approximation  of the set of all possible non-empty queries. To this end, we employ the concept of signatures, previously introduced in~\cite{DKS16}. The signature of a time series provides us with a selection of the local extrema of the function graph, which we use to approximate the set of queries. 

We extend these ideas to improve the approximation factor to $(2+\eps)$, albeit with an increase in space and query time. In particular, we generate all curves with vertices that lie in the vicinity of the vertices of the input curves. We combine this with a careful analysis of the involved matchings and a more elaborate query algorithm. 
The resulting data structure can be found in Section \ref{section:secondds} and leads to the following theorem.

\begin{restatable}{theorem}{thmANNFD}\label{theorem:secondconstantapprx}
Let $\eps\in(0,1]$. There is a data structure for the $(2+\eps)$-ANN problem, 
which stores $n$ time series of complexity $m$ and supports query time series of complexity $k$, which uses space in $n\cdot \OO\left(\frac{m}{k \eps}\right)^{k}$, needs $\OO(nm)\cdot \OO\left(\frac{m}{k \eps}\right)^{k}$ expected preprocessing time and answers a query in $\OO(k\cdot 2^{k})$ time. 
\end{restatable}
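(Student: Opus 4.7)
The plan is to build, separately for each input curve $\pi\in\Pi$, a finite family $\mathcal{K}(\pi)$ of ``witness'' curves of complexity $k$ such that (a) whenever $\d_F(\pi,\tau)\le r$, some $\sigma\in\mathcal{K}(\pi)$ has $\d_F(\tau,\sigma)$ small, and (b) every $\sigma\in\mathcal{K}(\pi)$ is itself Fréchet-close to $\pi$. Each $\sigma\in\mathcal{K}(\pi)$ is stored as a key in a perfect-hashing dictionary that points back to $\pi$. The witnesses are obtained by (i) choosing an ordered subsequence of $k$ vertices from $\VVV(\pi)$, of which there are at most $\binom{m}{k}\le(em/k)^k$ choices, and (ii) perturbing each chosen vertex independently to one of $\OO(1/\eps)$ grid points in an interval of length $\Theta(r)$ around it, on a canonical grid of spacing $\Theta(\eps r/k)$. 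This gives $|\mathcal{K}(\pi)|=\OO(m/(k\eps))^k$ keys per input, hence $N=n\cdot\OO(m/(k\eps))^k$ keys in total, matching the stated space and expected preprocessing bounds via the dictionary from the Preliminaries.

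The correctness target is: if $\d_F(\pi,\tau)\le r$, then some $\sigma\in\mathcal{K}(\pi)$ satisfies $\d_F(\pi,\sigma)\le(1+\eps/2)r$ and $\d_F(\tau,\sigma)\le(1+\eps/2)r$, so the triangle inequality delivers a $(2+\eps)r$ bound and thus the guarantee required by Definition~\ref{Dgenann}. To exhibit such a $\sigma$, I would fix an optimal matching $(f,g)$ realizing $\d_F(\pi,\tau)$. For each vertex $v_j$ of $\tau$, this matching pairs $v_j$ with a point of $\pi$ within distance $r$; using a signature-style selection as in Theorem~\ref{theorem:firstconstantapprx}, I would pick an actual vertex $u_{i_j}\in\VVV(\pi)$ of the corresponding sub-curve whose value lies within $r$ of $v_j$, and round $u_{i_j}$ to the grid point $\sigma_j$ nearest $v_j$ so that $|\sigma_j-v_j|\le\eps r/k$. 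Observation (ii) of the Preliminaries then yields $\d_F(\tau,\sigma)\le\eps r/k$. For the bound on $\d_F(\pi,\sigma)$, the matching forces the portion of $\pi$ paired with the edge $\overline{v_jv_{j+1}}$ to lie in a width-$r$ corridor around that edge, hence in a corridor of width $r+\eps r/k$ around $\overline{\sigma_j\sigma_{j+1}}$, and concatenation via observation (i) gives the desired $(1+\eps/2)r$ Fréchet bound.

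The query procedure, given $\tau$ of complexity $k$, enumerates all $2^k$ ways of rounding each of the $k$ vertices of $\tau$ independently to one of the two nearest grid points on the canonical grid and performs one dictionary lookup for each, returning the first hit; each lookup costs $\OO(k)$, yielding the claimed $\OO(k\cdot 2^k)$ query time. At least one of these $2^k$ rounded sequences coincides with the witness $\sigma$ guaranteed above, since the perturbation in step (ii) only moves each selected vertex to one of the two grid points flanking the true value of the corresponding $v_j$.

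The main obstacle, on which the proof really hinges, is the bound $\d_F(\pi,\sigma)\le(1+\eps/2)r$: because $\sigma$ has complexity $k$ while $\pi$ has complexity $m\gg k$, a naive choice of $k$ vertices could skip arbitrarily large intermediate extrema of $\pi$ and no such bound could hold. This is precisely where signatures enter, and where the analysis must improve over Theorem~\ref{theorem:firstconstantapprx}. The key subclaim is that on any sub-curve of $\pi$ matched by $(f,g)$ to an edge $\overline{v_jv_{j+1}}$ of $\tau$, every intermediate extremum lies within a corridor only slightly wider than $r$ about $\overline{\sigma_j\sigma_{j+1}}$. Proving this while simultaneously enforcing the grid spacing $\Theta(\eps r/k)$ and keeping the number of perturbations in step (ii) down to $\OO(1/\eps)$ is the most delicate calculation, as the additive error $\eps r/k$ must compound over up to $k$ consecutive edges without pushing the total past $(1+\eps/2)r$.
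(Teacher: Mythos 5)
Your candidate-generation step (ordered subsequences of ranges around $\VVV(\pi)$, discretized by a grid) matches the paper's, but the query algorithm and the correctness argument have a genuine gap. Your query enumerates $2^k$ rounding directions of the $k$ vertices of $\tau$, so every key you look up contains a rounded copy of \emph{every} vertex of $\tau$. But a vertex of $\tau$ can be far from every vertex of $\pi$ even when $\d_F(\pi,\tau)\le r$ (e.g.\ a small zig-zag in the middle of a single long edge of $\pi$, as in $\pi=\langle 0,100\rangle$, $\tau=\langle 0,50,49.5,100\rangle$): then no stored key — all of whose vertices lie within $\Theta(r)$ of vertices of $\pi$ — can coincide with any of your $2^k$ lookups, and your witness construction (``pick an actual vertex $u_{i_j}\in\VVV(\pi)$ within $r$ of $v_j$'') has nothing to pick. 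This is exactly the obstruction the paper flags, and its $2^k$ enumeration is spent differently: it enumerates all subsets of \emph{non-signature} vertices of $\tau$ to \emph{delete}, then snaps the surviving vertices once. Correctness then needs two non-trivial facts you have no counterpart for: deleting non-signature vertices of $\tau$ changes it by at most $\delta$ in Fr\'echet distance (Theorem~\ref{lemma:deleteanynonsignaturevertex}, proved via the tight-matching Lemmas~\ref{lemma:lefttightmatching1} and~\ref{lemma:lefttightmatching}), and deleting vertices of $\tau$ outside the ranges around $\VVV(\pi)$ preserves $\d_F(\pi,\cdot)\le 1$ directly (Theorem~\ref{theorem:deletionofvertices}), avoiding a second application of the triangle inequality through $\tau$ that would push the factor to $3$.

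Two further quantitative problems. First, your grid spacing $\Theta(\eps r/k)$ over intervals of length $\Theta(r)$ gives $\Theta(k/\eps)$ grid points per vertex, not $\OO(1/\eps)$, so your own space bound becomes $\OO(m/\eps)^k$ rather than $\OO(m/(k\eps))^k$; this is not a ``delicate calculation'' to be finessed but an arithmetic contradiction. Second, the compounding you worry about does not occur: by observation (ii)/(i) in the Preliminaries, perturbing each vertex by at most $\eta$ changes the Fr\'echet distance by at most $\eta$ (a maximum, not a sum over the $k$ edges), so spacing $\Theta(\eps r)$ suffices — this is exactly what the paper uses ($w=\eps/2$ with $r=1$) and what yields the stated $\OO(m/(k\eps))^k$ bound via $\binom{m+k-1}{k}\cdot\OO(1/\eps)^k$.
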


Our third result is a data structure that uses space in $\OO(n\log n+nm)$  and has query time in $\OO(k\log n)$. This improvement in the space complexity comes with a sacrifice in the approximation factor achieved by the data structure, which is now in $\OO(k)$.

\begin{restatable}{theorem}{thmANNHA}
\label{theorem:largeapproximation}
There is a data structure for the $(24k+1)$-ANN problem, 
which stores $n$ time series of complexity $m$ and supports queries with time series of complexity $k$,  uses  space in $\OO(n\log n + nm)$, needs $\OO(nm \log n)$ expected preprocessing time and answers a query in $\OO(k \log n)$ time. For a fixed query, the preprocessing succeeds with probability at least $1-1/\poly(n)$. 
\end{restatable}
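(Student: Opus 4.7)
The plan is to adapt the vertex-snapping locality-sensitive hashing scheme of Driemel and Silvestri~\cite{Driemel-lshc-17} from the discrete to the continuous Fr\'echet setting, by first replacing each input curve with its \emph{signature} at an appropriate scale $\Theta(r)$. A crucial fact from the signature machinery of~\cite{DKS16} is that if the signature of an input curve $\pi$ at this scale already has complexity exceeding $k+1$, then no curve of complexity $k$ can lie within Fr\'echet distance $r$ of $\pi$; hence only input curves whose signature has at most $k+1$ vertices need to participate in the hash.

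First, I would define the hash function as a randomly shifted grid snap: draw a shift $s$ uniformly from $[0,w)$ with $w = \Theta(kr)$, snap every vertex of the signature (or, for a query, of the query itself) to the nearest grid point of $\{s + iw : i \in \ZZ\}$, and then delete consecutive duplicates. The resulting key is an integer sequence of length at most $k+1$, storable in a perfect-hash dictionary~\cite{FKS84} using $\OO(n)$ space and supporting $\OO(k)$-time lookup. Each input curve is inserted under its key.

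Second, I would establish the two sides of the LSH guarantee. For the near direction, any pair of corresponding signature vertices of the input and the query lies within some bounded distance $\OO(r)$ apart by the stability of signatures under small Fr\'echet perturbations, and in one dimension a pair of values at distance at most $cr$ snaps to the same cell of a grid of width $w = \Theta(kr)$ with probability at least $1 - \OO(1/k)$. A union bound over the at most $k+1$ signature vertices then gives constant collision probability for near curves. For the far direction, I would argue that if $\d_F(\pi,\tau) > (24k+1)r$, then signature stability together with the triangle inequality forces some pair of corresponding signature vertices to differ by more than $w$, making the snapped sequences certifiably distinct regardless of the shift. The specific constant $24$ emerges from combining the signature-to-curve distance factor, the grid width $\Theta(kr)$, and the matching slack for two nearby signatures.

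Finally, I would amplify the constant per-copy success probability to $1-1/\poly(n)$ by running $\Theta(\log n)$ independent copies of the hashing with fresh random shifts, yielding total space $\OO(n\log n + nm)$ (the $nm$ term keeps the raw curves available for retrieval) and query time $\OO(k\log n)$ to hash and look up the query in each copy. The main obstacle is the separation side of the analysis: ensuring that the hash keys are \emph{certifiably} distinct for curves at Fr\'echet distance exceeding $(24k+1)r$, so that the query can return any curve found in a matching bucket without an expensive Fr\'echet-distance verification step, thereby preserving the $\OO(k\log n)$ query time.
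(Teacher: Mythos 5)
Your overall architecture coincides with the paper's: compute a $1$-signature of each input curve, snap its vertices to a randomly shifted grid of width $w=\Theta(kr)$, use the result as a key in a perfect-hash dictionary, repeat with $\Theta(\log n)$ independent shifts, and at query time snap the raw query vertices and look up. Your treatment of the far direction is also essentially the paper's Lemma~\ref{lemma:randomgrids2}: identical keys certify $\d_F(\pi,\tau)\le 2w+\delta=24k+1$ by the triangle inequality (two curves with the same key are identical up to reparametrization), so no distance verification is needed at query time. However, the near direction of your argument has two concrete gaps.

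First, the key cannot be obtained by merely deleting consecutive duplicates from the snapped sequence; one must retain only the local extrema (the operation $\VVV(\cdot)$ in the paper). A signature edge of $\pi$ can be much longer than $w$, and the query may have several vertices at intermediate heights along it; after snapping, these form a strictly monotone run of \emph{distinct} grid values, which duplicate removal preserves but which never appears in the snapped signature of $\pi$. With your key, near curves would fail to collide deterministically, not merely with small probability.

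Second, and more importantly, your collision analysis union-bounds only over the at most $k+1$ pairs of corresponding signature vertices. But the query key is computed from all $k$ query vertices, most of which correspond to no signature vertex of $\pi$, and these can create spurious local extrema in the key unless every edge of the query running backwards relative to the enclosing signature edge of $\pi$ collapses into a single grid cell. The paper's Lemmas~\ref{claim:weakdirectionpreserving} and~\ref{claim:weakinclusion} show that each such backward edge has length at most $4\delta$, hence collapses with probability $1-4\delta/w$; Lemma~\ref{lemma:randomgrids1} then union-bounds over these $\le k$ edge events together with the $\le \ell$ signature-range events to get failure probability at most $6k\delta/w\le 1/2$ for $w=12k$, and separately argues that, conditioned on this event, the snapped query sequence is monotone between consecutive signature-corresponding vertices so that taking local extrema reproduces the signature key exactly. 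This bound on the backward edges and the ensuing monotonicity argument are the main technical content of the proof and are entirely absent from your proposal; without them the claimed constant per-copy collision probability for near curves does not follow.
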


To achieve this result, we combine the notion of signatures with the ideas of the locality-sensitive scheme that was previously used \cite{Driemel-lshc-17} for the discrete Fr\'echet distance. In the discrete case, it is sufficient to snap the vertices of the curves to a grid of well-chosen resolution and to remove repetitions of grid points along the curve to obtain a hash index with good probability. In the continuous case, we first compute a signature, which filters the salient points of the curve, and only then apply the grid snapping to this signature to obtain the hash index. The resulting data structure is surprisingly simple. The description of the data structure can be found in Section~\ref{sec:lsh}.

Finally, we give a lower bound in the cell-probe model of computation, which seems to indicate that for data structures that achieve approximation factor better than $2$ and that use a constant number of probes per query, a dependency on the arc-length of the curve is necessary.

\begin{restatable}{theorem}{thmDOLB}\label{thm:lowerbound}
Consider any Fr\'echet distance oracle with approximation factor ${2}-\gamma$, for any $\gamma\in(0,1]$,   distance threshold $r=1$,
in the cell-probe model, which supports time series  as follows: it stores any  polygonal curve in $\RR$ of arclength  at most $L$, for $L\geq 6$, it  supports  queries of arclength up to $L$ and complexity $k$, where $k\leq L/6$, 
and it achieves performance parameters $t$, $w$, $s$. There exist 
\begin{align*}
w_0=\Omega \left(\frac{{L}^{1-\eps}}{t} \right),& &     s_0&= 2^{\Omega \left(\frac{ k \log(L/k)}{t} \right)}
\end{align*}
such that if $w<w_0$
then 
$s\geq s_0$, for any constant $\eps>0$.
\end{restatable}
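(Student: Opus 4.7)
The plan is to reduce the cell-probe lower bound to the deterministic asymmetric communication complexity of equality, instantiated on a carefully designed hard family of 1D zigzag curves. I would construct $\mathcal{F} = \{\pi_a : a \in [N]^{k'}\}$ with $k' = \Theta(k)$ and $N = \Theta(L/k)$, where $\pi_a$ concatenates $k'$ spikes and the $i$-th spike rises from $0$ up to height $3 a_i$ and returns to $0$. Each $\pi_a$ has complexity at most $k$ and arclength at most $L$, and $|\mathcal{F}| = N^{k'} = 2^{\Omega(k \log(L/k))}$. I would take the query set to be $\mathcal{Q} = \mathcal{F}$ itself.

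Next I would verify the following separation: $\d_F(\pi_a, \pi_b) = 0$ when $a = b$, while $\d_F(\pi_a, \pi_b) \geq 3 > 2 - \gamma$ when $a \neq b$. The first part is immediate. For the second, the aligned matching between corresponding spikes already costs $3 \max_i |a_i - b_i| \geq 3$ via observations~(i) and~(ii) from the preliminaries. For any other monotone matching, by monotonicity some point of one curve at a peak of height at least $3$ must be paired with a point of the other curve at height $0$ in a valley, again costing at least $3$. Consequently, the $(2-\gamma)$-approximate oracle is forced to return different answers on query $\pi_b$ depending on whether the stored curve is $\pi_a$ or any $\pi_{a'}$ with $a' \neq a$.

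The cell-probe lower bound then follows from the standard reduction to asymmetric communication. A data structure with parameters $(s, w, t)$ yields a deterministic two-party protocol computing $\mathbf{1}[a = b]$: Alice holds $\pi_a$ together with the preprocessed structure $D(\pi_a)$, Bob holds $\pi_b$ and runs the query algorithm, sending $\lceil \log s \rceil$ bits per probe (cell address) and receiving $w$ bits back (contents). The total communication is at most $t(\log s + w)$. Since the communication matrix of equality on $\mathcal{F} \times \mathcal{F}$ is the identity of size $|\mathcal{F}|$, the rank lower bound yields $t(\log s + w) = \Omega(\log |\mathcal{F}|) = \Omega(k \log(L/k))$. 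With $w_0 = c \, L^{1-\eps}/t$ for a suitably small constant $c > 0$, whenever $w < w_0$ we have $tw < c \, L^{1-\eps}$, which is negligible compared to $k \log(L/k)$ for any constant $\eps > 0$ in the meaningful parameter range, so $t \log s = \Omega(k \log(L/k))$ and hence $s \geq 2^{\Omega(k \log(L/k)/t)}$.

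The main obstacle is the Fr\'echet distance analysis in the second step: in one dimension the optimal reparametrization is very flexible, since one curve may be paused while the other traverses a flat section, so ruling out \emph{collapsed} matchings (where several spikes of one curve are sent to a single spike of the other) requires a careful case analysis, likely via the free-space diagram. The choice of integer peak heights spaced by $3$, together with deep valleys of height $0$, provides just enough slack for the resulting Fr\'echet gap to strictly exceed $2$ and thus defeat any approximation factor $2-\gamma$ with $\gamma \in (0, 1]$.
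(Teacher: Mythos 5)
Your curve family and the distance separation are plausible in spirit (the paper uses a different gadget, reducing from lopsided set disjointness rather than equality, but a one-dimensional zigzag family with a $0$ versus $\geq 2$ gap is the right kind of object). The genuine gap is in the communication-complexity step. By reducing to \emph{equality} on a family of size $M=2^{\Theta(k\log(L/k))}$ you can only obtain a bound on the \emph{total} communication, $t(\log s+w)=\Omega(\log M)=\Omega(k\log(L/k))$. To conclude $t\log s=\Omega(k\log(L/k))$ from this you need $tw$ to be at most a constant fraction of $k\log(L/k)$, i.e.\ the argument only works for $w=O(k\log(L/k)/t)$. But the theorem asserts the conclusion for all $w<w_0=\Omega(L^{1-\eps}/t)$, and $L^{1-\eps}$ is \emph{much larger} than $k\log(L/k)$ whenever $k$ is small relative to $L$ (e.g.\ $k=O(1)$ and $L$ large, which is allowed since the only constraint is $k\leq L/6$). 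Your sentence ``$tw<c\,L^{1-\eps}$, which is negligible compared to $k\log(L/k)$'' is false in exactly this range; the inequality points the wrong way. Moreover, this is not fixable within the equality framework: equality on a domain of size $M$ admits a protocol in which the data holder sends only $\log M$ bits in total, so no hard instance built on equality can force the data-holder side (whose messages are the $w$-bit cell contents) to send a number of bits that is polynomial in $L$.

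What the paper does instead is reduce from \emph{lopsided set disjointness} over a universe of size $U=\Theta(L)$ and invoke P\u{a}tra\c{s}cu's asymmetric lower bound (Theorem~\ref{theorem:disjointnesslb}): either Alice (the querier, who sends $t\log s$ bits of cell addresses) sends $\Omega(k\log(U/k))$ bits, or Bob (the table, who sends $tw$ bits of cell contents) sends $\Omega\bigl(k(U/k)^{1-C\delta}\bigr)$ bits --- a quantity polynomial in $L$. It is precisely this asymmetric, polynomially large bound on Bob's side that yields $w_0=\Omega(L^{1-\eps}/t)$. To repair your proof you would need to replace the equality instance by a disjointness-style instance in which the stored curve encodes a large subset of a universe of size $\Theta(L)$ via $\Theta(L)$ interleaved constant-size gadgets (so its complexity and arclength are $\Theta(L)$ even though the query has complexity $O(k)$), and then apply the asymmetric trade-off rather than a rank or total-communication bound. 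As a secondary point, the roles in your simulation should be aligned so that the querier plays the small (set of size $k$) side of the disjointness instance, matching the asymmetry of the communication lower bound.
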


To achieve this result we observe that a technique first introduced by Miltersen \cite{M94} can be applied here. Miltersen shows that lower bounds for communication problems can be translated into lower bounds for cell-probe data structures. In particular, we use a reduction from the lopsided disjointness problem (see Section~\ref{sec:lowerbound_reduction}). 

In addition, we extend these lower bound results to the case of the discrete Fr\'echet distance. Here, our reduction is more intricate. We adapt a reduction by Bringmann and Mulzer~\cite{BM16}, which was used for showing lower bounds for computing the Fr\'echet distance.
Our results show that an exponential dependence on $k$  for the space is necessary when the number of probes is constant (such as in \cite{FFK20}).

\subsection{Signatures}
A crucial ingredient to our algorithms is the notion of \emph{signatures} which was first introduced in \cite{DKS16}. We define signatures as follows.

\begin{definition}[$\delta$-signatures]
\label{def:sig}
A curve $\sigma :~ [0, 1] \mapsto \RR$ is a $\delta$-signature of $\tau:~ [0, 1] \mapsto \RR$ if it is a curve   
defined by
a series of values $0 = t_1 < \dots < t_{\ell} = 1$ as the linear
interpolation of $
\tau(t_i)$ in the order of the index $i$, and
satisfies the following properties.
For $1 \leq i \leq \ell- 1$ the following conditions hold:
\begin{enumerate}
    \item[i)](non-degeneracy) if $i \in [2, \ell - 1]$ then $\tau (t_i) \notin \overline{\tau(t_{i-1}), \tau(t_{i+1})}$,
    \item[ii)]  (direction-preserving) if $\tau(t_i) < \tau(t_{i+1})$ for $t < t' \in [t_i, t_{i+1}]$:
$\tau (t) - \tau (t')\leq 2\delta$ and
if $\tau (t_i) > \tau (t_{i+1})$ for $t < t'\in [t_i, t_{i+1}]$:
$\tau (t') - \tau (t) \leq 2\delta$,
\item[iii)] (minimum edge length)
if $i \in [2, \ell- 2]$ then $|\tau (t_{i+1}) - \tau (t_i)| > 2\delta$, and if $i \in \{1, \ell - 1\}$ then $|\tau (t_{i+1}) - \tau (t_i)| > \delta$,
\item[iv)](range) for 
$t \in [t_i, t_{i+1}]:$
if $i \in [2, \ell- 2]$ then $\tau (t) \in \overline{\tau (t_i) \tau (t_{i+1})}$, and
if $i = 1$ and $\ell > 2$ then $\tau (t) \in \overline{\tau (t_i)  \tau(t_{i+1})} \cup
\overline{(\tau (t_i) - \delta) (\tau (t_i) + \delta)}$, and
if $i = \ell - 1$ and $\ell > 2$ then $\tau (t)\in
\overline{\tau (t_{i-1}) \tau (t_i)} \cup \overline{(\tau (t_i) - \delta) (\tau (t_i) + \delta})$, and 
if $i = 1$ and $\ell = 2$ then $\tau (t) \in  \overline{\tau (t_1) \tau (t_2)} \cup
\overline{(\tau (t_1) -\delta) (\tau (t_1) + \delta)} \cup \overline{(\tau (t_2) -\delta )(\tau (t_2) + \delta)}$.
\end{enumerate}
\end{definition}
For any $\delta>0$ and any curve $\pi:~[0,1]\mapsto \RR$ of complexity $m$, a $\delta$-signature of $\pi$ can be computed in $\OO(m)$ time \cite{DKS16}. We now state some basic results about signatures. 

\begin{lemma}[Lemma 3.1 \cite{DKS16}]
\label{lemma:signatures1}
It holds for any $\delta$-signature $\sigma$ of $\tau$:
$d_F (\sigma, \tau ) \leq \delta$.
\end{lemma}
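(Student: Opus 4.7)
The plan is to exhibit an explicit $\delta$-matching between $\sigma$ and $\tau$, from which $d_F(\sigma,\tau)\leq \delta$ follows immediately. Since $\sigma$ interpolates $\tau$ at the signature parameters $t_1<\cdots<t_\ell$, the two curves already agree at these parameters, which suggests matching the restricted pairs $\sigma|_{[t_i,t_{i+1}]}$ and $\tau|_{[t_i,t_{i+1}]}$ independently and concatenating using observation~(i) from the preliminaries.

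On a fixed subinterval $[t_i,t_{i+1}]$, $\sigma$ is a single straight segment from $\tau(t_i)$ to $\tau(t_{i+1})$, while $\tau$ may be an oscillating subcurve. Assume without loss of generality that $\tau(t_i)<\tau(t_{i+1})$. Properties (ii) and (iv) impose tight constraints: $\tau$'s values lie in $\overline{\tau(t_i)\tau(t_{i+1})}$ (or within $\delta$-slack near the endpoints for boundary edges), and any backward motion is bounded by $2\delta$. I would construct the matching by sweeping $g(\alpha)$ monotonically through $[t_i,t_{i+1}]$ and setting $f(\alpha)$ so that $\sigma(f(\alpha))$ sits one full $\delta$ below the running maximum $M(g(\alpha)):=\max_{s\in[t_i,g(\alpha)]}\tau(s)$, clipped from below by $\tau(t_i)$. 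A brief ``tail'' phase pauses $g$ at $t_{i+1}$ while $f$ advances to $t_{i+1}$ to align the endpoints.

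Correctness follows because $\tau(g(\alpha))\in[M(g(\alpha))-2\delta,\,M(g(\alpha))]$ by property (ii), so $\sigma(f(\alpha))$ is always within $\delta$ of $\tau(g(\alpha))$; the running maximum is non-decreasing, so $f$ is non-decreasing as required. The matchings on consecutive subintervals agree at the signature parameters and glue into a single valid monotone matching on $[0,1]$; observation (i) then yields an overall Fr\'echet distance of at most the maximum per-subinterval distance, hence $\leq\delta$.

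The main obstacle I anticipate is obtaining the tight bound $\delta$ rather than $2\delta$: the key insight is to position $\sigma(f(\alpha))$ exactly a distance $\delta$ below $\tau$'s running maximum, so that both peaks (where $\tau$ is near the running max) and troughs (where $\tau$ may dip up to $2\delta$ below it) lie within $\delta$ of $\sigma$'s current value; a naive strategy tracking the running maximum exactly would only give $2\delta$. The boundary edges ($i\in\{1,\ell-1\}$) require a minor modification in the range bookkeeping due to the weaker constraint in property (iv), but the same matching construction applies.
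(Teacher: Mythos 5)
The paper does not prove this statement; it is imported verbatim as Lemma~3.1 of~\cite{DKS16}, so there is no in-paper proof to compare against. Your argument is a correct, self-contained reconstruction of the kind of matching construction that establishes the lemma: decomposing at the signature parameters, building a per-edge matching that tracks the running extremum of $\tau$ offset by exactly $\delta$, and gluing via observation~(i). The offset-by-$\delta$ idea is indeed the crux that turns the $2\delta$ oscillation bound of property~(ii) into a $\delta$-matching.

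One point to tighten: your one-line correctness justification attributes the bound $|\sigma(f(\alpha))-\tau(g(\alpha))|\leq\delta$ entirely to property~(ii) via $\tau(g(\alpha))\in[M(g(\alpha))-2\delta,\,M(g(\alpha))]$. That suffices only in the unclipped phase, where $\sigma(f(\alpha))=M(g(\alpha))-\delta$ is the exact midpoint of that band. In the clipped initial phase you have $\sigma(f(\alpha))=\tau(t_i)$ while $M(g(\alpha))<\tau(t_i)+\delta$, and property~(ii) alone only gives $\tau(g(\alpha))\geq M(g(\alpha))-2\delta\geq\tau(t_i)-2\delta$, i.e.\ a $2\delta$ bound. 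You must there invoke the range property~(iv), which gives $\tau(g(\alpha))\geq\tau(t_i)$ for interior edges (and $\geq\tau(t_i)-\delta$ for the boundary edges), to close the gap to $\delta$. You do state the range constraint in your setup paragraph, so the needed fact is on the table; just make sure the case analysis explicitly uses (iv) in the clipped phase and (ii) in the unclipped phase. With that, and the symmetric treatment of decreasing edges and of the $\pm\delta$ end slack (where minimum edge length guarantees $M-\delta$ stays on the segment), the proof is sound.
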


\begin{lemma}[Lemma 3.2 \cite{DKS16}]\label{lemma:signatures2}
Let $\sigma$  with vertices $v_1, \ldots , v_{\ell} $, be a $\delta$-signature of
$\pi$ with vertices $ u_1, \ldots , u_m $. Let $r_i = [v_i-\delta, v_i+\delta]$, for $1 \leq i \leq \ell$,
be ranges centered at the vertices of $\sigma$ ordered along
$\sigma$. It holds for any time series $\tau$ if $d_F (\pi,\tau) \leq \delta$, then $\tau$ has
a vertex in each range $r_i$, and such that these vertices
appear on $\tau$ in the order of $i$.
\end{lemma}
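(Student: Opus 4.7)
The plan is to use the $\delta$-matching $\phi = (f,g)$ guaranteed by $\d_F(\pi,\tau)\leq \delta$ to first place an ``anchor'' of $\tau$ inside each range $r_i$, and then upgrade each anchor to an actual vertex of $\tau$ in $r_i$. For each $i$, pick $\alpha_i$ with $f(\alpha_i) = t_i$ and set $s_i := g(\alpha_i)$. Monotonicity of $\phi$ gives $s_1 \leq \dots \leq s_\ell$, and the matching gives $|\tau(s_i) - v_i| \leq \delta$, so $\tau(s_i) \in r_i$; but $\tau(s_i)$ need not be a vertex of $\tau$.

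Now fix an interior index $i \in [3,\ell-2]$ and, without loss of generality (by non-degeneracy, signature vertices alternate in direction), assume $v_i$ is a local minimum of $\sigma$. Property~(iii) for interior edges gives $v_{i-1}, v_{i+1} > v_i + 2\delta$, hence $\tau(s_{i-1}), \tau(s_{i+1}) > v_i + \delta$ while $\tau(s_i) \leq v_i + \delta$. Let $s_i'$ be the last time in $[s_{i-1}, s_i]$ at which $\tau = v_i + \delta$ and $s_i''$ the first such time in $[s_i, s_{i+1}]$; by choice of these crossings and continuity, $\tau \leq v_i + \delta$ throughout $[s_i', s_i'']$. Let $s_i^\ast \in [s_i', s_i'']$ be a point minimizing $\tau$ on this subinterval; since $\tau$ is piecewise linear and its value at $s_i^\ast$ is strictly smaller than at the endpoints (modulo a harmless degenerate case $\tau(s_i)=v_i+\delta$), this minimum is realized at a local minimum of $\tau$, i.e.\ a vertex, and $\tau(s_i^\ast) \leq v_i + \delta$. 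For the lower bound, monotonicity of $\phi$ sends $s_i^\ast \in [s_{i-1}, s_{i+1}]$ to some $t^\ast \in [t_{i-1}, t_{i+1}]$; by property~(iv), $\pi(t^\ast) \in \overline{v_{i-1} v_i} \cup \overline{v_i v_{i+1}}$, and in both segments $v_i$ is the lower endpoint, so $\pi(t^\ast) \geq v_i$ and therefore $\tau(s_i^\ast) \geq \pi(t^\ast) - \delta \geq v_i - \delta$. Thus $\tau(s_i^\ast) \in r_i$; the case $v_i$ a local maximum is symmetric.

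The ordering follows by comparing consecutive indices. If $v_i$ is a local min and $v_{i+1}$ a local max, consider the midlevel $m_i = (v_i + v_{i+1})/2$; since $v_{i+1} - v_i > 2\delta$ we have $v_i + \delta < m_i < v_{i+1} - \delta$. Pick $s^\ddagger$ to be any time in $[s_i, s_{i+1}]$ with $\tau(s^\ddagger) = m_i$ (exists by IVT applied to $\tau(s_i) \leq v_i + \delta < m_i < v_{i+1} - \delta \leq \tau(s_{i+1})$). Then $s^\ddagger \geq s_i''$ because $\tau \leq v_i + \delta < m_i$ on $[s_i, s_i'']$, and $s^\ddagger \leq s_{i+1}'$ by the symmetric argument on $[s_{i+1}', s_{i+1}]$. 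Hence $s_i^\ast \leq s_i'' \leq s^\ddagger \leq s_{i+1}' \leq s_{i+1}^\ast$.

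The only delicate step, and the one I expect to be the main obstacle, is the boundary of the index range. For $i \in \{1, \ell\}$ the matching forces $\tau(0)$ and $\tau(1)$ to lie within $\delta$ of $v_1 = \pi(0)$ and $v_\ell = \pi(1)$; these are endpoints (hence vertices) of $\tau$ and sit in $r_1, r_\ell$ in the right order. The indices $i \in \{2, \ell-1\}$ are trickier because property~(iii) only gives $|v_2 - v_1| > \delta$, possibly $\leq 2\delta$, so the anchor $\tau(s_1)$ need not exceed $v_2 + \delta$ and the interior construction of $s_i'$ fails. The remedy is to invoke the boundary clause of property~(iv): for a local-minimum $v_2$ it still forces $\pi \geq v_2$ on $[0, t_3]$ (since $\overline{v_1 v_2} \cup [v_1-\delta, v_1 + \delta] \subseteq [v_2, \infty)$ and $\overline{v_2 v_3} \subseteq [v_2, \infty)$), which propagates through $\phi$ to $\tau \geq v_2 - \delta$ on $[0, s_3]$; the global minimum of $\tau$ on $[0, s_2'']$ is then a vertex of $\tau$ lying in $r_2$ (possibly coinciding with $\tau(0)$, which is harmless since the statement does not require distinctness). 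The interior argument carries the conceptual weight of the proof; this boundary bookkeeping is the one place requiring care.
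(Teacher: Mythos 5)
The paper does not actually prove this lemma: it is quoted verbatim as Lemma~3.2 of \cite{DKS16}, so there is no in-paper argument to compare against. Your reconstruction --- anchoring each $v_i$ on $\tau$ via a $\delta$-matching, then upgrading the anchor to a local extremum of $\tau$ using the minimum-edge-length property (which forces $\tau(s_{i\pm1})>v_i+\delta$ and hence an interior dip) together with the range property (which keeps the dip above $v_i-\delta$), and ordering consecutive extrema via the midlevel crossing --- is sound and is essentially the argument of the cited reference. The only soft spots are the ones you already flag: for $i\in\{2,\ell-1\}$ (and for $\ell\leq 3$) the crossing time $s_i''$ need not exist, and the minimizer over $[0,s_2'']$ could land on the endpoint $s_2''$ without being a vertex; both are repaired by minimizing over the larger window $[0,s_3]$, where the range property still confines the minimum to $r_2$, the strictly larger values of $\tau$ on $[s_3',s_3]$ exclude that subinterval, and hence the ordering with $s_3^{\ast}$ survives.
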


\section{A constant-factor approximation for time  series}
\label{section:firstds}
In this section, we describe the data structure for  Theorem~\ref{theorem:firstconstantapprx}. The data structure achieves approximation factor  $(5+\eps)$.

\paragraph{The data structure} 
The input consists of a set $\Pi$ of $n$ curves in $\RR$, and the approximation error $\eps>0$. 
To simplify our exposition, we assume that the distance threshold $r$ is equal to $1$ (otherwise, we scale the input uniformly). 
To solve the problem for a different value of $r$, the input set can be uniformly scaled. 
Let $\GGG_{w}:=\{i\cdot w \mid i\in \ZZ\}$ be the regular grid with side-length $w:= \eps/2$. Let $\HHH$ be a dictionary, which is initially  empty. 
For each input curve $\pi \in \Pi$, we compute its  $1$-signature $\sigma_{\pi}$, with vertices $\VVV(\sigma_{\pi})=v_1,\ldots,v_{\ell}$, and for each $v_i\in \VVV(\sigma_{\pi})$ we define the range  $r_i:=[v_i- 2-w,v_i+2+w]$. We enumerate all curves with at most $k$ vertices, chosen from the sets $r_1\cap G_w, r_2\cap G_w, \ldots$, and satisfying the order of $i$, and we store them in a set $\CCC'$. Next, we compute the set $\CCC(\pi):=\{\sigma \in \CCC' \mid \d_F(\sigma,\pi)\leq 3\}$. We store $\CCC(\pi)$ in $\HHH$ as follows: for each $\sigma \in \CCC(\pi)$, we use as key   the sequence of its vertices $\VVV(\sigma)$: if $\VVV(\sigma)$ is not already stored in $\HHH$, then we insert the pair ($\VVV(\sigma)$,$\pi$)
into $\HHH$. 
The total space required is $\OO\left(n\cdot \max_{\pi \in \Pi} |\CCC(\pi)|\right)$. 

Our intuition is the following. We would like  the set $\CCC(\pi)$ to contain all those curves that correspond to $2$-signatures of query curves that have $\pi$ as an  approximate near neighbor in the set $\Pi$. So when presented with a query we can simply compute its $2$-signature and do a lookup in  $\HHH$. However, the set of all possible $2$-signatures with non-empty query is infinite. Therefore, we snap the vertices to a grid to obtain a discrete set of bounded size.
\paragraph{The query algorithm} 
When presented with a query curve $\tau$, we first compute a $2$-signature $\sigma_{\tau}$, and then we compute a key by snapping the vertices to the same grid $\GGG_{w}$. Snapping to $\GGG_w$ is implemented as follows: if $\VVV(\sigma_{\tau})= v_1,\ldots,v_{\ell} $ then $\sigma_{\tau}':=\langle g_w(v_1),\ldots,g_w(v_{\ell}) \rangle$, where for any $x\in \RR$, $g_w(x)$ is the nearest point of $x$ in $\GGG_w$. 
We perform a lookup in  $\HHH$ with the key $\VVV(\sigma_{\tau}')$ and return the result: if $\VVV(\sigma_{\tau}')$ is stored in $\HHH$ then we return the associated curve, otherwise we return “no”.

\begin{lemma}\label{lem:5correctness}
Let $\tau$ be a query curve of complexity $k$. If the query algorithm returns an input curve $\pi' \in \Pi$, then $\d_F(\pi',\tau)\leq 5+\eps$. If the query algorithm returns ``no'', then there is no $\pi \in \Pi$ such that $\d_F(\pi,\tau) \leq 1$. 
\end{lemma}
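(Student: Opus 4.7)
The lemma has two directions, which I handle separately. For the forward direction, suppose the algorithm returns an input curve $\pi'\in\Pi$. Then during preprocessing some $\sigma\in\CCC(\pi')$ with $\VVV(\sigma)=\VVV(\sigma_\tau')$ was inserted into $\HHH$, and since $\sigma$ and $\sigma_\tau'$ are both the linear interpolation of the same vertex sequence, they are identical as functions on $[0,1]$, so $\d_F(\sigma,\sigma_\tau')=0$. The bound $\d_F(\pi',\tau)\leq 5+\eps$ then follows by chaining the triangle inequality through $\sigma$, $\sigma_\tau'$ and $\sigma_\tau$: using $\d_F(\pi',\sigma)\leq 3$ by definition of $\CCC(\pi')$, $\d_F(\sigma_\tau',\sigma_\tau)\leq w/2=\eps/4$ from observation~(ii) of Section~\ref{sec:prelims} applied to the vertex-wise snapping (each vertex moves by at most $w/2$), and $\d_F(\sigma_\tau,\tau)\leq 2$ from Lemma~\ref{lemma:signatures1}, I obtain $\d_F(\pi',\tau)\leq 3+0+\eps/4+2=5+\eps/4\leq 5+\eps$.

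For the backward direction I argue the contrapositive: assume there exists $\pi\in\Pi$ with $\d_F(\pi,\tau)\leq 1$ and with $1$-signature $\sigma_\pi$ of vertices $v_1,\dots,v_\ell$, and show that the algorithm returns some curve (not ``no''). It suffices to prove $\sigma_\tau'\in\CCC(\pi)$, as this forces $\VVV(\sigma_\tau')$ to have been stored in $\HHH$ during preprocessing. Membership in $\CCC(\pi)$ requires (a) that $\sigma_\tau'$ was enumerated in $\CCC'$, i.e.\ its vertices lie, in the correct order, in the ranges $r_i=[v_i-2-w,v_i+2+w]$ around the vertices of $\sigma_\pi$; and (b) that $\d_F(\sigma_\tau',\pi)\leq 3$. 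Part~(b) follows from the same triangle-inequality chain used above, now ending at $\pi$: $\d_F(\sigma_\tau',\pi)\leq w/2+2+1=3+\eps/4$, absorbing the $\eps/4$ slack into the filter threshold (equivalently, using $3+\eps/4$ in the definition of $\CCC(\pi)$ does not affect the asymptotic space bound and still produces the $5+\eps$ guarantee).

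The heart of the argument is part~(a), which I expect to be the main technical difficulty. Since snapping to $\GGG_w$ shifts each $u_j$ by at most $w/2$, it is enough to exhibit, for each vertex $u_j$ of $\sigma_\tau$, an index $i_j$ with $|u_j-v_{i_j}|\leq 2+w/2$ and with $i_1\leq i_2\leq\cdots$ non-decreasing. The plan is to invoke Lemma~\ref{lemma:signatures2} on $\sigma_\tau$ as a $2$-signature of $\tau$ together with the curve $\pi$ (which satisfies $\d_F(\tau,\pi)\leq 1\leq 2$) to produce a sequence $b_1,\dots,b_{|\sigma_\tau|}$ of vertices of $\pi$, appearing in order along $\pi$, with $|b_j-u_j|\leq 2$. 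The delicate step is to argue that each such $b_j$ is actually a vertex of the $1$-signature $\sigma_\pi$, so that $b_j=v_{i_j}$ for some $i_j$ and the ordering is inherited. This is where generic vertices of $\pi$ are not automatically vertices of $\sigma_\pi$. The resolution exploits that $u_j$ belongs to a $2$-signature: properties~(iii) (minimum edge length) and~(ii) (direction-preserving) of Definition~\ref{def:sig} force $\tau$ to oscillate by more than $4$ (interior) or $2$ (endpoint) on each side of $u_j$; transferring this oscillation through $\d_F(\pi,\tau)\leq 1$ forces an oscillation of amplitude more than $2$ in the matched portion of $\pi$; and property~(ii) applied to the $1$-signature $\sigma_\pi$ forbids such an oscillation strictly inside an edge of $\sigma_\pi$, so the realising extremum of $\pi$ must itself coincide with some $v_{i_j}$. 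Combining~(a) and~(b) yields $\sigma_\tau'\in\CCC(\pi)$, and hence a successful lookup, completing the argument.
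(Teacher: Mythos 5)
Your forward direction and your part~(b) coincide with the paper's proof (and your remark about the filter threshold is consistent with the paper's pseudocode, which filters with $\d_F(\pi,\sigma)\leq 3+w$ rather than $3$). The issue is part~(a), which you yourself flag as the crux. Applying Lemma~\ref{lemma:signatures2} to the pair $(\sigma_\tau,\pi)$ only yields vertices $b_j$ of $\pi$ in the ranges $[u_j-2,u_j+2]$, and the subsequent upgrade to vertices of $\sigma_\pi$ does not go through as sketched: the oscillation-transfer argument at best shows that the up--down (or down--up) excursion of $\pi$ near $u_j$ must straddle \emph{some} vertex of $\sigma_\pi$, but it does not show that this signature vertex lies within distance $2$ of $u_j$ --- the straddled signature extremum could a priori overshoot $u_j$ by more than the allowed slack --- and you would separately have to verify that the indices $i_j$ so obtained are non-decreasing in $j$. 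Moreover, the quantitative claim that $\tau$ oscillates by ``more than $4$'' on each side of $u_j$ relies on the minimum-edge-length property, which for a $2$-signature only guarantees length greater than $2$ for the first and last edges, so the argument degrades exactly at the endpoints.

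The paper sidesteps all of this with a one-line substitution: apply Lemma~\ref{lemma:signatures2} with $\sigma_\pi$ itself in the role of the second curve. Indeed, $\sigma_\pi$ is a time series with $\d_F(\tau,\sigma_\pi)\leq \d_F(\tau,\pi)+\d_F(\pi,\sigma_\pi)\leq 1+1=2$ by Lemma~\ref{lemma:signatures1}, so Lemma~\ref{lemma:signatures2} (with $\delta=2$ and the $2$-signature $\sigma_\tau$ of $\tau$) directly produces a vertex of $\sigma_\pi$ --- that is, one of the $v_i$ --- in each range $[u_j-2,u_j+2]$, and these appear along $\sigma_\pi$ in the order of $j$. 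After snapping, the vertices of $\sigma_\tau'$ therefore lie in the preprocessing ranges $r_i=[v_i-2-w,v_i+2+w]$ in the correct order, which is precisely your statement~(a). Replace your oscillation argument with this and the proof is complete.
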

\begin{proof}
Let $\pi$ be any input curve in $\Pi$ and let $\sigma_{\pi}$ be the $1$-signature of $\pi$. Let $\tau$ be a query curve, let $\sigma_{\tau}$ be its $2$-signature and let $\sigma_{\tau}'$ be as defined in the query algorithm. 
First suppose that $\d_F(\pi,\tau)\leq 1$. 
By the triangle inequality and Lemma \ref{lemma:signatures1}, $\d_F({\pi},\sigma_\tau)\leq 3+w$. Let $ u_1,\ldots,u_{\ell'}$ be the vertices of  $\sigma_{\tau}$ and define for each $i\in [\ell']$, $r_i':=[u_i - 2,u_i + 2]$.  
By Lemma \ref{lemma:signatures2}, $\sigma_{\pi}$ has a vertex  in each range $r_i'$ and these vertices appear on $\sigma_{\pi}$ in the order of $i$. This guarantees that the vertices of $\sigma_{\tau}'$ lie in the ranges $r_1,\ldots,r_{\ell}$ and it will be considered during preprocessing. Hence, $\sigma_{\tau}'$ will be generated when preprocessing $\pi$.
This implies that $\VVV(\sigma_{\tau}')$ is stored in $\HHH$. 
It is possible that $\sigma_{\tau}'$ was also generated and stored for a different input curve, say $\pi' \neq \pi$ with $\d_F(\pi',\sigma_{\tau}') \leq 3$. We claim that $\d_F(\pi',\tau) \leq 5+2w$. Indeed, we have by the triangle inequality
\[ \d_F(\pi',\tau) \leq \d_F(\pi',\sigma_{\tau}') + \d_F(\sigma_{\tau}',\sigma_{\tau}) + \d_F(\sigma_{\tau},\tau) \leq 5+2w. \]
This proves that any curve returned by the query algorithm has Fr\'echet distance at most $5+2w=5+\eps$ to the query curve, and if the query algorithm returns ``no'', then there is no input curve within Fr\'echet distance $1$ to the query curve.

  \end{proof}

\thmfirstANNFD*
\begin{proof}
The data structure is described above. By Lemma~\ref{lem:5correctness} the data structure returns a correct result. It remains to analyze the complexity. 
Our data structure solves the $(5+\eps)$-ANN problem with distance threshold $r=1$. 
The space required for each input curve is proportional to the number of candidate signatures computed in the preprocessing phase. 
Indeed, we will show now that $|\CCC'|\leq \OO\left( \frac{1}{\eps}\right)^k$. 
Notice that if there exists a curve with $k$ vertices which is within distance $1$ from $\pi$ then $\ell\leq k$, by Lemma \ref{lemma:signatures2}. Recall that the curves in $|\CCC'|$ have vertices in the ranges $r_i\cap \GGG_{w}$ and the vertices  respect the order of $i$. In particular, \texttt{generate\_sequences} adds at most one curve to $\CCC'$ for each possible sequence of vertices in   $r_i\cap \GGG_{w}$, $i=1,\ldots,\ell$, that satisfy the order of $i$. 
If we fix the choices of  $t_1,\ldots,t_{\ell}$, where each $t_i$ denotes the number of vertices in $r_i\cap \GGG_{w}$ to be  used in the creation of those curves, we can produce at most $\prod_{i=1}^{\ell} |r_i\cap \GGG_{w}|^{t_i}$ distinct sequences of vertices of length $\sum_{i=1}^{\ell} t_i$ and hence at most $\prod_{i=1}^{\ell} |r_i\cap \GGG_{w}|^{t_i}$ curves of length at most $\sum_{i=1}^{\ell} t_i$. 
Hence, 
\begin{align*}
|\CCC'| &\leq
\sum_{\substack{t_1+\ldots +t_{\ell}=k \\ \forall i:~ t_i\geq 0 \\t_1\geq 1,t_{\ell}\geq 1 }} \prod_{i=1}^{\ell} \left(\frac{4}{\eps}+2\right)^{t_i } \\ & \leq 
\sum_{\substack{t_1+\ldots +t_{\ell}=k \\ \forall i:~ t_i\geq 0}} \left(\frac{4}{\eps}+2\right)^{k} \\ &\leq {{k+\ell-1}\choose{k}} \cdot
\left(\frac{4}{\eps}+2\right)^{k}\\
& \leq (2\e)^{k} \cdot \left(\frac{4}{\eps}+2\right)^{k}\\
&=\OO\left(\frac{1}{\eps}\right)^{k}.    
\end{align*}

The time to compute a signature for a curve of complexity $m$ is $\OO(m)$, because we can use the algorithm of \cite{DKS16}. For filtering out the candidates with high Fr\'echet distance we apply the decision algorithm by Alt and Godau~\cite{altgodau-95} for each candidate in $O(mk)$ time.  
Since we rely on perfect hashing for building $\HHH$, the expected preprocessing time is in 
$\OO(nm) \cdot \OO(1/\eps)^k$, and the space is in  $n\cdot \OO(1/\eps)^k$ because we can store pointers to curves in $\HHH$, plus $\OO(nm)$ for storing the input curves. 
Each query costs $\OO(k)$ time, since we employ perfect hashing for $\HHH$, and snapping a curve costs $\OO(k)$ time assuming that a floor function operation needs $\OO(1)$ time. 

  \end{proof}

Deciding whether a query curve $\tau$ is near to a given curve $\pi$ by only having a $2$-signature of $\tau$ is subject to a $\pm2$ error. 

One can find concrete worst-case examples where this  approximation factor is attained.

\subsection{Pseudocode of the basic result}\label{app:pseudo}
\small

\algorithmus{\texttt{preprocess}(set of time series $\Pi$, $\eps>0$)}{

\begin{algorithmic}[1]
\Statex \Comment{$k$ is assumed to be a variable with global scope}
     \State Initialize empty dictionary $\HHH$ 
     \State $w\gets \eps/2$ 
     \For {{\bf each} $\pi \in \Pi$}
     \State $\CCC(\pi)\gets$ \texttt{generate\_candidates}($\pi$, $w$)
     \If{ $\CCC(\pi)\neq \emptyset$}
     \For{{\bf each} $\sigma_{\tau}\in \CCC(\pi)$}
     \If{$\VVV(\sigma_{\tau})$ not in $\HHH$}
     \State insert key $\VVV(\sigma_{\tau})$ in $\HHH$, associated with a pointer to $\pi$  
     \EndIf
     \EndFor
     \EndIf
     \EndFor
\end{algorithmic}
}

    \algorithmus{\texttt{generate\_candidates}(time series $\pi$, $w>0$)}{
\begin{algorithmic}[1]
\State $\sigma_{\pi}\gets $ $1$-signature of $\pi$, with  $\VVV(\sigma_{\pi})= v_1,\ldots,v_{\ell}  $
 \If{ $\ell>k$} 
 \State  {\bf return} $\emptyset$
 \EndIf
\For{ {\bf each} $i=1,\ldots,\ell$}
\State $ r_i\gets  [v_i-2-w,v_i+2+w] $
\EndFor
\State $\CCC' \gets \emptyset$ 
\For{{\bf each} $j=1,\ldots, \ell $}
\For{{\bf each} $p \in r_j\cap \GGG_w$}
\State \texttt{generate\_sequences}($\langle p\rangle$, $j$, $w$, $\CCC'$) 
\EndFor
\EndFor
\State $\CCC (\pi) \gets \emptyset$
\For{ {\bf each} $\sigma_{\tau}\in \CCC'$}
\If{$\d_F({\pi},\sigma_{\tau})\leq 3+w$} 
\State $\CCC(\pi) \gets \CCC(\pi) \cup \{ \sigma_{\tau}\}$
\EndIf
\EndFor
\State {\bf  return} $\CCC (\pi) $
\end{algorithmic}
}

 \algorithmus{\texttt{generate\_sequences}(time series $\sigma$, integer $i$, $w>0$, returned set $\CCC'$)}{
\begin{algorithmic}[1]
\Statex \Comment{Stores in $\CCC'$ all possible time series which begin with $\sigma$, have at most $k$ vertices that belong to  $r_j\cap \GGG_{w}$, for $j=i,\ldots, \ell$, and appear in them in the order of $j$.}
\State $v_1,\ldots,v_t \gets \VVV(\sigma)$
\If {$|\VVV(\sigma)|\leq k$}
\State $\CCC'\gets \CCC' \cup \{ \sigma\}$
\EndIf
\If{$|\VVV(\sigma)|<k$}
\For{{\bf each} $j=i,\ldots, \ell $}
\For{{\bf each} $p \in r_j\cap \GGG_w$}
\State $\sigma' \gets \langle v_1,\ldots,v_t,p\rangle $ 
\State \texttt{generate\_sequences}($\sigma'$, $j$, $w$, $\CCC'$)
\EndFor
\EndFor
\EndIf
\end{algorithmic}
}

\algorithmus{\texttt{query}(time series $\tau$)}{

\begin{algorithmic}[1]
    \Statex \Comment{$w= \eps/2$ is fixed during preprocessing}
     \State $\sigma_{\tau} \gets $ compute a $2$-signature of $\tau$.
  \State $\sigma_{\tau}' \gets$ snap $\sigma_\tau$ to $\GGG_{w}$
     \If{ $\exists \pi \in \Pi$,  $ \sigma_{\tau}'\in \CCC(\pi)$ }  \Comment{lookup in  $\HHH$}
     \State{\bf report} $\pi$ \Comment{arbitrary $\pi$ s.t.\ $ \sigma_{\tau}'\in \CCC(\pi)$ }
  \Else 
    \State {\bf report} “no”
  \EndIf
\end{algorithmic}
}

\normalsize

\section{Improving the approximation factor to $(2+\eps)$}
\label{section:secondds}

In this section, we describe the data structure for  Theorem~\ref{theorem:secondconstantapprx}.  
We build upon the ideas developed in Section \ref{section:firstds}. 
The key to circumventing the larger approximation factor resulting from the use of the triangle inequality seems to be a careful construction of matchings. 
For this we define the notion of a $\delta$-tight matching for two curves.

\begin{figure}[t]
    \centering
    \includegraphics[width=\textwidth, page=3]{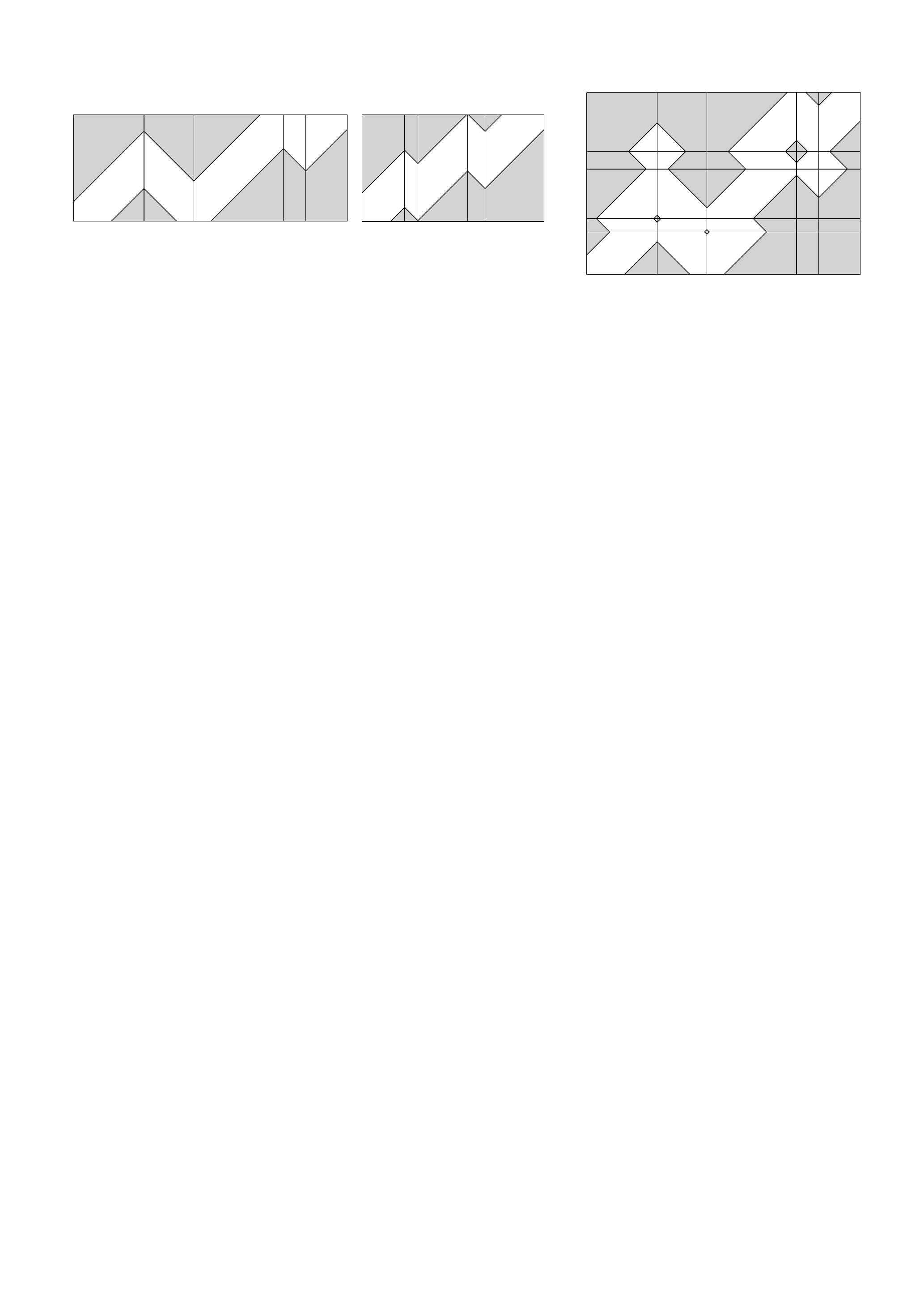}
    \caption{Example of the path constructed in the proof of Lemma~\ref{lemma:lefttightmatching}. The left figure shows a tight matching from $X$ to $\pi$. The middle figure shows a tight matching from $X$ to $\tau$. Diagonal edges of the $0$-free space of these can be transferred to the diagram on the right, which is the free space diagram of $\pi$ and $\tau$. The final path results from connecting these diagonal segments using horizontal and vertical line segments.  }
    \label{fig:tight}
\end{figure}

\subsection{Tight matchings}\label{section:tightmatchings}

Intuitively, a $\delta$-tight matching is a matching which attains a distance of at most $\delta$ and  matches as many pairs of points as possible at distance zero.

\begin{definition}[$\delta$-tight matching]
\label{definition:tightmatching}
Given two curves $\pi$ and $\tau$, consider a monotone path $\lambda$ through the parametric space of $\pi$ and $\tau$ consisting of two types of segments: \begin{compactenum}[(i)]
            \item a segment contained in the $0$-free space (corresponding to identical subcurves of $\pi$ and  $\tau$),
            \item a horizontal line segment contained in the $\delta$-free space (corresponding to a point on $\pi$ and a subcurve on $\tau$). 
               \end{compactenum}
        If $\lambda$ exists, we say $\lambda$ is a \emph{tight matching} of width $\delta$ from $\pi$ to $\tau$. 
\end{definition}

Full proofs of the lemmas of this section can be found in Section~\ref{app:proofs}.

We use the following theorem by Driemel, Krivosija and Sohler. 

\begin{theorem}[Theorem 3.1~\cite{DKS16}]
\label{theorem:deletionofvertices}
Let $\sigma_{\tau}$ be a $\delta$-signature of
$\tau$ with vertices  $v_1, \ldots , v_{\ell}$. Let $r_j = [v_j - \delta, v_j + \delta]$ be ranges
centered at the vertices of $\sigma_{\tau}$ ordered along $\sigma_{\tau}$, where
$r_1 = [v_1 - 4\delta, v_1 + 4\delta]$ and $r_{\ell} = [v_{\ell} - 4\delta, v_{\ell} + 4\delta]$. Let
$\pi$ be a curve with $\d_F (\tau, \pi) \leq \delta$ and let $\pi'$ be a curve obtained by removing some vertex $u_i = \pi(p_i)$ from $\pi$ with 
$u_i \notin \bigcup_{1\leq j \leq \ell}r_j$. It holds that $\d_F (\tau, \pi') \leq \delta$.
\end{theorem}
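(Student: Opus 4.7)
The plan is to take an optimal $\delta$-matching $\phi$ between $\tau$ and $\pi$ and modify only its behavior near the deleted vertex $u_i$ to obtain a $\delta$-matching $\phi'$ between $\tau$ and $\pi'$. In $\pi'$ the two edges $e_1 := \overline{u_{i-1}u_i}$ and $e_2 := \overline{u_iu_{i+1}}$ of $\pi$ are fused into the single edge $e := \overline{u_{i-1}u_{i+1}}$. Let $\tau[A,B]$ denote the subcurve of $\tau$ that $\phi$ matches to $e_1\cup e_2$. The whole problem reduces to showing that $e$ admits a width-$\delta$ matching with $\tau[A,B]$: one can then splice this local matching into $\phi$ on $[A,B]$ to obtain $\phi'$, leaving the matching outside $[A,B]$ unchanged.

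The crucial structural lemma is that no signature-vertex parameter $t_j$ lies in the open interval $(A,B)$, so $\tau[A,B]$ is confined to a single monotone signature segment $\tau([t_{j_0},t_{j_0+1}])$. I would prove this by contradiction: if $t_j \in (A,B)$ with $j\in[2,\ell-1]$, then $\tau(t_j)=v_j$ is matched to a point on $e_1\cup e_2$ within distance $\delta$, and since $v_j$ is a local extremum of $\tau$ while $u_i$ is a local extremum of $\pi$, tracing $\phi$ between the preimages of $v_j$ and $u_i$ (both lying in the subcurve matched to $e_1\cup e_2$) forces $|u_i-v_j|\leq\delta$, contradicting $u_i\notin r_j$. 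The boundary cases $j\in\{1,\ell\}$ are subtler because $\tau(0)=v_1$ and $\tau(1)=v_\ell$ are pinned to the endpoints of $\pi$ by any matching, which gives extra slack in how far the matching can drift near the boundary; this is precisely why $r_1$ and $r_\ell$ must have the enlarged radius $4\delta$.

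On the monotone signature segment, the direction-preserving property (Definition~\ref{def:sig}(ii)) ensures that $\tau$ backtracks by at most $2\delta$. Assume the segment is ascending; the descending case is symmetric. I would then build the matching between $e$ and $\tau[A,B]$ explicitly via a monotone reparameterization: map each $\tau(t)$ to the point $p(t)$ on $e$ whose value equals the running maximum of $\tau$ on $[A,t]$, clamped into the value range of $e$. This is monotone along $e$ by construction. The main obstacle, and the step I expect to consume most of the technical effort, is verifying uniformly that $|p(t)-\tau(t)|\leq\delta$ for all $t\in[A,B]$. The argument combines (i) the width bound $|\tau(t)-\pi(s)|\leq\delta$ for the $\phi$-partner $\pi(s)$ on $e_1\cup e_2$, (ii) the $2\delta$-backtracking bound from direction-preservation (which controls the gap between $\tau(t)$ and its running maximum), and (iii) a careful analysis of how the running maximum interacts with the clamping to $e$'s value range, especially in the boundary cases where $u_i$ sits near the first or last signature vertex and the extra $4\delta$-buffer comes into play.
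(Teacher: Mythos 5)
First, note that the paper does not prove this statement at all: it is imported verbatim as Theorem~3.1 of~\cite{DKS16}, so there is no in-paper proof to compare your route against. Judged on its own merits, your proposal contains a genuine gap at the very first step, namely the claim that ``the whole problem reduces to showing that $e=\overline{u_{i-1}u_{i+1}}$ admits a width-$\delta$ matching with $\tau[A,B]$.'' This reduction is false, and no amount of care in the later running-maximum construction can repair it, because in general \emph{no} $\delta$-matching between $\tau[A,B]$ and $e$ exists. Concretely, take $\delta=1$, $\tau=\langle 5,-1,9,7,12,5\rangle$ with the valid $1$-signature $\sigma_\tau=\langle 5,-1,12,5\rangle$ (the backtrack $9\to 7$ has magnitude exactly $2\delta$, so it is absorbed by the direction-preserving property), giving ranges $r_1=r_4=[1,9]$, $r_2=[-2,0]$, $r_3=[11,13]$. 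Let $\pi=\langle 4,0,10,6,12,4\rangle$; matching corresponding vertices shows $\d_F(\tau,\pi)\leq 1$. The vertex $u_i=10$ satisfies $u_i\notin r_1\cup r_2\cup r_3\cup r_4$, so the theorem applies. Here $e_1\cup e_2=\overline{0\,10}\cup\overline{10\,6}$, the subcurve $\tau[A,B]$ runs from the vertex $-1$ to the vertex $7$ and attains the value $9$, while $e=\overline{0\,6}$ has value range $[0,6]$. The point of $\tau[A,B]$ at value $9$ is at distance $3$ from every point of $e$, so $\tau[A,B]$ and $e$ admit no $1$-matching. The theorem's conclusion nevertheless holds: after deleting $u_i$ the curve $\pi'$ ascends monotonically from $0$ to $12$, and one matches the excursion $9\to 7$ of $\tau$ to the single point $8$ of $\pi'$ --- a point that lies \emph{beyond} $e$, on what was the edge $\overline{u_{i+1}u_{i+2}}$. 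Any correct proof must therefore allow the re-matched window on $\pi'$ to extend past $u_{i+1}$ (resp.\ before $u_{i-1}$); a splice confined to $[A,B]\times e$ cannot work.

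A secondary, smaller issue is your justification of the structural lemma that no signature parameter $t_j$ lies in $(A,B)$. The inference ``$v_j$ and $u_i$ are both local extrema, hence tracing $\phi$ forces $|u_i-v_j|\leq\delta$'' is not valid when $v_j$ and $u_i$ are extrema of opposite type (e.g.\ $v_j$ a signature minimum sitting well below a local maximum $u_i$): in that configuration $|u_i-v_j|$ can exceed $\delta$ without contradiction, and ruling the configuration out requires chasing the range and minimum-edge-length properties through the \emph{neighboring} signature vertices $v_{j\pm1}$. The lemma may well be true, but your one-line argument for it does not establish it. If you want to pursue this statement, consult the actual proof in~\cite{DKS16}; the essential extra ingredient it supplies, and which your sketch is missing, is a bound showing that the excursion of $\tau$ matched to the spike at $u_i$ can be absorbed by a point of $\pi'$ on the fused or adjacent edges, rather than by $e$ alone.
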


\begin{restatable}{lemma}{lemmalefttightmatchingfirst}
\label{lemma:lefttightmatching1}
Let $X=\overline{a b} \subset \RR$ be a line segment and let $\tau:[0,1]\rightarrow \RR$ be a curve with $[a,b] \subseteq [\tau(0), \tau(1)]$. If  $\d_F(X, \tau)\leq \delta$ then there exists a $\delta$-tight matching from $X$ to $\tau$.
\end{restatable}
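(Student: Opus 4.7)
The plan is to construct a $\delta$-tight matching explicitly. Without loss of generality I assume $a \leq b$ with the natural orientation $\tau(0) \leq a \leq b \leq \tau(1)$ (the opposite orientation is handled symmetrically). From the $\delta$-Fr\'echet matching between $X$ and $\tau$, I first extract the global bound $\tau([0,1]) \subseteq [a-\delta, b+\delta]$, since every $\tau(s)$ is matched to some $X(t) \in [a,b]$.

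Let $R(s) = \max_{s' \leq s} \tau(s')$ be the running maximum of $\tau$, and define $\chi \colon [a, b] \to [0,1]$ by
\[
\chi(v) = \sup\{s \in [0,1] : \tau(s) = v \text{ and } R(s) \leq v + \delta\}.
\]
This set is non-empty because at the first time $\tau$ reaches $v$ one has $R(s) = v \leq v + \delta$. The map $\chi$ is non-decreasing: given $v_1 < v_2$, the point $s_1 = \chi(v_1)$ can be extended forward to the next $s_2 > s_1$ with $\tau(s_2) = v_2$, and $R(s_2) \leq v_2 + \delta$ is preserved since $\tau \leq v_2$ on $(s_1, s_2)$. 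The $\delta$-tight matching will be the monotone path in the parametric space consisting of: an initial vertical segment from $(0,0)$ to $(0, \chi(a))$; alternating diagonal pieces tracing the $0$-free space curve $\{(X^{-1}(v), \chi(v)) : v \in [a,b]\}$ where $\chi$ is continuous, and vertical segments at $X$-parameter $X^{-1}(v_0)$ across each jump of $\chi$ at value $v_0$; and a final vertical segment from $(1, \chi(b))$ to $(1,1)$.

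The main obstacle is verifying that each vertical segment lies in the $\delta$-free space. The initial vertical is handled by $\tau \leq R \leq a + \delta$ on $[0, \chi(a)]$ together with the global bound $\tau \geq a - \delta$; the final vertical is symmetric. For a jump of $\chi$ at value $v_0$ from $s^-$ to $s^+$, the upper bound $\tau(s) \leq v_0 + \delta$ on $[s^-, s^+]$ is immediate from $R(s^+) \leq v_0 + \delta$. The lower bound $\tau(s) \geq v_0 - \delta$ is the technical crux. The jump arises from a plateau of $R$ at height $v_0 + \delta$ on an interval $[\sigma_0, \sigma_1]$ containing $s^+$; on $[s^-, \sigma_0]$ one directly has $\tau \in [v_0, v_0 + \delta]$ from the strict increase of $R$ to $v_0 + \delta$ while $\tau$ stays above $v_0$. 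On $[\sigma_0, s^+]$ I invoke monotonicity of the $\delta$-Fr\'echet matching: since $\tau(\sigma_0) = v_0 + \delta$ forces the matched $X$-parameter $t_{\sigma_0}$ to satisfy $X(t_{\sigma_0}) \geq v_0$, and $\tau(s^+) = v_0$ forces $X(t_{s^+}) \leq v_0 + \delta$, any intermediate Fr\'echet-matched $X$-value on $[\sigma_0, s^+]$ lies in $[v_0, v_0 + \delta]$, giving $\tau(s) \geq X(t) - \delta \geq v_0 - \delta$ there.
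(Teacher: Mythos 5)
Your construction is a genuinely different route from the paper's proof (which threads an $x$-monotone path through the $0$-free space and then iteratively repairs its non-monotone sections by horizontal shortcuts), and most of it checks out: non-emptiness and monotonicity of $\chi$, the fact that the pieces where $\chi$ is continuous lie in the $0$-free space, the boundary segments, and the upper bound $\tau(s)\leq v_0+\delta$ on a jump interval $[s^-,s^+]$ via $R(s^+)\leq v_0+\delta$. The gap sits exactly at what you call the technical crux. The claim that \emph{every} jump of $\chi$ at $v_0$ arises from a plateau of $R$ at height $v_0+\delta$ containing $s^+$ is false. Take $\tau=\langle 0,10,5,20\rangle$, $X=\overline{0\;20}$, $\delta=10$ (so $\d_F(X,\tau)=2.5\leq\delta$ and $[0,20]=[\tau(0),\tau(1)]$). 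Then $\chi$ jumps at $v_0=5$, from the crossing of level $5$ on the first edge to the local minimum of $\tau$, because the level set of $5$ reappears at the local minimum --- not because $\rho(v)=\sup\{s: R(s)\leq v+\delta\}$ jumps. Here $R\equiv 10\neq 15=v_0+\delta$ near $s^+$, and $\tau$ never attains $v_0+\delta$ before $s^+$, so the anchor point $\sigma_0$ on which your matching argument relies does not exist. Your case analysis simply does not cover jumps caused by the level-set structure of $\tau$ (local minima/maxima of $\tau$ at height $v_0$), and in addition, even in the genuine plateau case, the assertion that $\tau$ stays above $v_0$ on $[s^-,\sigma_0]$ is itself the statement that needs proof; "the strict increase of $R$" does not give it.

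The approach is salvageable, but the missing idea is a contradiction argument against the defining property of $\chi$. Suppose $\tau(s_0)<v_0-\delta$ for some $s_0\in(s^-,s^+)$; since $\tau(s^+)=v_0$, for every $v$ slightly below $v_0$ the curve revisits level $v$ at some time $s_v\in(s_0,s^+)$ on the way back up. If any such revisit had $R(s_v)\leq v+\delta$, then $\chi(v)\geq s_v>s^-$, contradicting $\chi(v)\leq s^-$ for $v<v_0$. Hence $R(s_v)>v+\delta$ for all these $v$, and letting $v\uparrow v_0$ (using $R(s^+)\leq v_0+\delta$) forces $R$ to equal $v_0+\delta$ by the recovery time; since $\tau<v_0$ between $s_0$ and its recovery, the time $\sigma$ at which $\tau$ attains $v_0+\delta$ satisfies $\sigma<s_0$. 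Only now does your monotone-matching argument apply: $\tau(\sigma)=v_0+\delta$ is matched to a point of $X$ with value at least $v_0$, so every later point of $\tau$, in particular $\tau(s_0)$, is matched to a value at least $v_0$ and hence $\tau(s_0)\geq v_0-\delta$ --- the desired contradiction. Without this step your proof establishes the lower bound only for a strict subclass of the jumps, so as written it is incomplete.
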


\begin{proof}[Proof Sketch]
We first construct a connected path in the $\delta$-free space of the two curves that only consists of sections of the $0$-free space and horizontal line segments, but is not necessarily monotone.
We do this by parametrizing the set that constitutes the $0$-free space and connecting it by horizontal line segments. We obtain an $x$-monotone connected curve from $(0,0)$ to $(1,1)$ which lies inside the $\delta$-free space. We then show that this path can be iteratively ``repaired'' by replacing non-monotone sections of the path with  horizontal segments, while maintaining the property that the path is contained inside the $\delta$-free space. After a finite number of iterations of this procedure we obtain a $\delta$-tight matching from $X$ to $\tau$. Figure~\ref{fig:lefttight} illustrates the process.
  \end{proof}
 
In the next lemma we combine tight matchings from a line segment to show an upper bound on the Fr\'echet distance. Using this lemma, we can show upper bounds on the distance that are stronger than bounds obtained by triangle inequality. Figure~\ref{fig:tight} illustrates the idea of the proof.
\begin{restatable}{lemma}{lemmalefttightmatchingsecond}
\label{lemma:lefttightmatching}
Let $X=\overline{a b} \subset \RR$ be a line segment and let $\tau$ and $\pi$ be curves with
$[a,b] \subseteq [\tau(0), \tau(1)]$ and
$[a,b] \subseteq [\pi(0), \pi(1)]$.
If  $\d_F(X, \tau)=\delta_1$ and  $\d_F(X, \pi)=\delta_2$, then $\d_F(\tau,\pi) \leq \max(\delta_1,\delta_2)$.
\end{restatable}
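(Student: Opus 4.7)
The plan is to construct, in the parametric space of $\pi$ and $\tau$, a monotone path $\mu$ from $(0,0)$ to $(1,1)$ that stays entirely in the $\max(\delta_1,\delta_2)$-free space; by the Alt--Godau criterion this yields $\d_F(\pi,\tau)\leq\max(\delta_1,\delta_2)$. The construction composes the two tight matchings produced by Lemma~\ref{lemma:lefttightmatching1}: let $\lambda_1$ be a $\delta_1$-tight matching from $X$ to $\tau$ and $\lambda_2$ a $\delta_2$-tight matching from $X$ to $\pi$. Each is assembled from diagonal segments (subcurves tracing $X$ identically, lying in the $0$-free space with $X$) and horizontal excursions (a single point of $X$ matched to a subcurve within the prescribed width).

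First I would parametrize both matchings by the common $X$-parameter $s\in[0,1]$ and collect the finite set $S$ of values at which either matching has a horizontal excursion. On any sub-interval of $[0,1]\setminus S$ both matchings are in diagonal mode, so there exist unique $p(s), q(s)$ with $\pi(p(s))=X(s)=\tau(q(s))$; tracing $(p(s),q(s))$ yields a monotone piece of $\mu$ lying in the $0$-free space of $\pi$ and $\tau$. These are precisely the ``diagonal edges transferred to the right-hand diagram'' in Figure~\ref{fig:tight}. I would then stitch these pieces together by inserting a connector at each $s_0\in S$. If only $\lambda_1$ is horizontal at $s_0$ (excursion $[q_a,q_b]$), the anchor $\pi(p(s_0))=X(s_0)$ ensures that the vertical connector $\{p(s_0)\}\times[q_a,q_b]$ lies in the $\delta_1$-free space; the $\lambda_2$-only case is symmetric. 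If both matchings are horizontal at an \emph{interior} $s_0$, the tight-matching definition forces $\pi(p_a)=\pi(p_b)=X(s_0)=\tau(q_a)=\tau(q_b)$ at the transitions to the flanking diagonals, so the L-shape $(p_a,q_a)\to(p_b,q_a)\to(p_b,q_b)$ keeps one of $\pi,\tau$ pinned to $X(s_0)$ on each leg and therefore lies in the $\max(\delta_1,\delta_2)$-free space.

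The main obstacle is the boundary case $s_0\in\{0,1\}$, where the anchoring diagonal on one side of the excursion is missing. Here the hypothesis $[a,b]\subseteq[\pi(0),\pi(1)]\cap[\tau(0),\tau(1)]$ is essential: in the ascending case it forces $\pi(0),\tau(0)\in[a-\max(\delta_1,\delta_2),a]$, so $|\pi(0)-\tau(0)|\leq\max(\delta_1,\delta_2)$ and the starting corner already lies in the free space. I would then split the initial excursion at $p_1^\ast=\inf\{p:\pi(p)=a\}$ and $q_1^\ast=\inf\{q:\tau(q)=a\}$; on $[0,p_1^\ast]\times[0,q_1^\ast]$ both curves stay in $[a-\max(\delta_1,\delta_2),a]$, so the straight segment from $(0,0)$ to $(p_1^\ast,q_1^\ast)$ remains in the free space, and beyond $(p_1^\ast,q_1^\ast)$ both curves sit at $a$, allowing any remaining sub-excursions up to the endpoint of the initial horizontal to be absorbed by the interior L-shape argument iterated along the grid of ``anchor'' points where $\pi=a$ and $\tau=a$. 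The symmetric argument handles $s_0=1$, and the descending/mixed-orientation cases reduce to the ascending one by reversing parametrizations. Concatenating all the diagonal pieces and connectors in order of $s$ produces the required monotone path, giving $\d_F(\pi,\tau)\leq\max(\delta_1,\delta_2)$.
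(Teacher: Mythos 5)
Your proposal is correct and follows essentially the same route as the paper's proof: compose the two tight matchings from Lemma~\ref{lemma:lefttightmatching1} along the common $X$-parameter, transfer the shared diagonal pieces into the $0$-free space of $\pi$ and $\tau$, and stitch them together with horizontal/vertical connectors at the excursion parameters. Your treatment is in fact somewhat more explicit than the paper's, particularly in justifying why the connectors stay in the $\max(\delta_1,\delta_2)$-free space and in handling the boundary excursions at $s_0\in\{0,1\}$ using the hypothesis $[a,b]\subseteq[\pi(0),\pi(1)]\cap[\tau(0),\tau(1)]$.
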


\begin{restatable}{theorem}{theoremshortcuttingsignatures}
\label{lemma:deleteanynonsignaturevertex}
Let $\tau$ be a curve with vertices $\tau(t_1),\ldots,\tau(t_m)$, and let $\sigma_{\tau}$ be a $\delta$-signature of $\tau$ with vertices $\tau(t_{s_1}),\ldots,\tau(t_{s_{\ell}})$. Let $\tau'$ be a curve obtained by deleting any subset of vertices of $\tau$ which are not in $\sigma_{\tau}$, i.e. $\tau'=\langle \tau(t_1'),\ldots,\tau(t_{k}') \rangle $, where 
$\{t_{s_1},\ldots t_{s_{\ell}} \} \subseteq \{t_1',\ldots,t_{k}' \} \subseteq \{t_1,\ldots,t_m \}  $. Then $\d_F(\tau,\tau’) \leq \delta$. 
\end{restatable}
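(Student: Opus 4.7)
My plan is to decompose $\tau$ and $\tau'$ at the signature vertices and reduce to a per-segment bound. Write $v_i := \tau(t_{s_i})$ for the $i$-th signature vertex, let $\tau_i := \tau[t_{s_i}, t_{s_{i+1}}]$, and let $\tau'_i$ be the analogous subcurve of $\tau'$ from $v_i$ to $v_{i+1}$; since $\tau'$ retains every signature vertex, $\tau'_i$ is well defined and its vertex set is a subset of that of $\tau_i$ that includes both endpoints. Then $\tau = \tau_1 \oplus \cdots \oplus \tau_{\ell-1}$ and $\tau' = \tau'_1 \oplus \cdots \oplus \tau'_{\ell-1}$ with matching join points, so observation~(i) of Section~\ref{sec:prelims} reduces the claim to $\d_F(\tau_i, \tau'_i) \leq \delta$ for every $i$.

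The key step is to show that the segment $X_i := \overline{v_i v_{i+1}}$, viewed as a two-vertex curve, is a $\delta$-signature (in the sense of Definition~\ref{def:sig}) of both $\tau_i$ and $\tau'_i$. For $\tau_i$ this is essentially a restriction of the global conditions on $\sigma_\tau$ to the parameter interval $[t_{s_i}, t_{s_{i+1}}]$. For $\tau'_i$, only conditions~(ii) and~(iv) require work: both reduce to the observation that if the vertex values of $\tau'_i$ inherit from $\tau$ a bound $u - u' \leq 2\delta$ (resp.\ membership in a single interval), then this bound still holds on every pair of points produced by linear interpolation between the retained vertices. With these verifications in hand, Lemma~\ref{lemma:signatures1} yields $\d_F(X_i, \tau_i) \leq \delta$ and $\d_F(X_i, \tau'_i) \leq \delta$ simultaneously.

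A direct triangle inequality would only give $\d_F(\tau_i, \tau'_i) \leq 2\delta$, so I invoke Lemma~\ref{lemma:lefttightmatching} in its place. Since $\tau_i$, $\tau'_i$ and $X_i$ all share the endpoints $v_i$ and $v_{i+1}$, the hypotheses $[a,b] \subseteq [\tau_i(0), \tau_i(1)]$ and $[a,b] \subseteq [\tau'_i(0), \tau'_i(1)]$ are automatic, and the lemma gives $\d_F(\tau_i, \tau'_i) \leq \max(\delta,\delta) = \delta$. Combining over $i$ via observation~(i) yields the theorem. The main obstacle I anticipate is the boundary case $i \in \{1, \ell-1\}$, where condition~(iv) for $\sigma_\tau$ is weaker and allows $\tau$ to enter the $\delta$-neighbourhood of $v_i$ or $v_{i+1}$ outside of $\overline{v_i v_{i+1}}$; here I will use the minimum-edge-length bound $|v_{i+1} - v_i| > \delta$ from condition~(iii) to conclude that the enlarged admissible range is still a single interval, so that interpolation between retained vertices of $\tau'_i$ cannot escape it.
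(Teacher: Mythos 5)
Your proposal is correct and follows essentially the same route as the paper's proof: decompose at the signature vertices, observe that each segment $\overline{v_i v_{i+1}}$ is a $\delta$-signature of both $\tau_i$ and $\tau'_i$ so that Lemma~\ref{lemma:signatures1} bounds both distances by $\delta$, and then replace the lossy triangle inequality with Lemma~\ref{lemma:lefttightmatching} to get $\d_F(\tau_i,\tau'_i)\leq\delta$ before concatenating the partial matchings. Your additional verification that $\overline{v_i v_{i+1}}$ remains a $\delta$-signature of $\tau'_i$ (including the boundary cases of condition~(iv)) is a detail the paper asserts without elaboration, so it is a welcome refinement rather than a deviation.
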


\begin{proof}
Consider any two consecutive vertices of $\sigma_{\tau}$ defined by parameters $t_{s_j}<t_{s_{j+1}}\in [0,1]$. We assume that the parametrization of $\tau'$ is chosen such that $\tau(t_{s_j} ) = \tau' (t_{s_j} )$, for any $j \in [\ell]$.
It suffices to show that  \[\d_F\left(\tau'[t_{s_j},t_{s_{j+1}}],\tau[t_{s_j},t_{s_{j+1}}]\right)\leq \delta\] for any $j\in [\ell-1]$, because we can then concatenate partial matchings of $\tau'[t_{s_j},t_{s_{j+1}}]$ with $\tau[t_{s_j},t_{s_{j+1}}]$, for all $j\in[\ell-1]$, and obtain a matching of $\tau$ with $\tau'$. 
By Lemma \ref{lemma:signatures1}, we know that for each $j\in[\ell-1]$,  $\d_F\left(\tau[t_{s_j},t_{s_{j+1}}],\overline{\tau(t_{s_j})\tau(t_{s_{j+1}})}\right)\leq \delta$, since $\overline{\tau(t_{s_j})\tau(t_{s_{j+1}})}$ is a $\delta$-signature of $\tau[t_{s_j},t_{s_{j+1}}]$. Similarly,   \[\d_F\left(\tau'[t_{s_j},t_{s_{j+1}}],\overline{\tau(t_{s_j})\tau(t_{s_{j+1}})}\right)\leq \delta,\] because $\overline{\tau(t_{s_j})\tau(t_{s_{j+1}})}$ is a $\delta$-signature of $\tau'[t_{s_j},t_{s_{j+1}}]$. Then, by Lemma \ref{lemma:lefttightmatching}, \[\d_F\left(\tau'[t_{s_j},t_{s_{j+1}}],\tau[t_{s_j},t_{s_{j+1}}]\right)\leq \delta.\]
  \end{proof}

\begin{figure}[t]
    \centering
    \includegraphics[width=\textwidth, page=2]{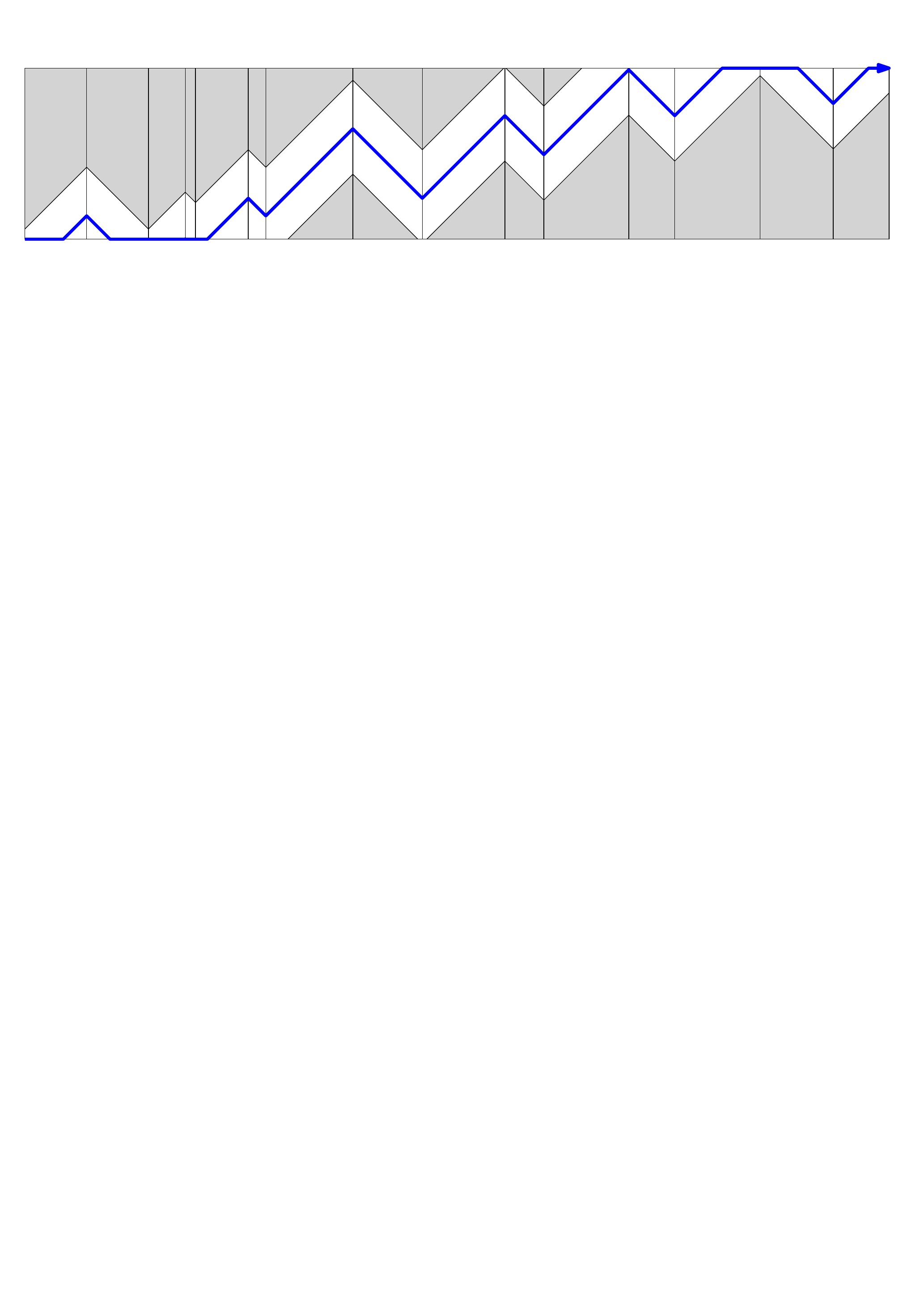}
    \caption{Replacing a section of the path with a horizontal line segment in the proof of Lemma~\ref{lemma:lefttightmatching1}}
    \label{fig:lefttight}
\end{figure}

\subsection{Full Proofs of Section~\ref{section:tightmatchings}}
\label{app:proofs}

\lemmalefttightmatchingfirst*

\begin{proof}
Consider the $\delta$-free space of the two curves $X$ and $\tau$, which is a subset of $[0,1]^2$. We adopt the convention that a point $(x,y) \in [0,1]$ in this diagram corresponds to two points $X(y)$ and $\tau(x)$ (so $X$ corresponds to the vertical axis and $\tau$ corresponds to the horizontal axis). Let $0=x_1\leq \dots\leq x_p=1$ denote the parameter values at vertices of $\tau$. The $\delta$-free space is subdivided into cells $[0,1] \times [x_{i}, x_{i+1}]$.
We call the intersection of the $\delta$-free space with the vertical cell boundary at $x$-coordinate $x_i$ the free space interval at index $i$ and denote it with $[\ell_i,r_i]$.
Consider the $0$-free space inside this diagram, this is the set of points $(x,y) \in [0,1]^2$ with $X(y)=\tau(x)$. 
This set forms a set of paths $\lambda_1,\dots,\lambda_r$, for some $r \in \mathbb{N}$, which is $x$-monotone, since $X$ is a line segment. Therefore, we can parameterize this set by $x$. 
We concatenate any two $\lambda_{i}$ and $\lambda_{i+1}$ by adding a line segment between their endpoints. 
A connecting segment will be a horizontal line, either at $y=0$ or at $y=1$. This can easily be proved by contradiction (assume that $\lambda_i$ ends at $0$ and $\lambda_{i+1}$ starts at $1$, then the section of $\tau$ between those endpoints would have to be disconnected).
In addition, we add line segments to connect $\lambda_1$ to $(0,0)$ and to connect $\lambda_r$ to $(1,1)$. We obtain a connected path $\lambda$ from $(0,0)$ to $(1,1)$, which lies inside the $\delta$-free space, but is not necessarily monotone in $y$. Figure~\ref{fig:lefttight} shows an example.

We now describe how to obtain a $\delta$-tight matching from $\lambda$ by repeatedly replacing sections of $\lambda$ with horizontal line segments, until $\lambda$ is monotone in both parameters, $x$ and $y$.

Assume $\lambda$ is not monotone. Then, there exists a horizontal line that properly intersects $\lambda$ in three different points. Consider a horizontal line at height $y$ with three distinct intersections at $(s_1,y)$, $(s_2,y)$, and $(s_3,y)$, such that
\begin{compactenum}[(i)]
\item the section of $\lambda$ between $s_1$ and $s_2$ lies completely above $y$
\item the section of $\lambda$ between $s_2$ and $s_3$ lies completely below $y$
\end{compactenum}
There exist indices $i$ and $j$, such that $s_1 \leq t_i < t_j \leq s_3$ and such that $t_i$ is minimal and $t_j$ is maximal in this set of indices. 
Let $L$ be the line segment from $(s_1,y)$ to $(s_3,y)$. If $L$ is contained inside the $\delta$-free space, then we replace the corresponding section of $\lambda$ with $L$ and obtain monotonicity of $\lambda$ in the cell(s) $[0,1]\times[x_i,x_j]$.  

Otherwise,  let $i_{-} \in [i,j]$ be the index that maximizes $\ell_{i_{-}}$ and let $i_{+}$ be the index in $[i,j]$ which minimizes $r_{i_{+}}$. (Recall that $[\ell_i,r_i]$ denotes the free space interval at index $i$). It must be that $y \notin [\ell_{i_{-}}, r_{i_{+}}]$, otherwise the line segment $L$ would be contained inside the $\delta$-free space.

Assume $y > r_{i_{+}}$ (the other case is symmetric and handled below). This case is illustrated in Figure~\ref{fig:lefttight}. Let $y'=r_{i_{+}}$ and consider the intersections of $\lambda$ with the horizontal line at $y'$. It must be that there exist intersection points with $s'_1, s'_2, s'_3$ with $s'_1 < s_1 < s'_2 < s_3' < s_3$, such that
\begin{compactenum}[(i)]
\item the section of $\lambda$ between $s'_1$ and $s'_2$ lies completely above $y'$
\item the section of $\lambda$ between $s'_2$ and $s'_3$ lies completely below $y'$
\end{compactenum}
Let $L'$ be the line segment from $(s'_1,y')$ to $(s'_3,y')$.
Since $\d_F(X,\tau)\leq \delta$, it holds that $\ell_j \leq r_{i_{+}}$ for any $j \leq i_{+}$, otherwise there cannot be a monotone path in the $\delta$-free space. 
Therefore, $L'$ is contained in the $\delta$-free space and we can use it to shortcut $\lambda$ and obtain monotonicity of $\lambda$ in the cell(s) $[0,1]\times[x_i,x_j]$.

Otherwise, we have $y < \ell_{i_{-}}$. We handle this case symmetrically. Let $y'=\ell_{i_{-}}$ and consider the intersections of $\lambda$ with the horizontal line at $y'$. It must be that there exist intersection points with $s'_1, s'_2, s'_3$ with $s'_1 < s'_2 < s_1 < s_3' < s_3$, such that 
\begin{compactenum}[(i)]
\item the section of $\lambda$ between $s'_1$ and $s'_2$ lies completely above $y'$
\item the section of $\lambda$ between $s'_2$ and $s'_3$ lies completely below $y'$
\end{compactenum}
Let $L'$ be the line segment from $(s'_1,y')$ to $(s'_3,y')$.
Since $d_F(X,\tau)\leq \delta$, it holds that $\ell_{i_{-}} \leq r_j \leq $ for any $j \geq i_{-}$, otherwise there cannot be a monotone path in the $\delta$-free space. 
Therefore, $L'$ is contained in the $\delta$-free space and we can use it to shortcut $\lambda$ and obtain monotonicity of $\lambda$ in the cell(s) $[0,1]\times[x_i,x_j]$.

With each shortcutting step we obtain monotonicity of the path $\lambda$ in at least one of the cells. Therefore, the process ends after a finite number of steps.

  \end{proof}

\lemmalefttightmatchingsecond*

\begin{proof}
Let $\delta=\max(\delta_1,\delta_2)$.
By Lemma~\ref{lemma:lefttightmatching1} there exists a $\delta$-tight matching from $X$ to $\tau$ and another one from $X$ to $\pi$. We construct a monotone path in the $\delta$-free space of $\tau$ and $\pi$ from these two tight matchings.
In particular, we first specify diagonal segments of the constructed path, which lie in the $0$-free space, and then connect these segments with horizontal, resp., vertical segments.
Let $S \subset [0,1]$ be the finite set of parameter values of $X$, which correspond to the horizontal segments of the tight matching from $X$ to $\pi$. Let $Q \subset [0,1]$ be the finite set of parameters of the horizontal segments of the tight matching from $X$ to $\tau$. Let $y_1 < \dots < y_{r}$ be the sorted list of the values $S \cup Q$ (without multiplicities). 
For any interval $y_i,y_{i+1}$ in this list, there exists a diagonal segment in both tight matchings that covers the entire interval in the $y$-direction. That is, the tight matching matches $X[y_i,y_{i+1}]$ to a subcurve on $\tau$ and a subcurve on $\pi$ that are identical. Let $\tau[t,t']$ and $\pi[p,p']$ be these subcurves. Let $\lambda_{i}=\overline{(t,p)(t',p')}$ be the corresponding diagonal segment of the $\delta$-free space of $\tau$ and $\pi$. Since the two subcurves are identical, $\lambda_i$ is part of the $0$-free space.
We obtain a set of diagonal segments in the $0$-free space, which we intend to connect to piecewise-linear path where every edge is of one of three types:
\begin{inparaenum}[(i)]
\item a diagonal edge contained in the $0$-free space,
\item a horizontal edge,
\item a vertical edge.
\end{inparaenum}
For connecting two diagonal segments $\lambda_i$ and $\lambda_{i+1}$, there are three cases:
\begin{compactenum}[(i)]
 \item $y_{i+1} \in S$ and $y_{i+1} \notin Q$: in this case $
 \lambda_i$ and $\lambda_{i+1}$ can be connected by a horizontal line segment.
 \item $y_{i+1} \notin S$ and $y_{i+1} \in Q$: in this case $
 \lambda_i$ and $\lambda_{i+1}$ can be connected by a vertical line segment.
 \item $y_{i+1} \in S$ and $y_{i+1} \in Q$: in this case $\lambda_i$ and $\lambda_{i+1}$ can be connected by a horizontal line segment followed by a vertical line segment.
\end{compactenum}
From this, we obtain a monotone path in the $\delta$-free space of $\pi$ and $\tau$ from $(0,0)$ to $(1,1)$.
  \end{proof}

\subsection{The data structure}
\paragraph{The data structure} 
The input consists of a set $\Pi$ of $n$ curves in $\RR$, 
and the approximation error $\eps>0$. As before, we assume that the distance threshold is $r:=1$ (otherwise we can uniformly scale the input).
To discretize the query space, we use the regular grid $\GGG_{w}:=\{i\cdot w \mid i\in \ZZ\}$, where $w := \eps/2$. Let $\HHH$ be a dictionary which is initially empty. 
For each input curve $\pi \in \Pi$, with vertices $\VVV(\pi)=v_1,\ldots,v_m$, we set $r_i=[v_i-4-w,v_i+4+w]$, for $i\in [m]$, and 
we compute a set $\CCC':=\CCC'(\pi)$ which contains all curves  
with at most $k$ vertices such that each vertex belongs to some $ r_i\cap \GGG_w$ and the vertices are ordered in the order of $i$. More formally, 
\[
\CCC'=\{\langle u_1,\ldots,u_{\ell}  \rangle \mid \ell \leq k, \exists i_1,\ldots,i_{\ell} \text{ s.t. } i_1\leq \dots \leq i_{\ell}  \text{ and } \forall j\in [\ell]~ u_j \in r_{i_j}\cap \GGG_w \}. 
\]
 Next, we filter $\CCC'$ to obtain the set $\CCC(\pi):=\{\sigma \in \CCC' \mid \d_F(\sigma,\pi)\leq 1+w\}$.

 We store $\CCC(\pi)$ in $\HHH$ as follows: for each $\sigma \in \CCC(\pi)$, we use as  key   the sequence of its vertices $\VVV(\sigma)$: 
if $\VVV(\sigma)$ is not already stored in $\HHH$, then we insert the pair ($\VVV(\sigma)$,$\pi$)
into $\HHH$.  
The total space required is $\OO\left(n\cdot \max_{\pi \in \Pi} |\CCC(\pi)|\right)$. 

\paragraph{The query algorithm}
    For a query curve $\tau$, the algorithm \texttt{query}($\tau$) first 
    computes the $1$-signature of ${\tau}$, namely ${\sigma}$, and then enumerates all possible curves ${\tau}_{key}$ which are produced from ${\tau}$ by deleting vertices that are not in ${\sigma}$. For each possible ${\tau}_{key}$, 
    we compute $\tilde{\tau}_{key}:=\langle g_w(v_1),\ldots,g_w(v_{\ell}) \rangle$, where for any $x\in \RR$, $g_w(x)$ is the nearest point of $x$ in $\GGG_w$. For each $\tilde{\tau}_{key}$
    we perform  a lookup in $\HHH$, with key $\VVV(\tilde{\tau}_{key})$: if $\VVV(\tilde{\tau}_{key})$ is stored in $\HHH$ then we return the associated curve. If there is no $\tilde{\tau}_{key}$ such that $\VVV(\tilde{\tau}_{key})$ is stored in $\HHH$ then the algorithm  returns “no”.

\subsection{Pseudocode}
\label{sectionappendix:new_pseudocode2}

\algorithmus{\texttt{preprocess}(set of time series $\Pi$, $\eps>0$)}{

\begin{algorithmic}[1]
\Statex \Comment{$k$ is assumed to be a variable with global scope}
     \State Initialize empty dictionary $\HHH$ 
     \State $w\gets \eps/2$ 
     \For {{\bf each} $\pi \in \Pi$}
     \State $\CCC(\pi)\gets$ \texttt{generate\_candidates}($\pi$, $w$)
     \If{ $\CCC(\pi)\neq \emptyset$}
     \For{{\bf each} $\sigma_{\tau}\in \CCC(\pi)$}
      \If{$\VVV(\sigma_{\tau})$ not in $\HHH$}
     \State insert key $\VVV(\sigma_{\tau})$ in $\HHH$, associated with a pointer to $\pi$  
     \EndIf
     \EndFor
     \EndIf
     \EndFor
\end{algorithmic}
}

    \algorithmus{\texttt{generate\_candidates}(time series $\pi$ with $\VVV(\pi)=v_1,\ldots,v_m$, $w>0$)}{
\begin{algorithmic}[1]
\For{ {\bf each} $i=1,\ldots,m$}
\State $ r_i\gets  [v_i-4-w,v_i+4+w] $
\EndFor

\State $\CCC' \gets \emptyset$ 
\For{{\bf each} $j=1,\ldots, m $}
\For{{\bf each} $p \in r_j\cap \GGG_w$}
\State \texttt{generate\_sequences}($\langle p\rangle$, $j$, $w$, $\CCC'$) 
\EndFor
\EndFor
\State $\CCC (\pi) \gets \emptyset$
\For{ {\bf each} $\sigma\in \CCC'$}
\If{$\d_F({\pi},\sigma)\leq 1+w$} 
\State $\CCC(\pi) \gets \CCC(\pi) \cup \{ \sigma\}$
\EndIf
\EndFor
\State {\bf  return} $\CCC (\pi) $
\end{algorithmic}
}

 \algorithmus{\texttt{generate\_sequences}(time series $\sigma$, integer $i$, $w>0$, returned set $\CCC'$)}{
\begin{algorithmic}[1]
\Statex \Comment{Stores in $\CCC'$ all possible time series which begin with $\sigma$, have at most $k$ vertices that belong to  $r_j\cap \GGG_{w}$, for $j=i,\ldots, m$, and appear in them in the order of $j$.}
\State $v_1,\ldots,v_t \gets \VVV(\sigma)$
\If {$|\VVV(\sigma)|\leq k$}
\State $\CCC'\gets \CCC' \cup \{ \sigma\}$
\EndIf
\If{$|\VVV(\sigma)|<k$}
\For{{\bf each} $j=i,\ldots, m $}
\For{{\bf each} $p \in r_j\cap \GGG_w$}
\State $\sigma' \gets \langle v_1,\ldots,v_t,p\rangle $ 
\State \texttt{generate\_sequences}($\sigma'$, $j$,$w$, $\CCC'$)
\EndFor
\EndFor
\EndIf
\end{algorithmic}
}

\algorithmus{\texttt{query}(time series $\tau$)}{
\begin{algorithmic}[1]
\Statex \Comment{$w= \eps/2$ is fixed during preprocessing}
\State  ${\tau}(t_1),\ldots,{\tau}(t_{h}) \gets \VVV({\tau}) $
\State $S_{{\tau}} \gets \{t_1,\ldots,t_{h} \}$ \Comment{the set of parameters of vertices of ${\tau} $}
\State ${\sigma}\gets$ $1$-signature of ${\tau}$
\State $S_{{\sigma}} \gets \{s_1,\ldots,s_{\ell}  \}$ \Comment{the set of parameters of vertices of ${\sigma}$ as in $\tau$}
  \State initialize real variables $w_1,\ldots,w_h$
  \For{\textbf{each} $S'\subseteq S_{{\tau}}\setminus S_{{\sigma}}$}
  \State $j\gets 0$
  \For{\textbf{each} $i=1,\ldots,h$}
  \If{$t_i \in  S' \cup S_{{\sigma}}$}
      \State $j\gets j+1$
      \State $w_j \gets {\tau}(t_i)$
  \EndIf
  \EndFor
  \State ${\tau}_{key} \gets \langle w_1,\ldots,w_{j}\rangle $
  \State $\tilde{{\tau}}_{key} \gets$ snap ${\tau}_{key}$ to $\GGG_{w}$
    \If{ $\exists \pi \in \Pi$,  $ \tilde{\tau}_{key}\in \CCC(\pi)$ } \Comment{lookup in $\HHH$}
    \State{\bf report} $\pi$ \Comment{arbitrary $\pi$ s.t.\ $ \tilde{\tau}_{key}\in \CCC(\pi)$ }
  \EndIf
  \EndFor
  \State {\bf report} “no”.
\end{algorithmic}
}

\subsection{Analysis}
We now prove correctness of the query algorithm.

\begin{lemma} 
\label{lemma:correctnessbruteforcequery}
If \texttt{query}($\tau$) returns an input curve $\pi' \in \Pi$, then $\d_F(\pi',\tau)\leq 2+\eps$. If \texttt{query}($\tau$) returns ``no'', then there is no $\pi \in \Pi$ such that $\d_F(\pi,\tau) \leq 1$. 
\end{lemma}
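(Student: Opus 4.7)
The plan is to prove the two implications in Lemma~\ref{lemma:correctnessbruteforcequery} separately.

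For the first implication, assume the query returns some $\pi' \in \Pi$. Then the algorithm has found a snapped curve $\tilde{\tau}_{key}$ whose vertex sequence matches a key of $\HHH$ pointing to $\pi'$, which by construction means $\tilde{\tau}_{key} \in \CCC(\pi')$, so $d_F(\tilde{\tau}_{key},\pi') \leq 1+w$. The underlying $\tau_{key}$ is obtained from $\tau$ by removing a subset of non-signature vertices (the complement of $S' \cup S_\sigma$), so Theorem~\ref{lemma:deleteanynonsignaturevertex} with $\delta=1$ gives $d_F(\tau_{key},\tau)\leq 1$. Vertex-wise grid snapping moves each vertex by at most $w/2$, and properties~(i)--(ii) of the preliminaries then imply $d_F(\tilde{\tau}_{key},\tau_{key})\leq w/2$. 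The triangle inequality closes the argument:
\[
d_F(\pi',\tau) \;\leq\; (1+w) + w/2 + 1 \;=\; 2 + 3w/2 \;\leq\; 2+\eps,
\]
using $w=\eps/2$.

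For the second implication I argue by contrapositive: I assume there is some $\pi \in \Pi$ with $d_F(\pi,\tau) \leq 1$ and will exhibit a subset $S' \subseteq S_\tau \setminus S_\sigma$ whose resulting $\tilde{\tau}_{key}$ lies in $\CCC(\pi)$, so that the lookup in $\HHH$ succeeds. Define $B^* := \bigcup_{i=1}^{m} [v_i - 4 - w/2,\, v_i + 4 + w/2]$, a slightly narrower neighborhood than $\bigcup_i r_i$ around the vertices $v_1,\dots,v_m$ of $\pi$, and let $\tau_{key}$ be $\tau$ with exactly those non-signature vertices deleted that lie outside $B^*$. I will verify:
\begin{compactenum}[(a)]
\item $d_F(\tau_{key},\pi) \leq 1$, by iteratively applying Theorem~\ref{theorem:deletionofvertices} with the roles of $\pi$ and $\tau$ swapped and using a $1$-signature of $\pi$, whose ranges (of widths $1$ and $4$) are contained in $B^*$, so every deleted vertex lies safely outside those signature ranges;
\item each vertex of $\tau_{key}$ lies in $B^*$: the non-signature ones by construction, and each signature vertex of $\tau$ because Lemma~\ref{lemma:signatures2} (roles swapped) places it within distance $1$ of some vertex of $\pi$;
\item $|\tau_{key}| \leq k$ holds trivially.
\end{compactenum}
After snapping, every vertex of $\tilde{\tau}_{key}$ lies in $\bigcup_i [v_i - 4 - w,\, v_i + 4 + w] = \bigcup_i r_i$, and $d_F(\tilde{\tau}_{key},\pi) \leq w/2 + 1 \leq 1 + w$.

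The remaining, and I expect most delicate, step is to certify $\tilde{\tau}_{key} \in \CCC'$ by producing a non-decreasing index sequence $i_1 \leq \dots \leq i_\ell$ with the $j$-th vertex of $\tilde{\tau}_{key}$ lying in $r_{i_j} \cap \GGG_w$. Overlapping or long-edge ranges can make the choice of $v_i$ to charge each vertex to ambiguous; I will resolve it by reading off a consistent assignment from a monotone path realizing $d_F(\tau_{key},\pi) \leq 1$ in the parametric space of $\tau_{key}$ and $\pi$, which imposes a non-decreasing ``position along $\pi$'' on successive vertices of $\tau_{key}$. Once this is done, $\tilde{\tau}_{key} \in \CCC(\pi)$ and its vertex sequence is stored as a key of $\HHH$ during preprocessing. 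The query algorithm enumerates $S' = \{t \in S_\tau \setminus S_\sigma \mid \tau(t) \in B^*\}$ among its subsets, constructs precisely this $\tilde{\tau}_{key}$, and succeeds in the lookup, so it does not report ``no''.
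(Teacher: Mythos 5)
Your first implication is correct and is essentially the paper's argument (your snapping bound of $w/2$ is even slightly tighter than the $w$ the paper uses; either way $\d_F(\pi',\tau)\leq 2+\eps$). Your second implication also follows the same route as the paper: fix $S'$ to be the non-signature vertices lying near vertices of $\pi$, get $\d_F(\pi,\tau_{key})\leq 1$ from Theorem~\ref{theorem:deletionofvertices}, and place the signature vertices into the ranges via Lemma~\ref{lemma:signatures2}. You have correctly isolated the one genuinely problematic step.

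The problem is that the step you defer is not merely delicate --- for the $S'$ you chose it is false, so no matching-based argument can close it. Take $\pi=\langle 0,100,50,150\rangle$ and $\tau=\langle 0,49,48.5,100,50,150\rangle$, so that $\d_F(\pi,\tau)\leq 1/4$ and the $1$-signature of $\tau$ is $\langle 0,100,50,150\rangle$. The two non-signature vertices $49$ and $48.5$ lie in your $B^*$ (they are within distance $1.5$ of $v_3=50$), so your $\tau_{key}$ retains all six vertices. But $49$ lies only in the range $r_3$ (around $v_3=50$), while the later vertex $100$ lies only in $r_2$; hence no non-decreasing index assignment exists, $\tilde{\tau}_{key}\notin\CCC'$, and its key was never inserted into $\HHH$ (take $\Pi=\{\pi\}$). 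Your proposed fix --- reading indices off a monotone $1$-matching of $\tau_{key}$ and $\pi$ --- does not help: that matching sends $49$ to the point $\approx 48.75$ in the interior of the long edge $\overline{v_1 v_2}$, far from both $v_1$ and $v_2$, so the matched ``position along $\pi$'' certifies membership in no range, and the only range actually containing $49$ has an out-of-order index. (In this instance the lemma survives because the query also tries $S'=\emptyset$ and here $\d_F(\pi,\sigma)=0\le 1+w$; but in general $\d_F(\pi,\sigma)$ is only bounded by $2$ via the triangle inequality, which is exactly why the proof needs a nontrivial $S'$.) For what it is worth, the paper's own proof asserts $\tilde{\tau}_{key}^{\ast}\in\CCC'$ after checking the range condition only for the signature vertices, so it glosses over the same point; you have not misread the intended argument, but as set up it does not go through, and closing it requires either a more careful choice of which non-signature vertices to retain or a different certificate of membership in $\CCC'$.
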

\begin{proof}
For each $i\in [m]$, 
${r}_i:=[\pi(p_i)-4-w,\pi(p_i)+4+w]$, 
$r_i':=[\pi(p_i)-4,\pi(p_i)+4]$, 
where $w=\eps/2$ is the side-length of the grid $\GGG_w$. 
Let ${\sigma}$ be an $1$-signature of ${\tau}$ and let $\tilde{\sigma}$ be the curve obtained by snapping the vertices of $\sigma$ to the grid $G_w$, with $\VVV(\tilde{\sigma})=u_1,\ldots u_{\ell'}$. 
The query algorithm \texttt{query}($\tau$) enumerates all possible curves $\tau_{key}$ which are obtained by deleting any vertices from ${\tau}$ 
which are not vertices of $\sigma$. Let $T_{key}$ be the set of all curves $\tau_{key}$ that are considered by \texttt{query}($\tau$). 
For each curve $\tau_{key}\in T_{key}$, let $ \tilde{\tau}_{key}$ be the curve obtained by snapping the vertices of ${\tau}_{key}$ to the grid $\GGG_w$.  

We first show that if there exists a curve $\tau_{key}\in T_{key}$ and a curve $\pi \in \Pi$  such that $(\VVV(\tilde{\tau}_{key}),\pi)$ is stored in $\HHH$, then $\d_F(\pi,\tau)\leq 2+\eps$. By Lemma~\ref{lemma:deleteanynonsignaturevertex},  any curve $\tau_{key}\in T_{key}$ satisfies $\d_F({\tau},\tau_{key})\leq 1$ and by the triangle inequality $\d_F({\tau},\tilde{\tau}_{key})\leq 1+w$. Since    $(\VVV(\tilde{\tau}_{key}),\pi)$ is stored in $\HHH$, we have that $\tilde{\tau}_{key}\in \CCC(\pi) \implies \d_F(\pi,\tilde{\tau}_{key} )\leq 1+w$. By the triangle inequality
\[
\d_F(\pi,{\tau})\leq \d_F(\pi,\tilde{\tau}_{key})+\d_F(\tilde{\tau}_{key},{\tau}) \leq 2+2w. 
\]

We now show that if there exists $\pi \in \Pi$ such that $\d_F(\pi,\tau)\leq 1$ then there exists ${\tau}_{key}^{\ast}\in T_{key}$ such that the key  $\VVV(\tilde{\tau}_{key}^{\ast})$ is stored in $\HHH$, where $\tilde{\tau}_{key}^{\ast}$ is the curve obtained by snapping the vertices of $\tau_{key}^{\ast}$ to $\GGG_w$.  
Let ${\tau}_{key}^{\ast}$ be the curve obtained by deleting those vertices from ${\tau}$ which are not  vertices of ${\sigma}$ and do not belong to any range $r_i'$. This curve ${\tau}_{key}^{\ast}$ will be considered by \texttt{query}($\tau$), for $S'$ equal to the set of parameters defining vertices of ${\tau}$ which are not in ${\sigma}$ but are contained in  $\bigcup_{i=1}^m r_i'$. 
By Lemma~\ref{lemma:signatures2}, applied on ranges of radius $1$ centered at the vertices of  ${\sigma}$, there 
exist indices $i_1\leq i_2 \leq \ldots \leq i_{|\VVV({\sigma})|}$ such that for each vertex ${\tau}(s_j)$ of  ${\sigma}$, 
${\tau}(s_j) \in r_{i_j}'$.  
By the triangle inequality, there exist indices $i_1\leq i_2 \leq \ldots \leq i_{|\VVV(\tilde{\sigma})|}$ such that for each vertex $u_j$ of  $\tilde{\sigma}$, $u_j \in r_{i_j}$.
Hence, $\tilde{\tau}_{key}^{\ast} \in \CCC'$, where $\CCC'$ is the preparatory set of candidates computed by  \texttt{generate\_candidates}($\pi$). 
Moreover, 
Theorem~\ref{theorem:deletionofvertices} implies that $\d_F(\pi,{\tau}_{key}^{\ast})\leq 1$, because  ${\tau}_{key}^{\ast}$ obtained by deleting vertices of ${\tau}$ which do not belong to any $r_i$.  Hence,  by the triangle inequality, \[\d_F(\pi,\tilde{\tau}_{key}^{\ast})\leq \d_F(\pi,{\tau}_{key}^{\ast})+\d_F(\tilde{\tau}_{key}^{\ast},{\tau}_{key}^{\ast})\leq 1+w \implies  \tilde{\tau}_{key}^{\ast}\in \CCC(\pi),\] where $\CCC(\pi)$ is the final set of candidates as computed and stored by \texttt{generate\_candidates}($\pi$).  Therefore, 
 $\VVV(\tilde{\tau}_{key}^{\ast})$ is stored in $\HHH$, associated with some curve $\pi' \in \Pi $ which satisfies $\d_F(\pi',\tau) \leq 2+\eps$.
\end{proof}

\thmANNFD*
\begin{proof}
By Lemma~\ref{lemma:correctnessbruteforcequery}, the query algorithm returns a correct answer for the ANN problem with distance threshold $r=1$ and approximation factor $1+\eps$. It remains to analyze the complexity of the data structure. 

The space required for each input curve is upper bounded by the number of candidates computed in the preprocessing phase. 
Indeed, we will show now that $|\CCC'|\leq \OO\left( \frac{m}{k\eps}\right)^k$. Recall that the curves in $|\CCC'|$ have vertices in the ranges $r_i\cap \GGG_{w}$, $i=1,\ldots,m$, where $w=
\eps/2$, and the vertices  respect the order of $i$.  In particular, \texttt{generate\_sequences} adds at most one curve to $\CCC'$ for each possible sequence of vertices in   $r_i\cap \GGG_{w}$, $i=1,\ldots,m$, that satisfy the order of $i$. 
If we fix the choices of  $t_1,\ldots,t_{m}$, where each $t_i$ denotes the number of vertices in $r_i\cap \GGG_{w}$ to be  used in the creation of those curves, we can produce at most $\prod_{i=1}^{m} |r_i\cap \GGG_{w}|^{t_i}$ distinct sequences of vertices of length $\sum_{i=1}^{m} t_i$ and hence at most $\prod_{i=1}^{m} |r_i\cap \GGG_{w}|^{t_i}$ curves of length at most $\sum_{i=1}^{m} t_i$. 
Hence, 
\begin{align*}
    |\CCC'| &\leq
\sum_{\substack{t_1+\ldots +t_{m}=k \\ \forall i:~ t_i\geq 0 \\t_1\geq 1,t_{m}\geq 1 }} \prod_{i=1}^{m} \left(\frac{4}{\eps}+2\right)^{t_i } \\ &\leq
\sum_{\substack{t_1+\ldots +t_{m}=k \\ \forall i:~ t_i\geq 0}} \left(\frac{4}{\eps}+2\right)^{k} \\ &\leq {{k+m-1}\choose{k}} \cdot
\left(\frac{4}{\eps}+2\right)^{k} \\ &=\OO\left(\frac{m}{k \eps}\right)^{k},
\end{align*}
which implies that the total storage is in $n\cdot \OO\left(\frac{m}{k \eps}\right)^{k}$.

For each input curve $\pi$, the time needed to 
compute $\CCC(\pi)$ is at most $\OO(|\CCC'|\cdot k\cdot m)$, because we need to compute the Fr\'echet distance between $\pi$ and any curve of $\CCC'$.  
Recall that we employ perfect hashing for $\HHH$, and snapping a curve costs $\OO(k)$ time assuming that a floor function operation needs $\OO(1)$ time.  Hence, the total expected preprocessing time is $\OO(nm)\cdot \OO\left(\frac{m}{k \eps}\right)^{k}$.

 To bound the query time we need to  upper bound the number of distinct curves $\tau_{key}$ which are computed by \texttt{query}($\tau$) in the worst case. There are at most $2^k$ such sets, and for each one of them, we probe the hashtable in $\OO(k)$ time. 
Hence, the total query time is $\OO(k\cdot 2^k)$. 
  \end{proof}

\section{An $\OO(k)$-ANN data structure with  near-linear space}\label{sec:lsh}

In this section we give the data structure for 
Theorem~\ref{theorem:largeapproximation}. 
The data structure has approximation factor of order $\OO(k)$, but it uses space in $\OO(n\log n+nm)$ and query time in $\OO(k \log n)$. Our main ingredient is a properly-tuned randomly shifted grid: 
\label{ss:randomgrids}
Let $w>0$ be a fixed parameter and $z$ chosen uniformly at random from the set $[0,w]$. The function 
$g_{w,z}(x)=\left\lfloor {w}^{-1}({x-z}) \right\rfloor$
induces a random partition of the line.  

\paragraph{The data structure} 
The input consists of a set $\Pi$ of $n$ curves in $\RR$. As before, we assume that the distance threshold is $r:=1$. Let $w=48k$. 
We build $\notables=O(\log n)$ dictionaries $\HHH_1,\ldots, \HHH_{\notables}$ which are initially empty. For each $i\in [\notables]$, we sample $z_i$ uniformly and independently at random from $[0,w]$. 
For each input curve $\pi \in \Pi$, we compute its  $1$-signature $\sigma_{\pi}$, with vertices $\VVV(\sigma_{\pi})=v_1,\ldots,v_{\ell}$, and  for each $i\in [\notables]$ we compute the curve  $\sigma_{\pi|i}'=\langle g_{w,z_i}(v_1),\ldots, g_{w,z_i}(v_{\ell}) \rangle$. For each $\pi \in \Pi$, such that $|V(\sigma_{\pi})|\leq k$, we use as key in $\HHH_i$  the sequence of  vertices $\VVV(\sigma_{\pi|i}')$: if $\VVV(\sigma_{\pi|i}')$ is not already stored in $\HHH_i$, then we insert the 
pair  $(\VVV(\sigma_{\pi|i}'), \pi)$.

\paragraph{The query algorithm} When presented with a query curve $\tau$, with vertices $u_1,\ldots,u_k$, we compute for each $i\in [\notables]$, the curve 
$\tau_i'=\langle g_{w,z_i}(u_1),\ldots,g_{w,z_i}(u_k) \rangle$. Then, for each $i\in [\notables]$, we perform a lookup in   $\HHH_i$ with the key $\VVV({\tau_i}')$ and return the result: if $\exists i \in [\notables]$ such that  $\VVV({\tau_i}')$ is stored in $\HHH_i$ then we return the curve associated with it.  Otherwise we return “no”. (Recall that $\VVV({\tau_i}')$ only retains the maxima and minima of the sequence $g_{w,z_i}(u_1),\ldots,g_{w,z_i}(u_k)$.) 

Figure~\ref{fig:gridalgorithm} shows an example of how keys are computed, both in the case of input curves and in the case of query curves. 

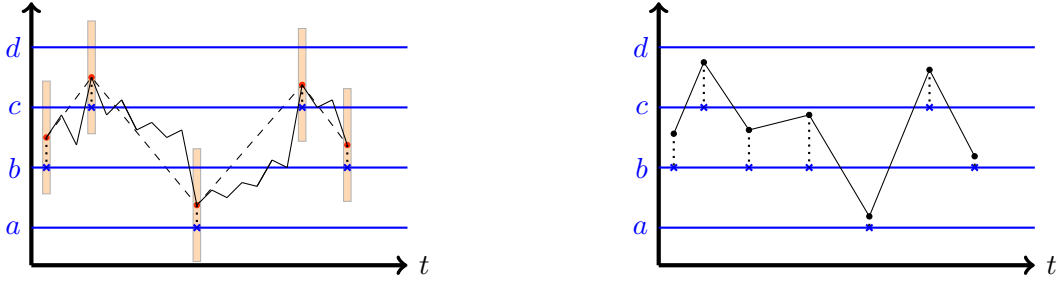
\begin{figure}
    \begin{subfigure}[t]{.5\textwidth}
    \centering
    \begin{tikzpicture}

   \newcommand\T{2}
    \draw[->,ultra thick] (0,-0.5)--(5,-0.5) node[right]{$t$};
    \draw[->,ultra thick] (0,-0.5)--(0,3) node[above]{};
    \coordinate (a) at (0.1*\T,1.2); 
    \coordinate (b) at (0.4*\T,2);
    \coordinate (c) at (1.1*\T,0.3);
    \coordinate (d) at (1.8*\T,1.9);
    \coordinate (e) at (2.1*\T,1.1);
    \filldraw [red] (a) circle (1pt);
    \filldraw [red] (b) circle (1pt);
    \filldraw [red] (c) circle (1pt);
    \filldraw [red] (d) circle (1pt);
    \filldraw [red] (e) circle (1pt);
    \draw[dashed] (a)-- (b)--(c)--(d)--(e);

    \coordinate (AL) at ($(a)-(0,0.8)$);
    \draw[fill=orange,opacity=0.3] ($(a)-(0.05,0.75)$) rectangle ++(0.1,1.5);
    \draw[fill=orange,opacity=0.3] ($(b)-(0.05,0.75)$) rectangle ++(0.1,1.5);
    \draw[fill=orange,opacity=0.3] ($(c)-(0.05,0.75)$) rectangle ++(0.1,1.5);
    \draw[fill=orange,opacity=0.3] ($(d)-(0.05,0.75)$) rectangle ++(0.1,1.5);
    \draw[fill=orange,opacity=0.3] ($(e)-(0.05,0.75)$) rectangle ++(0.1,1.5);
   
   \draw[-,thick, blue]  (0,0)--(5,0) node[pos=0,left]{$a$};
   \draw[-,thick, blue]  (0,0.8)--(5,0.8) node[pos=0,left]{$b$};
    \draw[-,thick , blue ]  (0,1.6)--(5,1.6) node[pos=0,left]{$c$};
    \draw[-,thick , blue ]  (0,2.4)--(5,2.4) node[pos=0,left]{$d$};
    
    \draw (0.1*\T,0.8) node[cross=2pt,rotate=0,blue,thick] {};
    \draw (0.4*\T,1.6) node[cross=2pt,rotate=0,blue,thick] {};
    \draw (1.1*\T,0) node[cross=2pt,rotate=0,blue,thick] {};
    \draw (1.8*\T,1.6) node[cross=2pt,rotate=0,blue,thick] {};
   \draw (2.1*\T,0.8) node[cross=2pt,rotate=0,blue,thick] {};
   \draw[dotted,thick] (a)--(0.1*\T,0.8);
   \draw[dotted,thick] (b)--(0.4*\T,1.6);
    \draw[dotted,thick] (c) -- (1.1*\T,0);
    \draw[dotted,thick] (d) -- (1.8*\T,1.6);
    \draw[dotted,thick] (e) -- (2.1*\T,0.8);

    \draw[-] (0.1*\T,1.2)--(0.2*\T,1.5) node[]{};
    
    \draw[-] (0.2*\T,1.5)--(0.3*\T,1.1) node[]{};
    
    \draw[-] (0.3*\T,1.1)--(0.4*\T,2) node[]{};
    \draw[-] (0.4*\T,2)--(0.5*\T,1.5) node[]{};
    \draw[-] (0.5*\T,1.5)--(0.6*\T,1.7) node[]{};
    \draw[-] (0.6*\T,1.7)--(0.7*\T,1.3) node[]{};
    \draw[-] (0.6*\T,1.7)--(0.7*\T,1.3) node[]{};
        \draw[-] (0.7*\T,1.3)--(0.8*\T,1.4) node[]{};
            \draw[-] (0.8*\T,1.4)--(0.9*\T,1.2) node[]{};
         \draw[-] (0.9*\T,1.2)--(1*\T,1.3) node[]{};
         \draw[-] (1*\T,1.3)--(1.1*\T,0.3) node[]{};
         \draw[-] (1.1*\T,0.3)--(1.2*\T,0.5) node[]{};
         \draw[-] (1.2*\T,0.5)--(1.3*\T,0.4) node[]{}; 
         \draw[-] (1.3*\T,0.4)--(1.4*\T,0.6) node[]{};
             \draw[-] (1.4*\T,0.6)--(1.5*\T,0.55) node[]{};
             \draw[-] (1.5*\T,0.55)--(1.6*\T,0.9) node[]{};                \draw[-] (1.5*\T,0.55)--(1.6*\T,0.9) node[]{};               \draw[-] (1.6*\T,0.9)--(1.7*\T,0.8) node[]{};  
             \draw[-] (1.7*\T,0.8)--(1.8*\T,1.9)node[]{};  
             \draw[-](1.8*\T,1.9)--(1.9*\T,1.6) node[]{};  
            \draw[-](1.9*\T,1.6)--(2*\T,1.7) node[]{};  
             \draw[-](2*\T,1.7)--(2.1*\T,1.1) node[]{};  
            
    \end{tikzpicture}
    \caption{An input time series $\pi$. The red points are vertices of its $\delta$-signature $\sigma_{\pi}$, and the orange rectangles correspond to ranges of radius $\delta$.}
    \end{subfigure}
     \begin{subfigure}[t]{.5\textwidth}
     \centering
    
        \begin{tikzpicture}
   \newcommand\T{2}
   \newcommand\Su{0.2}
   \newcommand\Sd{-0.15}
   \coordinate (a) at (0.1*\T,1.4+\Sd); 
   \coordinate (b) at (0.3*\T,2+\Su);
   \coordinate (c) at (0.6*\T,1.3);
    \coordinate (d) at (1*\T,1.5);
   \coordinate (e) at (1.4*\T,0.3+\Sd);
   \coordinate (f) at (1.8*\T,1.9+\Su);
   \coordinate (g) at (2.1*\T,1.1+\Sd);
     \draw[->,ultra thick] (0,-0.5)--(5,-0.5) node[right]{$t$};
    \draw[->,ultra thick] (0,-0.5)--(0,3) node[above]{};
    \filldraw [black] (a) circle (1pt);
    \filldraw [black] (b) circle (1pt);
    \filldraw [black] (c) circle (1pt);
    \filldraw [black] (d) circle (1pt);
    \filldraw [black] (e) circle (1pt);
    \filldraw [black] (f) circle (1pt);
     \filldraw [black] (g) circle (1pt);
    
   \draw[-,thick, blue]  (0,0)--(5,0) node[pos=0,left]{$a$};
   \draw[-,thick, blue]  (0,0.8)--(5,0.8) node[pos=0,left]{$b$};
    \draw[-,thick , blue ]  (0,1.6)--(5,1.6) node[pos=0,left]{$c$};
    \draw[-,thick , blue ]  (0,2.4)--(5,2.4) node[pos=0,left]{$d$};

     \draw (0.1*\T,0.8) node[cross=2pt,rotate=0,blue,thick] {};
    \draw (0.3*\T,1.6) node[cross=2pt,rotate=0,blue,thick] {};
     \draw ( 0.6*\T,0.8) node[cross=2pt,rotate=0,blue,thick] {};
    \draw ( 1*\T,0.8) node[cross=2pt,rotate=0,blue,thick] {};
    \draw (1.4*\T,0) node[cross=2pt,rotate=0,blue,thick] {};
    \draw (1.8*\T,1.6) node[cross=2pt,rotate=0,blue,thick] {};
   \draw (2.1*\T,0.8) node[cross=2pt,rotate=0,blue,thick] {};
   \draw[dotted,thick] (a)--(0.1*\T,0.8);
   \draw[dotted,thick] (b)--(0.3*\T,1.6);
    \draw[dotted,thick] (c) -- ( 0.6*\T,0.8);
    \draw[dotted,thick] (d) -- ( 1*\T,0.8);
    \draw[dotted,thick] (e) -- (1.4*\T,0);
    \draw[dotted,thick] (f) -- (1.8*\T,1.6);
    \draw[dotted,thick] (g) -- (2.1*\T,0.8);

    
    \draw[-] (a)--(b)--(c)--(d)--(e)--(f)--(g);  
           
    \end{tikzpicture}
    
    \caption{A query time series $\tau$.}
    \end{subfigure}
    \caption{
    Blue lines correspond to grid points. Each vertex is snapped to a grid point. Snapping $\VVV(\sigma_{\pi})$ to the grid produces the sequence $b,c,a,c,b$. The key is $\VVV(\sigma_{\pi}')=\VVV(\langle b,c,a,c,b \rangle) =  b,c,a,c,b $. 
    Snapping $\VVV(\tau)$ to the grid produces the sequence $b,c,b,b,a,c,b$. The key is $\VVV(\tau')=\VVV(\langle b,c,b,b,a,c,b \rangle)=b,c,a,c,b$. The randomly shifted grid has been successfully chosen,
    since $\d_F(\pi,\tau)\leq \delta$ and the two keys are identical.}
    
    \label{fig:gridalgorithm}
\end{figure}

\subsection{Analysis}
We begin with a standard bound on the probability that a randomly shifted grid stabs a given interval.
\begin{lemma}
\label{claim:probset}
Let $X\subseteq \RR$ be a set such that $\diam(X)\leq \Delta$ and $w >0$. Then,
\[ 
 \Pr_z\left[ \exists x\in X~\exists y \in X:~g_{w,z}(x) \neq g_{w,z}(y) \right] \leq \frac{\Delta}{w}.
\]
\end{lemma}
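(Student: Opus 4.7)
The plan is to reduce the event in question to a simple geometric condition on the grid, then exploit the uniform distribution of the shift $z$. First I would set $a = \inf X$ and $b = \sup X$, so that $b - a \leq \Delta$. The cells of the partition induced by $g_{w,z}$ are the half-open intervals $[z+iw, z+(i+1)w)$ for $i \in \ZZ$, on each of which $g_{w,z}$ is constant. Hence the event $\{\exists x, y \in X : g_{w,z}(x) \neq g_{w,z}(y)\}$ occurs if and only if $X$ is not contained in a single cell, which in turn is equivalent to the event that the half-open interval $(a, b]$ contains some grid point of the form $z + iw$.

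Next I would bound the probability of this latter event. Since the grid points $\{z + iw : i \in \ZZ\}$ are spaced exactly $w$ apart, for any fixed $a$ there is precisely one grid point in $(a, a+w]$, and by the choice of $z$ uniform in $[0, w]$ this point is itself uniform on $(a, a+w]$. If $b - a \leq w$, the interval $(a, b]$ is a sub-interval of $(a, a+w]$ of length $b - a$, so the probability that this unique grid point lies in $(a, b]$ equals $(b-a)/w \leq \Delta/w$. If instead $b - a > w$, then $\Delta/w \geq 1$ and the claim is trivial.

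There is no real obstacle here; the only thing to be careful about is choosing the correct half-open convention (namely $(a, b]$ rather than $[a, b]$) when describing the bad event, so that the probability computation lines up exactly with the length of the interval divided by $w$. This matches the convention built into the floor function definition of $g_{w,z}$.
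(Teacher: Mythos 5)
Your proof is correct and follows essentially the same route as the paper's: reduce to the two extreme points $a$ and $b$ of $X$ and compute the probability that a grid boundary separates them, which is $(b-a)/w$ by uniformity of the shift. Your version is in fact slightly more careful than the paper's one-line argument, since you handle the half-open conventions and the degenerate case $b-a>w$ explicitly.
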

\begin{proof}
 Let $a,b\in \RR$. Then, 
 \[
 \Pr_{z} \left[\left\lfloor \frac{a-z}{w} \right\rfloor \neq \left\lfloor \frac{b-z}{w} \right\rfloor  \right] 
 = \frac{|a-b|}{w}.
 \]
 The claim then follows by setting $a=\min X$, $b=\max X$. 
  \end{proof}

First we focus on any two curves $\pi$, $\tau$ such that $\d_F(\pi,\tau)\leq \delta$. We show that any edge of $\tau$ which is matched to points in the same subcurve $\pi[p_i,p_{i+1}]$, where $p_i$, $p_{i+1}$ are the parameters that correspond to two consecutive signature vertices of $\pi$, and has the opposite direction of that of $\overline{\pi(p_i)\pi(p_{i+1})}$, must be short. This will allow us to argue that any such edge will likely collapse by snapping its vertices to a randomly shifted grid. 

\begin{lemma}\label{claim:weakdirectionpreserving}
Consider any two curves $\pi$, $\tau$ in $\RR$ such that $\d_F(\pi,\tau)\leq \delta$. Let  $\sigma_{\pi}=\pi(p_1),\ldots,\pi(p_{\ell})$ be a $\delta$-signature of $\pi$. Let $0\leq t_1<t_2\leq 1$ be parameters such that each of $\tau(t_1)$, $\tau(t_2)$ is matched with at least one point in $\pi[p_i,p_{i+1}]$, for some $i\in [\ell-1]$, by 
an optimal matching.  Then,
\begin{itemize}
    \item if $\pi(p_i)<\pi(p_{i+1})$ then $\tau(t_2)\geq \tau(t_1)-4\delta$,
    \item if $\pi(p_i)>\pi(p_{i+1})$ then  $\tau\left(t_2\right)\leq \tau\left(t_1\right)+4\delta$.
\end{itemize}
\end{lemma}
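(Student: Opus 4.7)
The plan is to combine the direction-preserving clause of the signature definition (property (ii) of Definition~\ref{def:sig}) with the pointwise bound coming from the $\delta$-matching. First I fix an optimal matching $(f,g)$ realizing $d_F(\pi,\tau)\le \delta$, i.e.\ continuous non-decreasing $f,g:[0,1]\to[0,1]$ with $|\pi(f(\alpha))-\tau(g(\alpha))|\le \delta$ for every $\alpha$. By the hypothesis that both $\tau(t_1)$ and $\tau(t_2)$ are matched to points in $\pi[p_i,p_{i+1}]$, pick $\alpha_1,\alpha_2\in[0,1]$ with $g(\alpha_j)=t_j$ and $s_j:=f(\alpha_j)\in[p_i,p_{i+1}]$, so that $|\pi(s_j)-\tau(t_j)|\le \delta$ for $j=1,2$.

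Next I use monotonicity of the matching. Since $t_1<t_2$ and $g$ is non-decreasing, we must have $\alpha_1<\alpha_2$, and then $f$ non-decreasing yields $s_1\le s_2$. Hence $s_1$ and $s_2$ are ordered parameters inside the same signature edge $[p_i,p_{i+1}]$, which is precisely the setting in which the direction-preserving property applies.

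Then I apply property (ii) of Definition~\ref{def:sig} to $\sigma_\pi$ on the interval $[p_i,p_{i+1}]$: in Case 1 ($\pi(p_i)<\pi(p_{i+1})$) this gives $\pi(s_1)-\pi(s_2)\le 2\delta$, while in Case 2 ($\pi(p_i)>\pi(p_{i+1})$) it gives $\pi(s_2)-\pi(s_1)\le 2\delta$. Combining with the two $\delta$-bounds from the matching, in Case 1 I get
\[
\tau(t_2)\ \ge\ \pi(s_2)-\delta\ \ge\ \pi(s_1)-3\delta\ \ge\ \tau(t_1)-4\delta,
\]
and Case 2 is entirely symmetric, yielding $\tau(t_2)\le \tau(t_1)+4\delta$.

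I do not expect a genuine obstacle here: the lemma really is a direct composition of the $\delta$-matching inequality (twice, losing $2\delta$) with the $2\delta$ slack built into the signature's direction-preserving property. The only points requiring care are (a) verifying monotonicity of the parameters $s_1,s_2$ from monotonicity of $f,g$, and (b) checking that the direction-preserving clause is applicable for every index $i\in[1,\ell-1]$, including the boundary edges of the signature; the definition states the property for all such $i$, so no edge-case subtlety arises.
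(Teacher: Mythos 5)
Your proof is correct and follows essentially the same route as the paper's: pick the matched partners of $\tau(t_1),\tau(t_2)$ inside $[p_i,p_{i+1}]$, apply the direction-preserving property of the signature, and combine with the two $\delta$-bounds from the matching. Your explicit verification that $s_1\le s_2$ via monotonicity of $f,g$ is a detail the paper leaves implicit, but the argument is the same.
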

\begin{proof}
We prove the case $\pi(p_i)<\pi(p_{i+1})$. The second case is symmetric. Let $\phi$ be an optimal matching between $\pi$ and $\tau$. 
Let $p\in [p_i,p_{i+1}]$ be such that $\pi(p)$ is matched with $\tau(t_1)$ by $\phi$ and let $p'\in [p_i,p_{i+1}]$ be such that $\pi(p')$ is a point  matched with $\tau(t_2)$ by $\phi$. 
By the direction preserving property of $\delta$-signatures, if $\pi(p_i)<\pi(p_{i+1})$ then 
$\pi(p)-\pi(p')\leq 2\delta$. Since $|\pi(p)-\tau(t_{1})|\leq \delta$ and 
$|\pi(p')-\tau(t_2)|\leq \delta$, we have $\tau\left(t_2\right)\geq \tau\left(t_1\right)-4\delta$.
  \end{proof}
Lemma \ref{lemma:signatures2} shows that there exist vertices of $\tau$ which stab the intervals $[\pi(p_i)-\delta,\pi(p_{i})+\delta]$ in the order of $i$. The following claim shows that any subcurve of $\tau$ defined by two vertices of $\tau$ stabbing $[\pi(p_i)-\delta,\pi(p_{i})+\delta]$ and $[\pi(p_{i+1})-\delta,\pi(p_{i+1})+\delta]$ must be entirely contained in the interval $[\min \{\pi(p_i),\pi(p_{i+1}\}-2\delta,\max \{\pi(p_i),\pi(p_{i+1}\}+2\delta]$. In other words, $\tau$ must satisfy a weak analogue of the range property satisfied by signatures. 
\begin{lemma}\label{claim:weakinclusion}
Consider any two curves $\pi$, $\tau$ in $\RR$ such that $\d_F(\pi,\tau)\leq \delta$. Let  $\sigma_{\pi}=\pi(p_1),\ldots,\pi(p_{\ell})$ be a $\delta$-signature of $\pi$. Let $0=t_{j_1} < \dots < t_{j_{\ell}}=1$ be parameters corresponding to vertices of $\tau$ such that $\forall i \in[\ell]$, $|\tau(t_{j_i})-\pi(p_i) | \leq \delta$. Then, for each $i\in \{1,\ldots,\ell-1 \}$, 
\begin{itemize}
    \item if $\pi(p_i)$ is a local minimum, then for any  $x\in \tau[t_{j_{i}},t_{j_{i+1}}]$, it holds $x \geq \pi(p_i)-2\delta$,  
    \item if $\pi(p_i)$ is a local maximum, then for any  $x\in \tau[t_{j_{i}},t_{j_{i+1}}]$, it holds $x \leq \pi(p_i)+2\delta$.
\end{itemize}
\end{lemma}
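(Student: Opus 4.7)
The plan is to argue by contradiction, using an optimal $\delta$-matching to transfer information about values of $\tau$ into information about values of $\pi$ on specific parameter regions, where the range and direction-preserving properties of $\sigma_{\pi}$ force a lower bound. I focus on the local-minimum case; the local-maximum case follows by reversing inequalities throughout.

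Fix an optimal $\delta$-matching $\phi$ of $\pi$ and $\tau$, and for each $s\in[0,1]$ let $q(s)\in[0,1]$ denote a parameter on $\pi$ that $\phi$ pairs with $\tau(s)$, so $|\pi(q(s))-\tau(s)|\leq\delta$. Write $q_i:=q(t_{j_i})$ and $q_{i+1}:=q(t_{j_{i+1}})$. Monotonicity of $\phi$ gives $q_i\leq q(s)\leq q_{i+1}$ for every $s\in[t_{j_i},t_{j_{i+1}}]$, and the triangle inequality with the hypothesis $|\tau(t_{j_i})-\pi(p_i)|\leq\delta$ yields $|\pi(q_i)-\pi(p_i)|\leq 2\delta$, and similarly $|\pi(q_{i+1})-\pi(p_{i+1})|\leq 2\delta$.

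Now suppose, toward a contradiction, that some $s^*\in[t_{j_i},t_{j_{i+1}}]$ satisfies $\tau(s^*)<\pi(p_i)-2\delta$, so $\pi(q(s^*))<\pi(p_i)-\delta$. Since $\pi(p_i)$ is a local minimum of $\sigma_\pi$, the signature alternates and $\pi(p_{i-1}),\pi(p_{i+1})>\pi(p_i)$. Applying the range property on each of the edges $[p_{i-1},p_i]$ and $[p_i,p_{i+1}]$ (using the $\delta$-relaxed range at boundary indices $i\in\{1,\ell-1\}$) shows $\pi(y)\geq\pi(p_i)$ on $[p_{i-1},p_{i+1}]$, so $q(s^*)\notin[p_{i-1},p_{i+1}]$. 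By symmetry we may assume $q(s^*)>p_{i+1}$, which by monotonicity forces $q_{i+1}\geq q(s^*)>p_{i+1}$.

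From here I combine the three signature properties on the next edge $[p_{i+1},p_{i+2}]$ (on which $\pi$ is ``decreasing'' in the signature sense). The value constraint $\pi(q_{i+1})\geq\pi(p_{i+1})-2\delta$, the range property $\pi(q_{i+1})\in[\pi(p_{i+2}),\pi(p_{i+1})]$, and the direction-preserving bound applied between $q(s^*)<q_{i+1}$ together yield a lower bound of the form $\pi(q(s^*))\geq\pi(p_{i+1})-c\delta$ for a small explicit constant $c$. Invoking the minimum-edge-length inequality $\pi(p_{i+1})-\pi(p_i)>2\delta$ (with the boundary-relaxed version $>\delta$ if $i+1\in\{1,\ell-1\}$) then upgrades this to $\pi(q(s^*))\geq\pi(p_i)-\delta$, contradicting $\pi(q(s^*))<\pi(p_i)-\delta$. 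If the monotonicity only forces $q_{i+1}>p_{i+2}$ instead (so that $q_{i+1}$ lies in a later signature edge), the same argument iterates on the next edge, and the minimum-edge-length property guarantees termination after a bounded number of steps.

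The main obstacle is the last step: rigorously converting the value-proximity constraints $|\pi(q_{i+1})-\pi(p_{i+1})|\leq 2\delta$ into parameter-proximity constraints strong enough to pin $q(s^*)$ to a region of $\pi$ where the signature's range property precludes $\pi<\pi(p_i)-\delta$. This requires a careful interplay of the range, direction-preserving, and minimum-edge-length conditions and a case split on whether $q_i$ and $q_{i+1}$ straddle additional signature edges beyond $[p_{i-1},p_i]\cup[p_i,p_{i+1}]\cup[p_{i+1},p_{i+2}]$.
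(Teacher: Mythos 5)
Your setup is sound and coincides with the paper's opening moves: fix an optimal matching, use the triangle inequality to get $|\pi(q_i)-\pi(p_i)|\leq 2\delta$, and use the range property to force the partner $q(s^*)$ of a putative low point out of $\pi[p_{i-1},p_{i+1}]$. But the step you yourself flag as ``the main obstacle'' is exactly where the proof lives, and the quantitative route you sketch for it does not close. Take the case $q(s^*)\in(p_{i+1},p_{i+2}]$ with $q_{i+1}$ on the same (decreasing) edge. Direction-preservation gives $\pi(q_{i+1})-\pi(q(s^*))\leq 2\delta$, and the triangle inequality through $\tau(t_{j_{i+1}})$ gives $\pi(q_{i+1})\geq \pi(p_{i+1})-2\delta$; together these yield only $\pi(q(s^*))\geq \pi(p_{i+1})-4\delta$, i.e.\ your constant is $c=4$. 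Combined with the minimum-edge-length bound $\pi(p_{i+1})>\pi(p_i)+2\delta$, this gives $\pi(q(s^*))>\pi(p_i)-2\delta$, which is perfectly compatible with the contradiction hypothesis $\pi(q(s^*))<\pi(p_i)-\delta$; you would need $c\leq 3$ for the argument to bite, and none of the three signature properties supplies the missing $\delta$ (the range property only bounds $\pi(q(s^*))$ from below by $\pi(p_{i+2})$, which is useless here). The ``symmetric'' case $q(s^*)<p_{i-1}$ is actually worse, not symmetric: the anchor there is $\pi(q_i)$, which is known only to within $2\delta$ of $\pi(p_i)$ (not of $\pi(p_{i-1})$), so the analogous chain bottoms out at $\pi(q(s^*))\geq\pi(p_i)-4\delta$. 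Iterating over further edges only accumulates more $2\delta$ losses, so the iteration you invoke cannot rescue the bound.

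The paper avoids this value-chasing entirely. It first localizes, in the parameter domain of $\tau$, where the matching sends the \emph{neighboring} signature vertices: $\pi(p_{i-1})$ is matched into $\tau[t_{j_{i-2}},t_{j_i}]$ and $\pi(p_{i+1})$ into $\tau[t_{j_i},t_{j_{i+2}}]$ (with adjustments for $i\in\{2,\ell-1\}$). It then argues that a point $x\in\tau[t_{j_{i-1}},t_{j_{i+1}}]$ with $x<\pi(p_i)-\delta$ would have no admissible partner at all: the range property forbids a partner inside $\pi[p_{i-1},p_{i+1}]$, and the localization of the partners of $\pi(p_{i\pm1})$ together with monotonicity and continuity of the matching forbids a partner outside it. The case $i=1$ (where only the weaker $-2\delta$ bound holds, via the relaxed range property at the first edge) is handled separately. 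If you want to complete your proof, you should replace the inequality chain on the edge $[p_{i+1},p_{i+2}]$ by this kind of combinatorial localization of the matching; as it stands, the crux of the lemma is asserted rather than proved.
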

\begin{proof}
An optimal matching of $\pi$ with $\tau$ matches each $\pi(p_{i})$, $i\in\{2,\ldots,\ell-1\}$, with points in $\tau[t_{j_{i-1}},t_{j_{i+1}}]$. This follows by the monotonicity of an optimal matching, the range property of $\delta$-signatures,  the minimum edge length property of $\delta$-signatures and the triangle inequality. 
 Suppose now that $\pi(p_i)$ is a local minimum. If $i\in \{3,\ldots,\ell-2\}$ then $\pi(p_{i-1})$ is matched with some point in $\tau[t_{j_{i-2}},t_{j_{i}}]$ and $\pi(p_{{i+1}})$ is matched with some point in $\tau[t_{j_{i}},t_{j_{i+2}}]$. If $i=2$ then $\pi(p_{i-1})$ is matched with $\tau(t_{j_{i-1}})$ and $\pi(p_{{i+1}})$ is matched with some point in $\tau[t_{j_{i}},t_{j_{i+2}}]$. If $i=\ell-1$ then  
 $\pi(p_{i-1})$ is matched with some point in $\tau[t_{j_{i-2}},t_{j_{i}}]$ and $\pi(p_{{i+1}})$ is matched with $\tau(t_{j_{i+1}})$.  However, if there exists a point $x$ in $\tau[t_{j_{i-1}},t_{j_{i+1}}]$  such that $x<\pi(p_i)-\delta$, then by the minimum edge length property and the range property of $\delta$-signatures, $x$ cannot be matched with any point in $\pi[p_{i-1},p_{i+1}]$. This implies that the matching is either non-continuous or non-optimal, leading to a contradiction.  
  For $i=1$, by the range property of $\delta$-signatures and and the triangle inequality we have that  for any  $x\in \tau[t_{j_{1}},t_{j_{2}}]$, it holds $x \geq \pi(p_1)-2\delta$. 
 The same arguments can be applied symmetrically when $\pi(p_i)$ is a local maximum.

  \end{proof}
\begin{lemma}
 \label{lemma:randomgrids1}
 Let $\pi$ be a curve in $\RR$ and let 
 $\sigma_{\pi}$ be a $\delta$-signature of $\pi$ with vertices $\pi(p_1),\ldots,\pi(p_{\ell})$. Let $\tau$ be a curve in $\RR$ with vertices $\tau(t_1),\ldots,\tau(t_{k})$. If $\d_F(\pi,\tau)\leq \delta$ then for the two curves $\sigma_\pi'=\langle g_{w,z}(\pi(p_1)),\ldots ,g_{w,z}(\pi(p_{\ell})) \rangle$, 
$\tau'=\langle g_{w,z}(\tau(t_1)),\ldots ,g_{w,z}(\tau(t_{k})) \rangle$ it holds $\VVV(\sigma_\pi')=\VVV(\tau')$ with probability at least $6k\delta/w$, where $z$ is chosen uniformly at random from $[0,w]$. 
\end{lemma}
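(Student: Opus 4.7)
The plan is to identify a collection of $O(k)$ critical intervals on $\RR$ whose total length is at most $6k\delta$, such that if the random grid $g_{w,z}$ cuts none of them, then $\VVV(\sigma_\pi')=\VVV(\tau')$; Lemma~\ref{claim:probset} and a union bound then yield the failure probability $6k\delta/w$.

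By Lemma~\ref{lemma:signatures2} applied to the ranges $r_i=[\pi(p_i)-\delta,\pi(p_i)+\delta]$, there exist indices $1=j_1<j_2<\cdots<j_\ell=k$ with $|\tau(t_{j_i})-\pi(p_i)|\leq \delta$ for each $i\in[\ell]$ (the endpoint choice is forced because $|\tau(0)-\pi(0)|,|\tau(1)-\pi(1)|\leq \delta$). I declare two kinds of critical intervals: the $\ell$ signature intervals $J_i:=r_i$ of length $2\delta$, and for every $a\in[k-1]$ with $|\tau(t_a)-\tau(t_{a+1})|\leq 4\delta$, the edge interval $E_a:=[\min\{\tau(t_a),\tau(t_{a+1})\},\max\{\tau(t_a),\tau(t_{a+1})\}]$ of length at most $4\delta$. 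The total length is at most $2\ell\delta+4(k-1)\delta\leq 6k\delta$ (using $\ell\leq k$), so by Lemma~\ref{claim:probset} and a union bound, the probability that $g_{w,z}$ cuts some critical interval is at most $6k\delta/w$.

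I then establish the core claim: if no critical interval is cut, $\VVV(\sigma_\pi')=\VVV(\tau')$. Because $J_i$ is not cut, $g_{w,z}$ is constant on it, so $g_{w,z}(\pi(p_i))=g_{w,z}(\tau(t_{j_i}))$ and the snaps at signature-aligned positions agree. Fix $i\in[\ell-1]$ and assume, without loss of generality, that $\pi(p_i)<\pi(p_{i+1})$. By Lemma~\ref{claim:weakdirectionpreserving}, any two vertices $\tau(t_a),\tau(t_b)$ with $t_{j_i}\leq t_a<t_b\leq t_{j_{i+1}}$ satisfy $\tau(t_b)\geq\tau(t_a)-4\delta$, so every descending edge of $\tau$ inside this subcurve has length at most $4\delta$, is therefore some declared $E_a$, and (being uncut) snaps to a single grid cell. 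Ascending edges trivially preserve the grid order since $g_{w,z}$ is monotone. Hence the restricted snapped sequence is non-decreasing, running from $g(\pi(p_i))$ to $g(\pi(p_{i+1}))$. Since the vertices of $\sigma_\pi$ alternate between local minima and maxima, the snapped subcurves on consecutive signature edges alternate monotone direction, so the local extrema of $\tau'$ coincide, in values and in order, with those of the signature sequence $\sigma_\pi'$.

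The main technical hurdle is the degenerate case $g(\pi(p_i))=g(\pi(p_{i+1}))$: the snapped subcurve of $\tau'$ on that signature edge is then forced to be a constant run (a monotone sequence with equal endpoints), so $\tau'$ and $\sigma_\pi'$ collapse identically there and the $\VVV$-sequences still agree. The boundary indices $i\in\{1,\ell-1\}$, where the signature's minimum-edge-length threshold drops from $2\delta$ to $\delta$, require a similar careful check before the sketch becomes a rigorous argument.
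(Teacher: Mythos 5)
Your overall architecture matches the paper's proof: the same two families of critical events (the $\ell$ signature ranges $r_i$ of width $2\delta$, and the edges of $\tau$ of length at most $4\delta$), the same union bound via Lemma~\ref{claim:probset} giving failure probability at most $6k\delta/w$, and the same concluding argument that the snapped sequence of $\tau$ is monotone between consecutive aligned signature vertices, so its local extrema coincide with those of $\sigma_\pi'$.

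There is, however, a genuine gap in the step where you argue that every descending edge of $\tau$ inside $\tau[t_{j_i},t_{j_{i+1}}]$ (for an ascending signature edge $\overline{\pi(p_i)\pi(p_{i+1})}$) has length at most $4\delta$. You obtain this by applying Lemma~\ref{claim:weakdirectionpreserving} to \emph{arbitrary} pairs of vertices with parameters in $[t_{j_i},t_{j_{i+1}}]$. But the hypothesis of that lemma is that both points are matched by an optimal matching to points of $\pi[p_i,p_{i+1}]$, and this need not hold for vertices near the ends of the subcurve: for instance $\tau(t_{j_i})$ lies in $r_i$ and may be matched only to points of $\pi$ preceding $p_i$. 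The paper therefore performs a case analysis on the location of the upper endpoint $\tau(t_1)$ of the descending edge: if $\tau(t_1)\in[\pi(p_i),\pi(p_{i+1})]\setminus(r_i\cup r_{i+1})$, monotonicity of the matching does force both endpoints into $\pi[p_i,p_{i+1}]$ and Lemma~\ref{claim:weakdirectionpreserving} applies; if $\tau(t_1)\in r_i$, one instead needs Lemma~\ref{claim:weakinclusion} (the weak range property near a local minimum) to bound the drop by $3\delta$; and if $\tau(t_1)\in r_{i+1}\setminus r_i$ a further sub-case split is required. The conclusion of your step is true, but as written it is not justified, and any correct justification must bring in Lemma~\ref{claim:weakinclusion}, which your proposal never uses. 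The issues you do flag at the end (the degenerate case $g_{w,z}(\pi(p_i))=g_{w,z}(\pi(p_{i+1}))$ and the boundary ranges with threshold $\delta$) are indeed only routine checks and are not where the real difficulty lies.
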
 
\begin{proof}

For each $i\in [\ell]$, we define $r_i:=[\pi(p_i)-\delta,\pi(p_i)-\delta  ]$.
Lemma~\ref{lemma:signatures2} implies that there exist parameters $0=t_{j_1} < \dots < t_{j_{\ell}}=1$ corresponding to vertices of $\tau$ such that $\forall i \in[\ell]$, $\tau(t_{j_i})\in r_i$. 
We first bound the length of edges of any $\tau[t_{j_i},t_{j_{i+1}}]$ which are directed backwards with respect to the direction of $\overline{\pi(p_i),\pi(p_{i+1})}$. We assume that $\pi(p_i)<\pi(p_{i+1})$, since the other case is symmetric. Let $t_1<t_2\in[t_{j_i},t_{j_{i+1}}]$ be two parameters corresponding to two consecutive vertices of $\tau[t_{j_i},t_{j_{i+1}}]$ such that $\tau(t_1)>\tau(t_2)$. Let $\phi$ be an optimal matching of $\pi$ with $\tau$. We consider three cases regarding the position of $\tau(t_1)$:
\begin{enumerate}[label=\roman*)]
    \item if $\tau(t_1)\in[\pi(p_i),\pi(p_{i+1}] \setminus (r_i \cup r_{i+1})$ then $\tau(t_1)$ can only be matched, by $\phi$, with points of $\pi[p_i,p_{i+1}]$ and since $\tau(t_2)<\tau(t_1)$, $\tau(t_2)$ can only be  matched by $\phi$ with points of $\pi[p_i,p_{i+1}]$.  Lemma~\ref{claim:weakdirectionpreserving} implies $|\tau(t_1)-\tau(t_2)|\leq 4\delta$.
    \item If $\tau(t_1)\in r_i$ then by Lemma~\ref{claim:weakinclusion} and the fact that $\tau(t_2)<\tau(t_1)$, we know that  $|\tau(t_1)-\tau(t_2)|\leq 3\delta$.
    \item If $\tau(t_1) \in r_{i+1} \setminus r_i$ then
    \begin{itemize}
        \item if $\tau(t_2)\in r_{i+1}$ then $|\tau(t_1)-\tau(t_2)|\leq 2\delta$. 
        \item if $\tau(t_2)\notin r_{i+1}$ then $\tau(t_2)$ can only be matched, by $\phi$,  with points in $\pi[p_i,p_{i+1}]$. Since $t_1<t_2$, we conclude that $\tau(t_1)$ can also be matched only with points from $\pi[p_i,p_{i+1}]$. Lemma~\ref{claim:weakdirectionpreserving} then implies $|\tau(t_1)-\tau(t_2)|\leq 4\delta$.
    \end{itemize}
\end{enumerate}   
Hence, the length of any edge of any sub-curve  $\tau[t_{j_i},t_{j_{i+1}}]$ which is directed backwards with respect to the direction of $\overline{\pi(p_i),\pi(p_{i+1})}$, has length at most $4\delta$.

For each $i\in [k-1]$, we define $A_i$ as the event that we have $g_{w,z}(\tau(t_i))=g_{w,z}(\tau(t_{i+1}))$ and $I_S\subseteq[k-1]$ denotes the set of indices $i$ such that $|\tau(t_i)-\tau(t_{i+1})|\leq 4\delta$.

For each $i\in [\ell]$, we define $B_i$ as the event that for any two points $x,y\in r_i$ we have $g_{w,z}(x)=g_{w,z}(y)$. We claim that if the event $S=\bigcap_{i\in I_S} A_i \cap \bigcap_{i=1}^{\ell} B_i $ occurs then $\VVV(\sigma_{\pi}')=\VVV(\tau')$.
The event $\bigcap_{i=1}^{\ell} B_i$ directly implies
 that for each $i\in [\ell]$,  $g_{w,z}(\pi(p_i))=g_{w,z}(\tau(t_{j_i}))$. Hence, applying $g_{w,z}(\cdot)$ to the vertices  $\VVV(\tau)$, we obtain a sequence $\VVV(\tau)'$ of the form \[g_{w,z}(\pi(p_1)),\ldots, g_{w,z}(\pi(p_2)),\ldots,g_{w,z}(\pi(p_{\ell})) .\] 
Now, consider any signature edge $\overline{\pi(p_i)\pi(p_{i+1})}$ and suppose that $\pi(p_i) \leq \pi(p_{i+1})$. 
The event $\bigcap_{i\in I_S} A_i $ implies that for any edge $\overline{\tau(t_1)\tau(t_2)}$ of $\tau[t_{j_i},t_{j_{i+1}}]$ with the opposite direction of that of $\overline{\pi(p_i)\pi(p_{i+1})}$, i.e.\ $\tau(t_2)<\tau(t_1)$, we have  $g_{w,z}(\tau(t_1))=g_{w,z}(\tau(t_2)$. Moreover,  $g_{w,z}(\cdot)$ is monotone, which implies that for any two consecutive vertices $\tau(t_1),\tau(t_2)$ in $\tau[t_{j_i},t_{j_{i+1}}]$, regardless of the their direction, we have $g_{w,z}(\tau(t_1))\leq g_{w,z}(\tau(t_2)$. The same arguments apply symmetrically in the case $\pi(p_i)> \pi(p_{i+1})$. In that case any two consecutive vertices $\tau(t_1),\tau(t_2)$ in $\tau[t_{j_i},t_{j_{i+1}}]$, satisfy $g_{w,z}(\tau(t_1))\geq g_{w,z}(\tau(t_2)$. Hence, the sequence $\VVV(\tau)'$ remains monotonic between  $g_{w,z}(\tau(t_{j_i}))=g_{w,z}(\pi(p_{i}))$ and  $g_{w,z}(\tau(t_{j_{i+1}}))=g_{w,z}(\pi(p_{i+1}))$, for any $i\in [\ell]$. This implies that there are no local extrema in $\tau'$ between $g_{w,z}(\tau(t_{j_i}))$ and $g_{w,z}(\tau(t_{j_{i+1}}))$, and hence  the two time series $\tau'$ and $\sigma_{\pi}'$ are identical.

We now upper bound the probability of the complementary event $\overline{S}$:  
\begin{align*}
\Pr\left[ \overline{S}\right]&= 
\Pr\left[ \bigcup_{i\in I_S} \overline{A}_i \cup \bigcup_{i=1}^{\ell} \overline{B}_i \right]\\  &\leq \sum_{i\in I_S}\Pr\left[\overline{A}_i\right] + \sum_{i=1}^{\ell} \Pr\left[\overline{B}_i\right] \\
&   \leq |I_S|\cdot \frac{4\delta}{w} + \ell \cdot \frac{2\delta}{w}\\
&   \leq \frac{6 k \delta}{w},
\end{align*}
where the first two inequalities hold by a union bound, and then we apply Lemma~\ref{claim:probset}.
  \end{proof}

\begin{lemma}
 \label{lemma:randomgrids2}
 Let $\pi$ be a curve in $\RR$ and let 
 $\sigma_{\pi}$ be a $\delta$-signature of $\pi$ with vertices $\pi(p_1),\ldots,\pi(p_{\ell})$. Let $\tau$ be a curve in $\RR$ with vertices $\tau(t_1),\ldots,\tau(t_{k})$. For the two curves $\sigma_\pi'=\langle g_{w,z}(\pi(p_1)),\ldots ,g_{w,z}(\pi(p_{\ell})) \rangle$, 
$\tau'=\langle g_{w,z}(\tau(t_1)),\ldots ,g_{w,z}(\tau(t_{k})) \rangle$, 
if $\VVV(\sigma_\pi')=\VVV(\tau')$ then $\d_F(\pi,\tau)\leq 2w +\delta$. 
\end{lemma}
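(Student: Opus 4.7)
The plan is to bound $\d_F(\pi,\tau)$ via a short triangle-inequality chain $\pi \to \sigma_\pi \to \bar{\sigma}_\pi = \bar{\tau} \to \tau$, where $\bar{\sigma}_\pi$ and $\bar{\tau}$ are two auxiliary curves living on the grid that turn out to coincide as curves. Concretely, introduce the map $f:\ZZ \to \RR$ defined by $f(i) := w\cdot i + z$, which sends a cell index back to the left endpoint of the corresponding cell. Then $|x - f(g_{w,z}(x))| < w$ for every $x\in\RR$. Define $\bar{\sigma}_\pi$ as the polygonal curve with vertices $f(g_{w,z}(\pi(p_1))),\ldots,f(g_{w,z}(\pi(p_\ell)))$, and $\bar{\tau}$ analogously from the vertices of $\tau$.

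Applying fact (ii) of Section~\ref{sec:prelims} edgewise, together with observation (i), the vertex-wise $w$-bound propagates to yield $\d_F(\sigma_\pi,\bar{\sigma}_\pi)\leq w$ and $\d_F(\tau,\bar{\tau})\leq w$, while Lemma~\ref{lemma:signatures1} gives $\d_F(\pi,\sigma_\pi)\leq \delta$. The triangle inequality thus reduces the claim to showing $\d_F(\bar{\sigma}_\pi,\bar{\tau})=0$. Since $f$ is strictly increasing, it commutes with taking local extrema, so $\VVV(\bar{\sigma}_\pi) = f(\VVV(\sigma_\pi'))$ and $\VVV(\bar{\tau}) = f(\VVV(\tau'))$; the hypothesis $\VVV(\sigma_\pi') = \VVV(\tau')$ then forces $\VVV(\bar{\sigma}_\pi) = \VVV(\bar{\tau})$. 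Finally, any vertex of a polygonal curve that is not a local extremum lies on the line segment between its two neighbors, and can therefore be inserted or removed without altering the underlying curve. Hence $\bar{\sigma}_\pi$ and $\bar{\tau}$ coincide as curves, and
\[
\d_F(\pi,\tau) \leq \d_F(\pi,\sigma_\pi) + \d_F(\sigma_\pi,\bar{\sigma}_\pi) + \d_F(\bar{\sigma}_\pi,\bar{\tau}) + \d_F(\bar{\tau},\tau) \leq \delta + w + 0 + w = 2w+\delta.
\]

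The only subtle step is passing from equality of the extremal sequences $\VVV(\sigma_\pi') = \VVV(\tau')$ to equality of the underlying curves $\bar{\sigma}_\pi$ and $\bar{\tau}$. One must be careful both that the strictly monotone $f$ preserves local extrema (so that $\VVV$ transports correctly through the rescaling) and that non-extremal vertices are genuinely redundant in a polygonal curve. This second observation is precisely why we work with the rescaled representatives $\bar{\sigma}_\pi,\bar{\tau}$ rather than attempting to directly Fr\'echet-compare the integer-indexed curves $\sigma_\pi',\tau'$.
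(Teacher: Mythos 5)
Your proof is correct and follows essentially the same route as the paper's: a triangle-inequality chain $\pi \to \sigma_\pi \to \text{(snapped curves)} \to \tau$ contributing $\delta + w + 0 + w$, with the key observation that equal sequences of local extrema force the two snapped curves to coincide and hence have Fr\'echet distance zero. Your explicit rescaling map $f$ merely tidies up the fact that $g_{w,z}$ formally returns cell indices rather than grid points --- a point the paper glosses over by directly calling $\sigma_\pi'$ and $\tau'$ ``curves snapped to grid points within distance $w$'' --- but the argument is otherwise identical.
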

\begin{proof}
By the triangle inequality, 
\begin{align*}
\d_F(\sigma_\pi,\tau) &\leq \d_F(\sigma_\pi,\sigma_\pi')+ \d_F(\sigma_\pi',\tau) \\ & \leq 
\d_F(\sigma_\pi,\sigma_\pi')+ \d_F(\sigma_\pi',\tau') +\d_F(\tau',\tau)\\ & =
\d_F(\sigma_\pi,\sigma_\pi')+ \d_F(\tau',\tau).
\end{align*}

Notice that $\sigma_\pi'$ and $\tau'$ are  curves resulting by snapping the vertices of   $\sigma_\pi$ and  $\tau$ respectively,  to grid points within distance $w$. 
Hence, $\d_F(\sigma_\pi,\sigma_\pi')\leq w$ and $\d_F(\tau',\tau)\leq w$ which imply   $\d_F(\sigma_\pi,\tau)\leq 2w$. Then by the triangle inequality and Lemma \ref{lemma:signatures1}, 
\[
\d_F(\pi,\tau)\leq \d_F(\pi,\sigma_{\pi})+\d_F(\sigma_{\pi},\tau) \leq 2w +\delta
.\]
  \end{proof}

\thmANNHA*
\begin{proof}
The data structure is described in Section \ref{ss:randomgrids}. We also use notation from that section. 
Each dictionary $\HHH_i$, $i\in [\notables]$, stores for each key a relevant pointer to a curve in $\Pi$. Hence the total storage is in $\OO(nm +n\notables)=\OO(nm + n \log n)$ and the expected preprocessing time is in $\OO(nm \notables)=\OO(nm\log n)$, because we assume perfect hashing. A query costs $\OO(k \notables)=\OO(k\log n)$ time. 

Using Lemmas \ref{lemma:randomgrids1}  for $\delta=r=1$ and $w=12k$, we conclude that for a fixed $i\in [\notables]$, and a query $\tau$ the probability that we get a false negative, meaning that there is a $\pi \in \Pi$ such that $\d_F(\tau,\pi)\leq 1$ but there is no $\pi \in \Pi$  such that $\VVV(\tau_i')=\VVV(\sigma_{\pi|i}')$, is at most   $1/2$. Hence, the probability that we get false negatives in all of the $\notables$ dictionaries is at most $\frac{1}{2^{\notables}}\leq \frac{1}{\poly(n)} $. 
Finally, by Lemma \ref{lemma:randomgrids2}, if there exists  $i\in [\notables]$ such that there is a $\pi \in \Pi$ with $\VVV(\tau_i')=\VVV(\sigma_{\pi|i}')$, then $\d_F(\pi,\tau)\leq 24k+1$. 
  \end{proof}

\section{Distance oracles and asymmetric communication}
\label{sec:lowerbound_reduction}
In this section, we study lower bounds on the cell-probe-complexity of distance oracles for the Fr\'echet distance and the discrete Fr\'echet distance. We focus on the decision version of the problem. In particular, we say a \emph{distance oracle} with input curve $\pi$, threshold $r>0$, and  approximation factor $c>1$, is a data structure which reports as follows: for any query $\tau$, if $\d_{F}(\pi,\tau)\leq r$ then it outputs “yes”, else if $\d_{F}(\pi,\tau)\geq cr$ then it outputs “no” and otherwise both answers are acceptable. This can be viewed as a special case of the $c$-ANN problem. 
To show our lower bounds, we employ a technique first introduced by Miltersen \cite{M94}, which  implies that lower bounds for communication problems can be translated into lower bounds for cell-probe data structures.  
The following communication problem is known as the lopsided (or asymmetric) disjointness problem. 

\begin{definition}[$(k,U)$-Disjointness]
\label{definition:disjointness1}
 Alice receives a set $S$, of size $k$, from a universe
$[U] = \{1 \ldots U\}$, and Bob receives  $T \subset [U]$
of size $m\leq U$. They need to decide whether $T \cap S = \emptyset$. 
\end{definition}

A randomized $[a, b]$-protocol for a communication problem is a protocol in
which Alice sends $a$ bits, Bob sends $b$ bits, and the error
probability is bounded away from $1/2$. 
The following result by P\u{a}tra\c{s}cu gives a lower bound on the randomized asymmetric communication complexity of the $(k,U)$-Disjointness problem.

\begin{theorem}[Theorem 1.4 \cite{P11}]
\label{theorem:disjointnesslb}
Assume Alice receives a set $S$, $|S| = k$ and
Bob receives a set $T$, $|T| = m$, both sets coming from a universe of size $U$, such that $k \leq m \leq U$. 
In any randomized, two-sided error communication protocol deciding disjointness of $S$ and $T$, either Alice sends at least 
$\delta k \log \left(\frac{U}{k} \right)$ bits or Bob sends at least  $ \Omega \left(k \left(\frac{U}{k}\right)^{1-C\cdot\delta} \right)$ bits, for any $\delta > 0$, and $C=1799$.
\end{theorem}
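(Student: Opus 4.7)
The plan is to establish this via the asymmetric communication complexity framework, using a carefully chosen hard distribution combined with a direct sum / round-elimination argument. I would view this as a generalization of classical symmetric disjointness lower bounds, where the asymmetry is exploited to trade Alice's message length against Bob's.

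First I would fix a hard product distribution on the inputs. The natural choice is to partition $[U]$ into $k$ blocks of size $U/k$ each; Alice's set $S$ is obtained by drawing one uniformly random element from each block, and Bob's set $T$ is obtained by including each element of $[U]$ independently with a suitably chosen probability (tuned so that, conditional on a given block, there is constant probability that $S$ and $T$ intersect inside that block, but conditional on the rest of $T$ the marginal structure inside each block looks close to uniform). Under this distribution, disjointness of $S$ and $T$ is the logical AND of $k$ independent per-block ``membership'' instances, in each of which Alice holds a single element of $[U/k]$ and Bob holds a random subset of $[U/k]$.

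Next I would prove a single-instance asymmetric lower bound for this membership problem using the richness technique of Miltersen, Nisan, Safra and Wigderson (or, equivalently, a round-elimination argument). The statement I would aim for is: any $[a',b']$-protocol deciding whether Alice's element lies in Bob's subset on a universe of size $N$ must satisfy $b' = \Omega(N^{1-Ca'/\log N})$ whenever $a' \leq \delta \log N$. The round-elimination step is the core: if Alice's first message of length $a'$ is short, then by averaging there is a large subset of Alice's inputs on which the message is the same, and the residual problem is still hard with essentially the same parameters, leading to a recursion. Then I would compose the lower bound across all $k$ blocks via a direct sum / information-complexity argument: Alice's total communication must be spent across the blocks, so if she spends on average $a/k$ bits per block, setting $a = \delta k \log(U/k)$ forces Bob's per-block communication to be $\Omega((U/k)^{1-C\delta})$, and summing gives $\Omega(k \cdot (U/k)^{1-C\delta})$.

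The main obstacle will be making the direct sum tight enough to yield the exact constant $C=1799$ in the exponent $1-C\delta$, rather than a looser $(U/k)^{1-O(\delta)}$. This requires carefully quantifying, in the round-elimination step, the entropy loss that conditioning on Alice's message induces on the hard distribution, and ensuring that the residual distribution after conditioning remains a product distribution supported on a large enough combinatorial rectangle. A secondary technical point is handling the two-sided error: I would use standard information-theoretic inequalities (Pinsker, subadditivity of mutual information across blocks) together with the fact that the distribution on each block is a product to reduce error amplification to a single constant factor, absorbed into $C$. The regime $k \leq m \leq U$ enters only mildly: as long as $m$ is large enough that each block of $T$ is nontrivial with constant probability, the lower bound for the hardest case $m = k$ already implies the general bound by monotonicity.
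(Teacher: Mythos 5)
This statement is not proven in the paper at all: it is quoted verbatim, with citation, as Theorem~1.4 of P\u{a}tra\c{s}cu's ``Unifying the Landscape of Cell-Probe Lower Bounds'' \cite{P11}, so there is no internal proof to compare against. Your sketch is, at a high level, a faithful reconstruction of how that result is actually proven (a blocked hard product distribution, a single-block asymmetric membership lower bound, and a direct sum across blocks), so the \emph{approach} is the right one. But as a proof it has a genuine gap precisely at the step you describe most casually: the direct sum. The argument ``Alice's total communication must be spent across the blocks, so if she spends on average $a/k$ bits per block\dots'' is not valid for an interactive protocol, because the protocol is not obliged to allocate its communication block by block; a short global transcript can in principle reveal a little about every block simultaneously. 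The standard repair is information-theoretic: one bounds the mutual information between the transcript and Alice's input, decomposes it over the independent blocks, and concludes that some block has information cost at most $a/k$. This, however, requires a \emph{low-information} lower bound for the single-block membership problem --- i.e., ``if the transcript reveals few bits about Alice's element, then Bob must send many bits'' --- which is a strictly harder statement than the \emph{low-communication} lower bound your richness/round-elimination step produces. Bridging that mismatch is the technical heart of the actual proof and is missing from your plan.

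Two smaller issues. First, your claim that the case $m=k$ is the hardest and that general $m$ follows ``by monotonicity'' is at best unjustified: the hard distribution needs Bob's per-block density tuned so that each block's membership instance is nondegenerate, and a protocol for sets of one size does not automatically solve instances of another size without a padding argument that you would need to spell out. Second, the explicit constant $C=1799$ cannot emerge from the sketch as written; it is an artifact of the specific constants in P\u{a}tra\c{s}cu's calculation, and any reconstruction would have to track those constants explicitly through both the single-block bound and the direct sum. Since the paper under review uses the theorem as a black box, the right ``proof'' here is simply the citation; if you do want to reprove it, the piece to develop is the information-cost direct sum, not the richness step.
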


We now define the distance threshold estimation problem (DTEP), where two parties must determine whether two curves are near or far. This is basically the communication version of our data structure problem (for $n=1$).   

\begin{definition}[$(k,U)$-Fr\'echet DTEP]
Given parameters $c\geq 1$, $r>0$, 
Alice receives a curve $\tau$ of complexity $k$ in $\RR^d$, Bob receives a curve $\pi$ of complexity $m\leq U$ in $\RR^d$. If $\d_{F}(\pi,\tau) \leq r$ then they must output “yes”. If $\d_{F}(\pi,\tau) \geq cr$ then they must output “no”. Otherwise, both answers are acceptable. 
\end{definition}

Similarly, we define the $(k,U)$-Discrete Fr\'echet DTEP.

\begin{definition}[$(k,U)$-Discrete Fr\'echet DTEP]
Given parameters $c\geq 1$, $r>0$, 
Alice receives a curve $\tau$ of complexity $k$ in $\RR^d$, Bob receives a curve $\pi$ of complexity $m\leq U$ in $\RR^d$. If $\d_{dF}(\pi,\tau) \leq r$ then they must output “yes”. If $\d_{dF}(\pi,\tau) \geq cr$ then they must output “no”. Otherwise, both answers are acceptable. 
\end{definition}

\subsection{A cell-probe lower bound for the Fr\'echet distance}

Our lower bound of Theorem~\ref{thm:lowerbound} works by reducing the lopsided set disjointness problem to the problem of approximating the Fr\'echet distance of two curves in $\RR$. (A similar reduction appears in \cite{MMR19}, which however works for curves in $\RR^2$.)

First consider an instance of the set disjointness problem: Alice has a set $A=\{\alpha_1,\ldots,\alpha_k\} \subset [U]$ and Bob has a set $B=\{\beta_1,\ldots,\beta_m\}\subset [U]$, where $U$ is the size of the universe.
We now describe our main gadgets which will be used to define one curve of complexity $\OO(k)$ for $A$ and one curve of complexity $\OO(U-m)$ for $B$. 
For each $i \in [U]$:
\begin{itemize}
    \item If $i \in A$ then $x_{2i-1}:=4i+4$, $x_{2i}:=4i$, 
    \item If $i \notin A$ then $x_{2i-1}:=4i$, $x_{2i}:=4i$, 
        \item If $i \in B$ then $y_{2i-1}:=4i$, $y_{2i}:=4i$, 
    \item If $i \notin B$ then $y_{2i-1}:=4i+3$, $y_{2i}:=4i+1$, 
\end{itemize}
We now define $\tilde{x}:=\langle 0,x_1,\ldots,x_{2U}, 4U+5 \rangle$ and $\tilde{y}:=\langle 0 ,y_1,\ldots,y_{2U},4U+5\rangle$. Notice that the number of vertices of $\tilde{x}$ is $2k+2$, and 
the number of vertices of $\tilde{y}$ is 
$2(U-m)+2$, because we only take into account vertices which are local extremes. 
The  arclength of any of $\tilde{x}$, $\tilde{y}$ is at most $12U+2$.

\begin{restatable}{theorem}{frechetDecision}
\label{theorem:frechetdecision}
If $A\cap B=\emptyset$ then $\d_F(\tilde{x},\tilde{y})\leq 1$. 
If $A\cap B\neq \emptyset$ then $\d_F(\tilde{x},\tilde{y})\geq {2}$.
\end{restatable}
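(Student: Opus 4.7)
The plan is to handle the two directions separately, both by exploiting the fact that the gadgets are composed of $U$ ``index segments'' whose local shapes are determined by the membership of each index in $A$ and $B$.

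For $A\cap B=\emptyset\Rightarrow \d_F(\tilde{x},\tilde{y})\le 1$, I would construct an explicit monotone path through the $1$-free space of the two curves by gluing local matchings, one per index $i\in[U]$. Disjointness rules out the bad case $i\in A\cap B$ and leaves three scenarios. If $i\in A$ and $i\notin B$, then $\tilde{x}$ peaks at $4i+4$ while $\tilde{y}$ peaks at $4i+3$, and the two bumps can be aligned directly at distance exactly $1$. If $i\notin A$ and $i\in B$, then both curves are essentially linear through the segment, centered around value $4i$, and a near-diagonal matching has distance at most $1$. The crucial case is $i\notin A$ and $i\notin B$: here only $\tilde{y}$ bumps (up to $4i+3$, down to $4i+1$) while $\tilde{x}$ stays at $4i$; my plan is to advance $\tilde{x}$ ahead of $\tilde{y}$ so that $\tilde{y}$'s peak at its parameter $2i-1$ is matched against a point of $\tilde{x}$ in the rising segment $[2i,2i+1]$ where $\tilde{x}$ attains value $4i+2$ (giving distance $1$), then to pause $\tilde{x}$ while $\tilde{y}$ descends to $4i+1$, and to re-synchronize before the next index. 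The main obstacle is verifying that these local pieces glue together into a single monotone, distance-$\le 1$ path: the $\tilde{x}$-lead introduced by the third case has to be absorbed cleanly, either by the flat region of $\tilde{x}$ at value $4i+4$ that follows when the next index lies in $A\cup B$, or by a similar staggered matching when the next index is again in neither set.

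For the converse $A\cap B\neq\emptyset\Rightarrow \d_F(\tilde{x},\tilde{y})\ge 2$, I would fix any $i^*\in A\cap B$ and rule out every reparameterization $(\alpha(t),\beta(t))$ of distance strictly less than $2$. Since $i^*\in A$, the curve $\tilde{x}$ has a peak of value $4i^*+4$ at parameter $2i^*-1$ and then a valley of value $4i^*$ at parameter $2i^*$. Since $i^*\in B$, one has $y_{2i^*-1}=y_{2i^*}=4i^*$, and a direct inspection shows that $\tilde{y}(\beta)\le 4i^*$ for every $\beta\le 2i^*$, while $\tilde{y}(\beta)>4i^*+2$ for every $\beta$ strictly past the unique crossing $\beta^*\in(2i^*,2i^*+1)$ at which $\tilde{y}$ first reaches $4i^*+2$ on its way up toward $y_{2i^*+1}\in\{4i^*+4,4i^*+7\}$; the crucial observation is that every subsequent vertex of $\tilde{y}$ has value at least $4i^*+4$, so $\tilde{y}$ never dips back below $4i^*+2$ after $\beta^*$. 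Then any matching of distance $<2$ must satisfy $\beta(t_1)>\beta^*$ at the moment $t_1$ with $\alpha(t_1)=2i^*-1$ (otherwise the distance there would be at least $2$); by monotonicity, $\beta(t_2)\ge \beta(t_1)>\beta^*$ at the later moment $t_2$ with $\alpha(t_2)=2i^*$, forcing $\tilde{y}(\beta(t_2))>4i^*+2$ and thus $|\tilde{x}(2i^*)-\tilde{y}(\beta(t_2))|>2$, a contradiction. This is the easier direction; the only subtlety is to verify carefully that no later valley of $\tilde{y}$ brings its value back below the threshold $4i^*+2$.
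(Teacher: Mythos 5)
Your proposal is correct and follows essentially the same route as the paper: a per-index decomposition with local matchings glued at the points where both curves cross the values $4i$ and $4i+4$, and, for the lower bound, playing the peak--valley pair $x_{2i^*-1}=4i^*+4$, $x_{2i^*}=4i^*$ against the fact that $\tilde{y}$ stays at most $4i^*$ before, and strictly above $4i^*+2$ after, its unique upward crossing of $4i^*+2$. The gluing obstacle you flag in the first direction is not actually there: in the case $i\notin A\cup B$ your matching re-synchronizes the two curves at value $4i+2$ once $\tilde{y}$ finishes its excursion and then runs them in parallel up to $4i+4$, so every local piece begins and ends with both curves at the same value ($4i$ and $4i+4$, respectively) and the concatenation observation (i) from the preliminaries assembles the pieces with no residual lead to absorb.
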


\frechetDecision*

\begin{proof}
If there is no $i\in A \cap B$ then there is a monotonic matching which implies $\d_F(\tilde{x},\tilde{y}) \leq 1$. For any $i\in [U]$, let $\tilde{x}_i:=\langle 4i , x_{2i-1},x_{2i},4i+4\rangle$ and $\tilde{y}_i:=\langle 4i , y_{2i-1},y_{2i},4i+4\rangle$. 
To show that, it is sufficient to show that for any $i\in [U]$, $\d_F(\tilde{x}_i, \tilde{y}_i)\leq 1$. If $i \notin A$ and $i \in B$ then the two subcurves are just straight line segments and their distance is $0$. If $i \notin A$ and $i \notin B$ then $\tilde{x}_i$ is a line segment and $\tilde{y}_i$ consists of three line segments forming a zig-zag. The matching works as follows: it first matches the interval  $[4i,4i+2]$ of $\tilde{x}_i$ 
with the interval $[4i,4i+2]$ of $\tilde{y}_i$ by moving in both curves at the same speed, then it stops moving 
in $\tilde{x}_i$, while it moves from $4i+2$ to $y_{2i-1}$ and then to $y_{2i}$ and then to $4i+2$ in $\tilde{y}_i$. The matching continues by moving in  the two remaining subsegments at the same speed. This is a matching that attains $\d_F(\tilde{x}_i, \tilde{y}_i)\leq 1$, because $4i+2$ is within distance $1$ from any of $y_{2i-1},y_{2i}$. Finally if $i\in A$ and $i \notin B$ then the matching works as follows: it first matches $[4i,x_{2i-1}]$ with $[4i,y_{2i-1}]$, then it matches $(x_{2i-1},x_{2i}]$ with $(y_{2i-1},y_{2i}]$, and it finally matches $(x_{2i},4i+4]$ with $(y_{2i},4i+4]$. Since it basically matches pairs of line segments having endpoints at distance at most $1$ from each other, the Fr\'echet distance is again at most $1$. 

Suppose now that there is an $i$ such that $i\in A$ and $i \in B$. Let $v$ be the first appearance of the point $4i+4$ in $
\tilde{x}$, and let $u$ be the second appearance of the point $4i$ in $\tilde{x}$. Assume that $\d_F(\tilde{x},\tilde{y})=\delta <2$. Then, $v$ is matched with some point $z$ in $\tilde{y}$ which lies within distance $\delta$. However,  there is no point in $\tilde{y}$ which lies within distance $\delta$ from $u$, and appears in $\tilde{y}$ after $z$. This implies that $\delta \geq 2$, because the matching required by the definition of the Fr\'echet distance has to be monotonic.  
  \end{proof}
We use a technique of obtaining cell-probe lower bounds first introduced by Miltersen \cite{M94}. For a static data structure problem with input $p \in \PPP$, which computes $f(p,q)$ for any query $q \in \QQQ$, we consider the communication problem, where Alice gets $q \in \QQQ$, Bob
gets $p \in \PPP$, and they must determine $f(q,p)$. If there is a  solution to the data
structure problem with parameters $s, w$ and $t$, then there is a protocol for 
the communication problem, with $2t$ rounds of communication, where Alice
sends $\lceil\log s \rceil$ bits in each of her messages and Bob sends $w$ bits in each of his
messages. The protocol is a simple simulation of the assumed data structure where Alice sends indices to memory cells and Bob responds with the cell content. 
Theorem~\ref{theorem:disjointnesslb}, combined with Theorem~\ref{theorem:frechetdecision},  implies lower bounds for cell-probe Fr\'echet distance oracles. 
\thmDOLB
\begin{proof}
By Theorem \ref{theorem:frechetdecision}, if there exists a randomized $[a,b]$-protocol for the
communication problem, in which, Alice gets any curve $x$ of complexity $2k+2$ and arclength at most $12U+2$, Bob gets any curve $y$ of complexity $2(U-m)+2$ of arclength at most $12U+2$ and they can decide whether $\d_F(x,y) \leq 1$ or $\d_F(x,y) \geq {2}$, then
they can solve the $(k,U)$-Disjointness problem. 

By Theorem \ref{theorem:disjointnesslb}, for  any $\delta>0$, there exists $b_0= \Omega\left(k \left(\frac{U}{k}\right)^{1-1799\delta}\right)$, such that a randomized $[a,b]$-protocol for $(k,U)$-Disjointness, for any $k\leq m\leq U$, requires either $a\geq \delta k \log \left(\frac{U}{k} \right)$ or $b \geq b_0$.  
Hence, for any $\delta>0$, and any $k\leq m \leq U$, if there exists a randomized $[a,b]$-protocol for the $(2k+2,2(U-m)+2)$-Fr\'echet DTEP for any curves of arclength at most $12U+2$, then either $a\geq \delta k \log \left(\frac{
U}{k} \right)$ or $b \geq b_0$.

The simulation argument implies that if there exists a cell-probe data structure with parameters $t$, $w$, $s$ for curves in $\RR$, with query  complexity $2k+2$, and arclength at most $12U+2$, then there exists a randomized  $[2t \log s,2tw]$-protocol for the Fr\'echet DTEP. Hence it should be either that $2t \log s \geq \delta k \log \left(\frac{U}{k} \right)$  or 
$2tw  \geq b_0$. There exists a $w_0= \Omega\left(\frac{k}{2t} \left(\frac{U}{k}\right)^{1-1799\delta}\right)$ such that 
if $w < w_0\leq b_0$, then 
$s\geq   2^{\frac{\delta k \log(U/k)}{2t}}$. 
The theorem is now implied by setting $\delta=\eps/1799$, $L=12U+2$ and rescaling $k\gets 2k+2$. 
  \end{proof}

\subsection{Cell-probe lower bounds for the discrete Fr\'echet distance}

In this section, we focus on distance oracles for the discrete Fr\'echet distance, in the cell-probe model. 
Our reductions use points in a bounded subset of $\RR^d$ requiring $\mathcal{O}(d)$ bits for their description. Next, we define  domains of sequences which satisfy this property.
\begin{definition}[Bounded domain]
We say that a point sequence $P=p_1,\ldots, p_m $ has a \emph{bounded domain} $S\subset \RR^d$ if there exist constants $C>0$, $\lambda>0$ such that for all $i\in [m]$, $p_i \in S$ and each element of $\lambda \cdot p_i$ is an integer lying in $[-C,C]\cap \ZZ$.     
\end{definition}

In the remainder, we reduce $(k,U)$-Disjointness  to $(k,U)$-Discrete Fr\'echet DTEP and conclude with lower bounds for discrete Fr\'echet distance oracles in the cell-probe model.
We consider two cases for $(k,U)$-Discrete Fr\'echet DTEP. First, we assume that points belong to a bounded domain $X\subset \RR^2$ and $|X|=\mathcal{O}(1)$. Second,  we consider the high-dimensional case where points are chosen from some bounded domain $X \subset \RR^{\mathcal{O}(\log m)}$, where $m\leq U$.

\subsection{Constant dimension}

\label{ss:reduction1}
\begin{figure}[t]
\centering
\begin{tikzpicture}[scale=1.9]
        \draw[->,ultra thick] (-2,0)--(2,0) node[right]{$x$};
\draw[->,ultra thick] (0,-1)--(0,1) node[above]{$y$};

   \filldraw[blue] (0,0) circle (1pt) node[anchor=north east] {$s$};
\filldraw[red] (-0.25,0) circle (1pt) node[anchor=south] {$y_1$};
\filldraw[red] (0.25,0) circle (1pt) node[anchor=south] {$y_2$};
\filldraw[blue] (-0.61,0.5) circle (1pt) node[anchor=west] {$\beta_0$};

\filldraw[blue] (-0.61,-0.5) circle (1pt) node[anchor=west] {$\beta_1$};
\filldraw[red] (-1.61,0.5) circle (1pt) node[anchor=west] {$\alpha_0$};
\filldraw[red] (-1.61,-0.5) circle (1pt) node[anchor=west] {$\alpha_1$};
\filldraw[blue] (0.61,0.5) circle (1pt) node[anchor=west] {$\beta_0'$}; 
\filldraw[blue] (0.61,-0.5) circle (1pt) node[anchor=west] {$\beta_1'$};
\filldraw[red] (1.61,0.5) circle (1pt) node[anchor=west] {$\alpha_0'$};
\filldraw[red] (1.61,-0.5) circle (1pt) node[anchor=west] {$\alpha_1'$};
\filldraw[blue] (-0.75,0) circle (1pt) node[anchor=south] {$w$};
\filldraw[blue] (0.75,0) circle (1pt) node[anchor=south] {$w'$};
\filldraw[blue] (-0.25,-1) circle (1pt) node[anchor=south] {$x_1$};
\filldraw[blue] (0.25,-1) circle (1pt) node[anchor=south] {$x_2$};
\filldraw[fill=blue, opacity=0.2, draw=blue] (-0.75,0) circle (1cm);
\filldraw[fill=blue, opacity=0.2, draw=blue] (0.75,0) circle (1cm);
\filldraw[fill=red, opacity=0.2, draw=red] (-0.25,0) circle (1cm);
\filldraw[fill=red, opacity=0.2, draw=red] (0.25,0) circle (1cm);
  \end{tikzpicture}
 \caption{Points used in our gadgets. The blue disks of radius $1$ are centered at $w$ and $w'$ and they cover $\alpha_0,\alpha_1,y_1$ and $\alpha_0',\alpha_1',y_2$ respectively. The red disk of radius $1$ centered at $y_1$ covers $\beta_0,\beta_1,w,w',s, x_1$ and the red disk of radius $1$ centered at $y_2$ covers $\beta_0',\beta_1',w,w',s, x_2$.} 
 \label{figure:gadgets}
\end{figure}
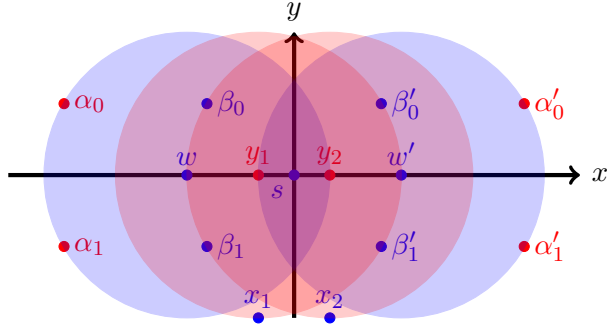

We want to construct point sequences, one for each input set of Alice and Bob, such that there exists a common element in Alice's and Bob's input sets, if and only if the discrete Fr\'echet distance of the two sequences is less or equal than a given threshold. Our reduction takes some of its main ideas from~\cite{BM16}.
Our gadgets use the following points (see Fig.~\ref{figure:gadgets}): 
\[
\alpha_0=\left(-1.61,0.5\right)
,
\alpha_1=\left(-1.61,-0.5\right)
\alpha_0'=\left(1.61,0.5\right)
,
\alpha_1'=\left(1.61,-0.5\right),\]
\[
\beta_0=\left(-0.61,0.5\right)
,
\beta_1=\left(-0.61,-0.5\right)
,
\beta_0'=\left(0.61,0.5\right)
,
\beta_1'=\left(0.61,-0.5\right),\]
\[s=(0,0), 
w=\left(-0.75,0\right),
w'=\left(0.75,0\right),\]
\[
y_1=\left(-0.25,0\right),y_2=\left(0.25,0\right),
x_1=\left(-0.25,-1 \right), 
x_2=\left( 0.25,-1\right).
\]
 Let $D=\lceil \log U \rceil$, where $U$ is the size of the universe in the $(k,U)$-Disjointness instance. We further assume that $D$ is even for convenience. We treat elements of the universe as binary vectors: Alice's set corresponds to a set $\{a_1,\ldots,a_k \}$, where each $a_i \in \{0,1\}^D$, and Bob's set corresponds to a set $\{b_1,\ldots,b_k \}$, where each $b_i \in \{0,1\}^D$. 
For each vector $a_i\in \{0,1\}^D$ we have a gadget $A_i$ which is a sequence of points constructed as follows: for each odd coordinate $j$ we either put 
$\alpha_0$ or $\alpha_1$ depending on whether $(a_i)_j$ is $0$ or $1$ and for each even coordinate $j$ we either put 
$\alpha_0'$ or $\alpha_1'$ depending on whether $(a_i)_j$ is $0$ or $1$. For example, for the vector $(0,1,0,0)$ (assuming that it belongs to Alice) we create $a_0,a_1',a_0,a_0'$. Similarly for each vector $b_i$ we have a gadget $B_i$ which is a sequence of points constructed as follows: for each odd coordinate $j$ we either put 
$\beta_0$ or $\beta_1$ depending on whether $(b_i)_j$ is $0$ or $1$ and for each even coordinate $j$ we either put 
$\beta_0'$ or $\beta_1'$ depending on whether $(b_i)_j$ is $0$ or $1$.   Given two sequences $P=p_1,\ldots,p_m$ and $Q=q_1,\ldots,q_m$, we say that a traversal $T=(i_1,j_1),\ldots,(i_m,j_m)$ is \emph{parallel} if for all $k=1,\ldots,m$ we have $i_k=j_k=k$. 
\begin{lemma}
\label{lemma:vectorgadgets}
 Let $a_i,b_j \in \{0,1\}^D$. If $a_i=b_j$ then $\d_{dF}(A_i,B_j) \leq 1$. If $a_i\neq b_j$ then $\d_{dF}(A_i,B_j) \geq \sqrt{2}$. Moreover, for any non-parallel traversal $T$, we have $\d_T(A_i,B_j) \geq 2$. 
\end{lemma}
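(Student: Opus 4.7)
I would verify the three claims in the order: upper bound under $a_i = b_j$, the ``moreover'' lower bound for non-parallel traversals, and finally the lower bound $\sqrt{2}$ under $a_i \neq b_j$ (which reuses the previous step).

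For the upper bound, assume $a_i = b_j$ and consider the parallel traversal $T^{\ast} = (1,1),(2,2),\ldots,(D,D)$. At each odd index $k$ this pairs $\alpha_c$ with $\beta_c$, where $c = (a_i)_k = (b_j)_k$, and at each even index it pairs $\alpha_c'$ with $\beta_c'$. Reading off the coordinates given in the definition of the gadget points yields $\|\alpha_c - \beta_c\| = \|\alpha_c' - \beta_c'\| = 1$ for $c\in\{0,1\}$, so $\d_{T^{\ast}}(A_i, B_j) \leq 1$ and hence $\d_{dF}(A_i, B_j) \leq 1$.

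For the ``moreover'' statement, I would argue via a discrete intermediate value principle. Let $T = (i_1, j_1), \ldots, (i_t, j_t)$ be any non-parallel traversal; since each step moves $(i,j)$ by one of $(0,1),(1,0),(1,1)$, the integer sequence $i_u - j_u$ changes by at most $1$ per step, starts at $0$, and ends at $0$. Non-parallelism forces $i_u - j_u \neq 0$ at some step, so by discrete IVT there is a $u^{\ast}$ with $|i_{u^{\ast}} - j_{u^{\ast}}| = 1$, which makes $i_{u^{\ast}}$ and $j_{u^{\ast}}$ of opposite parities. The pair $(A_i[i_{u^{\ast}}], B_j[j_{u^{\ast}}])$ therefore matches an $\alpha$-point (with $x$-coordinate $\pm 1.61$) against a $\beta'$-point (with $x$-coordinate $\pm 0.61$ on the opposite side of the $y$-axis), or symmetrically an $\alpha'$-point against a $\beta$-point. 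In either case the Euclidean distance is at least $|{-1.61} - 0.61| = 2.22 \geq 2$, so $\d_T(A_i, B_j) \geq 2$.

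For the remaining lower bound $\d_{dF}(A_i, B_j) \geq \sqrt{2}$ when $a_i \neq b_j$: the previous step already shows every non-parallel traversal satisfies $\d_T \geq 2 > \sqrt{2}$, so it suffices to bound $\d_{T^{\ast}}$ from below. Pick any index $k$ with $(a_i)_k \neq (b_j)_k$; at this $k$ the parallel traversal pairs $\alpha_c$ with $\beta_{1-c}$ (if $k$ is odd) or $\alpha_c'$ with $\beta_{1-c}'$ (if $k$ is even), and a direct computation gives $\sqrt{1^2 + 1^2} = \sqrt{2}$ in both cases. Taking the minimum over all traversals yields $\d_{dF}(A_i, B_j) \geq \sqrt{2}$. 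The only substantive observation is the discrete IVT step for the non-parallel case; once it is in hand, each claim reduces to a short Euclidean distance tabulation among the eight labeled points in Figure~\ref{figure:gadgets}.
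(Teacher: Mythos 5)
Your proof is correct and follows essentially the same approach as the paper: verify the four "aligned" distances equal $1$, the four "mismatched same-parity" distances equal $\sqrt{2}$, and the "cross-parity" distances exceed $2$. The only difference is that you spell out, via the discrete intermediate-value/parity argument on $i_u - j_u$, why every non-parallel traversal must pair an unprimed point with a primed one — a step the paper leaves implicit — which is a welcome addition rather than a deviation.
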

\begin{proof}
 If $a_i=b_j$ then the parallel traversal gives $\d_{dF}(A_i,B_j) \leq 1$.  If $a_i\neq b_j$ then  $\d_{dF}(A_i,B_j) \geq \sqrt{2}$. To see that notice that  $\|\beta_0 - \alpha_1\|_2=\|\beta_1 - \alpha_0\|_2=\|\beta_0' - \alpha_1'\|_2=\|\beta_1' - \alpha_1'\|_2=\sqrt{2}$. Furthermore, for each $z,w\in \{0,1\}$ $\|a_z - b_w'\|_2 > 2$ and $\|a_z'\ - b_w\|_2 > 2$.
  \end{proof}
We  
define $W= \bigcirc_{i=1}^{D m/2} (w \circ w')$. Given $a_1,\ldots,a_k$ and $b_1,\ldots, b_m$, we construct two point sequences as follows:
\[
P=W \circ x_1 \circ \bigcirc_{i=1}^m (s \circ B_i) \circ s \circ x_2 \circ W,
\] 
\[
Q= \bigcirc_{i=1}^k (y_1 \circ A_i \circ y_2) .
\]

\begin{lemma}
\label{lemma:isnotempty}
Let $a_1,\ldots,a_k \in \{0,1\}^D$ and $b_1,\ldots, b_m \in \{0,1\}^D$. If there exist $i,j$ such that $a_i = b_j$ then $\d_{dF}(P,Q) \leq 1$. 
\end{lemma}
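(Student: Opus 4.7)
I would construct an explicit monotone traversal $T$ of $P$ and $Q$ with $\d_T(P,Q)\leq 1$, which immediately gives $\d_{dF}(P,Q)\leq 1$. The strategy uses the hypothesis $a_{i^*}=b_{j^*}$ to align the gadget $B_{j^*}$ of $P$ with $A_{i^*}$ of $Q$ via the parallel traversal (which has cost $\leq 1$ by Lemma~\ref{lemma:vectorgadgets}), while parking all other portions of $P$ on a single vertex of $Q$ at a time. This works because $y_1$ and $y_2$ are each within Euclidean distance $1$ of every ``connector'' point appearing in $P$, namely $w$, $w'$, $s$, $x_1$, $x_2$, and all four of $\beta_0,\beta_1,\beta_0',\beta_1'$ (verifiable by direct computation from the coordinates, and in large part already visible in Figure~\ref{figure:gadgets}).

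I would structure $T$ into five phases. \textbf{Phase 1} traverses the initial $W$ of $P$ against the prefix $y_1 A_1 y_2 \cdots y_1 A_{i^*-1} y_2 y_1$ of $Q$. Within each complete $Q$-block $y_1 A_i y_2$ with $i<i^*$, I match $y_1$ and each $\alpha$-type (odd-position) vertex of $A_i$ to the current $w$ in $W$, and each $\alpha'$-type vertex together with the trailing $y_2$ to the subsequent $w'$; since $W = w\,w'\,w\,w'\cdots$ and $D$ is even, this consumes exactly $D$ consecutive vertices of $W$ per $Q$-block, and every matched pair lies in a blue disk of Figure~\ref{figure:gadgets}, hence at distance $\leq 1$. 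After all $i^*-1$ complete blocks, I reuse the last $w'$ to match the $y_1$ of block $i^*$ and then advance $P$ through whatever remains of $W$ while $Q$ stays at that $y_1$. \textbf{Phases 2 and 3} match $x_1$ and all of $s B_1 s \cdots s B_{j^*-1} s$ against the same $y_1$; every pair has distance $\leq 1$ by the key observation above.

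\textbf{Phase 4} advances $P$ through $B_{j^*}$ and $Q$ through $A_{i^*}$ in lockstep; the cost is $\leq 1$ by Lemma~\ref{lemma:vectorgadgets} because $a_{i^*}=b_{j^*}$. \textbf{Phase 5} is the mirror image of Phases 1--3: advance $Q$ to the $y_2$ of block $i^*$ and match all of $s B_{j^*+1} s \cdots s B_m s\, x_2$ against this $y_2$; then traverse the final $W$ of $P$ against $y_1 A_{i^*+1} y_2 \cdots y_1 A_k y_2$ using the same block-by-block scheme as in Phase 1, parking any leftover vertices of $W$ on the final $y_2$ of $Q$.

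The main obstacle is the bookkeeping needed to verify that the five phases compose into one legal monotone traversal that uses every vertex of $P$ and of $Q$. Phase 1 spends $D(i^*-1)$ of the $Dm$ vertices of the initial $W$ on the first $i^*-1$ blocks and parks the remaining $D(m-i^*+1)$ on the $y_1$ of block $i^*$; the analogous count for Phase 5 leaves $D(m-k+i^*)\geq 0$ leftover vertices in the final $W$, absorbed by the final $y_2$ (using $k\leq m$). Parity of $D$ is precisely what ensures each $Q$-block ends on a $w'$, so that the alternating pattern of $W$ lines up with the $y_1$ that starts the next block. Once these countings check out, every pair appearing in $T$ has distance $\leq 1$, and the conclusion $\d_{dF}(P,Q)\leq 1$ follows.
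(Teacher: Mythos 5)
Your proposal is correct and follows essentially the same route as the paper: park the prefix and suffix of $P$'s middle section on the $y_1$ (resp.\ $y_2$) adjacent to $A_{i^*}$, run the parallel traversal of $A_{i^*}$ with $B_{j^*}$ via Lemma~\ref{lemma:vectorgadgets}, and use the two copies of $W$ to absorb the remaining blocks of $Q$, with all auxiliary pairs at distance at most $1$ by the coordinate computations of Figure~\ref{figure:gadgets}. The only difference is minor bookkeeping in how many vertices of $W$ each $Q$-block consumes, and your counting there is sound.
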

\begin{proof}
We assume that there exist $i^{*}\in [k]$, $j^{*} \in [m]$ such that $a_{i^{*}} = b_{j^{*}}$. We describe one traversal $T$ which achieves $\d_{T}(P,Q) \leq 1$ and hence $\d_{dF}(P,Q)\leq 1$. 
 \begin{enumerate}
     \item  The first $2D (i^{*}-1)$ points of $W$ are matched with the first $(i^{*}-1)(D+2)$ points of $q$. In particular, for each $i = 1,\ldots, i^{*}-1:$
     \begin{inparaenum}[(i)]
         \item $w$ is matched with $y_1$,
         \item   $T$ proceeds in parallel for  $\bigcirc_{j=1}^{D/2} (w \circ w')$ and $A_i$,
         \item $w'$ is matched with $y_2$.
     \end{inparaenum}
     \item $T$ remains in $y_1$ and it matches it with the rest of $W$. Then, $x_1$ is matched with $y_1$.
     \item  $y_1$ is matched with all points in  $\bigcirc_{j=1}^{j^{*}-1} (s \circ B_i)$.
     \item $T$ proceeds in parallel for  $A_{i^{*}}$ and $B_{j^{*}}$.
     \item $T$ remains in $y_2$ and proceeds only in $p$ until it reaches $W$. 
     \item The first $2D(m-i^{*})$ points of $W$ are matched with the rest of $Q$ as in step 1.
     \item $T$ remains in $y_2$ (the last point of $q$) and it proceeds in $P$ until the end.
 \end{enumerate}
Points $w$, $w'$ are within distance $1$ from any of $y_1$, $y_2$, $\alpha_0$, $\alpha_1$, $\alpha_0'$, $\alpha_1'$. Points $y_1$ are within distance $1$ from $x_1$, $s$ and any of $\beta_0$, $\beta_1$, $\beta_0'$, $\beta_1'$. By Lemma \ref{lemma:vectorgadgets}, $\d_{dF}(A_{i^{*}},B_{j^{*}}) \leq 1$. Then $y_2$ is within distance $1$ from $x_2$, $s$ and any of  $\beta_0$, $\beta_1$, $\beta_0'$, $\beta_1'$.
  \end{proof}

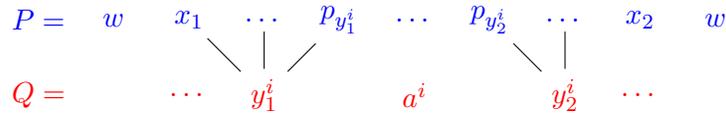
\begin{figure}[h]
\centering
\begin{tikzpicture}[scale=2.0]
   \tikzstyle{q}= [color=blue]
   \tikzstyle{p}= [color=red]
   \node[q] (W1) {$w$};
   \node[q, left of=W1] (q0) {$P=$};
   \node[p, below of =q0] (p0) {$Q=$};
  \node[q, right of=W1] (x1) {$x_1$};
  \node[q, right of=x1] (b1) {$\dots$};
  \node[q, right of=b1] (p1) {$p_{y_1^i}$};
  \node[q, right of=p1] (b2) {$\dots$};
  \node[q, right of=b2] (p2) {$p_{y_2^i}$};
  \node[q, right of=p2] (b3) {$\dots$};
  \node[q, right of=b3] (x2) {$x_2$};
  \node[q, right of=x2] (W2) {$w$};
  \node[p, below of=x1] (a1) {$\dots$};
  \node[p, below of=b1] (y1) {$y_1^i$};
  \node[p, below of=b2] (bi) {$a^i$};
  \node[p, below of=b3] (y2) {$y_2^i$};
  \node[p, right of=y2] (a2) {$\dots$};
    \draw (y1) -- (x1);  
    \draw (y1) -- (b1);
    \draw (y1) -- (p1);  
    \draw (y2) -- (p2);
    \draw (y2) -- (b3);
  \end{tikzpicture}
 \caption{$x_1$ is matched with $y_1^i$, 
 $p_{y_1^i}$ is the last point in $p$ which is matched with $y_1^i$ and $p_{y_2^i}$ is the first point in $p$ which is matched with $y_2^i$. }
 \label{figure:proofgadget}
\end{figure}

\begin{lemma}
\label{lemma:isempty}
Let $a_1,\ldots,a_k \in \{0,1\}^D$ and $b_1,\ldots, b_m \in \{0,1\}^D$. If there are no $i,j$ such that $a_i =b_j$, then $\d_{dF}(P,Q) \geq 1.11$. 
\end{lemma}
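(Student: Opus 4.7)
The plan is to assume, for contradiction, that $\d_{dF}(P,Q) < 1.11$ via some traversal $T$, and extract from $T$ a coordinate-vector equality $a_l = b_{\phi(l)}$ that contradicts $A\cap B = \emptyset$. First I would compute the few pairwise gadget-point distances that matter. The key ones are $\|x_1-y_2\|_2 = \sqrt{5}/2 \approx 1.118 > 1.11$, $\|x_1-\alpha\|_2 \geq 1.449$ for every $\alpha$-point, and $\|\alpha - s\|_2 \approx 1.686$ for every $\alpha$-point; symmetric inequalities hold for $x_2$. Hence under $T$ the point $x_1$ can only be matched with some $y_1^i$ and $x_2$ with some $y_2^j$; monotonicity of traversals forces $i \leq j$. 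Moreover no $\alpha$-point can be matched with any $s$, $x_1$, or $x_2$.

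Next I would show that for every $l$ with $i \leq l \leq j$, the $D$ points of $A_l$ are matched to $\beta$-points of a single block $B_{\phi(l)}$. Monotonicity confines the $P$-indices matched with $A_l$ strictly between the indices of $x_1$ and $x_2$, i.e.\ inside the interleaved sequence $s \circ B_1 \circ s \circ \dots \circ s \circ B_m \circ s$. Since each $\alpha$ in $A_l$ is too far from every $s$, it must be matched to a $\beta$; contiguity of this matched block in $P$, together with the fact that consecutive $B_j$'s are separated by $s$'s, forces all of these $\beta$'s to lie in a single block $B_{\phi(l)}$.

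I then analyze the induced position map $\psi_l\colon [D] \to [D]$ that sends each $p$ to a position in $B_{\phi(l)}$ matched with $A_l$'s $p$-th point. Two distance computations drive the combinatorics: pairing an odd-position $\alpha$ (on the left, $x$-coordinate $-1.61$) with an even-position $\beta$ (on the right, $x$-coordinate $0.61$) yields distance $2.22$, while same-parity pairs with opposite binary subscript give $\sqrt 2$. Both exceed $1.11$, so $\psi_l$ must be parity-preserving and satisfy $(a_l)_p = (b_{\phi(l)})_{\psi_l(p)}$ for every $p$. If $\psi_l(p)=\psi_l(p+1)$ for some $p$, the two opposite-parity indices $p$ and $p+1$ would share the parity of their common image, which is impossible, so $\psi_l$ is strictly increasing. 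A strictly increasing, parity-preserving map $[D]\to[D]$ has $D-1$ odd positive consecutive differences summing to at most $D-1$, so every difference is $1$ and $\psi_l$ is the identity. This yields $a_l = b_{\phi(l)}$, contradicting disjointness.

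The main obstacle is the structural claim of the second paragraph — that each $A_l$-block matches into a single $B_{\phi(l)}$ — since one has to be sure that no matched $P$-index for an $\alpha$ slips into an adjacent $s$, $x$, or opposite-parity $\beta$. Once all the relevant pairwise distances are verified to exceed $1.11$, the remainder is a short parity-and-counting argument on $\psi_l$.
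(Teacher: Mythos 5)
Your proposal is correct and follows essentially the same route as the paper: pin $x_1$ to a single $y_1^i$ and $x_2$ to a $y_2^j$ via the distances $\|x_1-y_2\|=\sqrt{1.25}>1.11$, $\|x_1-\alpha\|,\|s-\alpha\|>1.11$, sandwich some $A_l$ against a single block $B_{\phi(l)}$, and then use the gadget geometry to force $a_l=b_{\phi(l)}$. The only difference is presentational: where the paper delegates the last step to Lemma~\ref{lemma:vectorgadgets} (parallel traversal or distance $\geq\sqrt{2}$) and isolates the single block via a case analysis on $p_{y_1^i}$, you re-derive the same facts directly with the interval/contiguity observation and the strictly-increasing, parity-preserving map $\psi_l$ --- arguably a more explicit justification of the same mechanism.
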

\begin{proof}
Consider some traversal $T$. 
We assume that $T$ matches $x_1$ with some $y_1$ and no other point of $Q$. Likewise, $x_2$ is matched with some $y_2$ and no other point of $Q$. If these assumptions do not hold and $x_1$ or $x_2$ are matched with some other point then $\d_T(P,Q)\geq 1.11$. Furthermore we assume that each $s$ is matched with either a $y_1$ or a $y_2$, because otherwise $d_T(P,Q) \geq 1.68$. 
Now let $y_{1}^{i}$ be the $i$th appearance of point $y_1$ in $Q$ and assume that $x_1$ is matched with it. 
Let $p_{y_1^i}$ be the last point in $P$ which is matched with $y_{1}^{i}$ and let $p_{y_2^i}$ be the first point in $P$ which is matched with $y_{2}^{i}$ (see Fig.~\ref{figure:proofgadget}). 
We consider all cases for   $p_{y_1^i}$:

\begin{itemize} 
    \item If $p_{y_1^i}$ is $x_1$ then the first appearance of $s$ is matched with one of $\alpha_0,\alpha_1, \alpha_0', \alpha_1' $ and hence $d_{dF}(P,Q) \geq 1.68$.
    \item If $p_{y_1^i}$ is the $j$th appearance of $s$ then: 
    \begin{itemize}
        \item If $j=k+1$ then the first point of $A_i$ is matched with either $s$ or $x_2$. Hence, the distance is at least $1.68$. 
        \item  If $j<k+1$, then by our initial assumption that $s$ is always matched with either a $y_1$ or a $y_2$, $p_{y_2^i}$ cannot appear after the $(j+1)$th appearance of $s$. Hence, a subsequence of $B_j$ is compared to $A_i$. By Lemma \ref{lemma:vectorgadgets} this implies that $\d_{dF}(P,Q) \geq \sqrt{2}$. 
    \end{itemize}
   
    \item If $p_{y_1^i}$ is a point of some gadget $B_j$ then the same reasoning implies that a subsequence of $B_j$ is compared to $A_i$. By Lemma \ref{lemma:vectorgadgets} this implies that $\d_{dF}(P,Q) \geq \sqrt{2}$.
    \item If $p_{y_1^i} \in \{w,w'\}$ then this means that $x_2$ is matched with $y_1^i$ because of monotonicity of the matching, but then the distance is at least $1.11$.  
\end{itemize}
We conclude that if there are no $i,j$ such that $a_i =b_j$, then $\d_{dF}(P,Q) \geq 1.11$. 
  \end{proof}

\begin{theorem}
\label{theorem:boundedplanarfrechet}
Suppose that there exists a randomized $[a, b]$-protocol for the discrete Fr\'echet DTEP with approximation factor  $c<1.11$ where Alice receives a sequence of $k(2+ \lceil\log(U) \rceil)$ points in $X\subset \RR^{2}$ and Bob receives  a sequence of $3 \lceil\log(U)\rceil m+m +3$ points in $X$, where $X$ is a bounded domain and $|X|\leq 15$. Then there exists a randomized $[a,b]$-protocol for the $(k,U)$-Disjointness problem in a universe $[U]$, where Alice receives a set $S \subset [U]$ of size $k$ and Bob receives a set $T \subseteq [U]$ of size $m$.
\end{theorem}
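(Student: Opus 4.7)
The plan is to give a direct simulation: Alice and Bob turn their Disjointness instances into the point sequences $P$ and $Q$ of Section~\ref{ss:reduction1} locally and then invoke the assumed DTEP protocol. Concretely, given sets $S=\{a_1,\dots,a_k\}$ and $T=\{b_1,\dots,b_m\}$ in $[U]$, both parties interpret their elements as binary vectors in $\{0,1\}^D$ with $D=\lceil\log U\rceil$ (padding with a single bit if $D$ is odd). Alice, knowing only her own set, builds $Q=\bigcirc_{i=1}^{k}(y_1\circ A_i\circ y_2)$ and Bob, knowing only his set, builds $P=W\circ x_1\circ\bigcirc_{i=1}^{m}(s\circ B_i)\circ s\circ x_2\circ W$. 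Both constructions use only the fifteen fixed points of the gadget (see Fig.~\ref{figure:gadgets}), so all coordinates live in the claimed bounded domain $X\subset\RR^2$ with $|X|\le 15$. They then run the assumed randomized $[a,b]$-protocol for discrete Fr\'echet DTEP with parameters $r=1$ and $c<1.11$ on $(Q,P)$, and output the logical negation of its answer.

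For correctness, I would invoke the two structural lemmas that have already been proved. If $S\cap T\ne\emptyset$, then Lemma~\ref{lemma:isnotempty} gives $\d_{dF}(P,Q)\le 1=r$, so the DTEP protocol must answer ``yes'' (near), and flipping the bit correctly reports that $S$ and $T$ are not disjoint. If $S\cap T=\emptyset$, then Lemma~\ref{lemma:isempty} gives $\d_{dF}(P,Q)\ge 1.11>c\cdot r$, so the DTEP protocol must answer ``no'' (far), and flipping the bit correctly reports disjointness. In either case the error probability of the resulting protocol matches that of the underlying DTEP protocol, and the communication cost is exactly $(a,b)$ since the preprocessing of $S$ into $Q$ and of $T$ into $P$ is done privately and requires no communication.

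The remaining sanity check is to verify the point-count parameters claimed in the statement. For $Q$: each gadget $A_i$ contributes $D$ points and is flanked by $y_1$ and $y_2$, giving $k(D+2)=k(2+\lceil\log U\rceil)$ points. For $P$: each of the two copies of $W=\bigcirc_{j=1}^{Dm/2}(w\circ w')$ contributes $Dm$ points, the concatenation $\bigcirc_{i=1}^{m}(s\circ B_i)\circ s$ contributes $m(D+1)+1$ points, and $x_1,x_2$ contribute $2$ more, for a total of $2Dm+m(D+1)+3=3\lceil\log U\rceil m+m+3$ points, matching the theorem.

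The conceptual work is entirely in the gadget design of Section~\ref{ss:reduction1}; once Lemmas~\ref{lemma:vectorgadgets}, \ref{lemma:isnotempty}, and \ref{lemma:isempty} are in hand, the proof is a short bookkeeping exercise, so no real obstacle remains beyond the arithmetic above. The one delicate point worth flagging is that the reduction preserves which party sends which bits: Alice's set, which is the small side ($k$ elements), becomes the short sequence $Q$ that plays Alice's role in the DTEP protocol, and Bob's set becomes the long sequence $P$. This asymmetry is what allows the lower bound on $(k,U)$-Disjointness from Theorem~\ref{theorem:disjointnesslb} to transfer to cell-probe lower bounds for the discrete Fr\'echet distance oracle via Miltersen's simulation, exactly as in the continuous case treated in Theorem~\ref{thm:lowerbound}.
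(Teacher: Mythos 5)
Your proposal is correct and follows essentially the same route as the paper: privately encode the sets as binary vectors, build $Q$ and $P$ from the fifteen fixed gadget points, run the assumed DTEP protocol, and conclude via Lemmas~\ref{lemma:isnotempty} and~\ref{lemma:isempty}. Your version is in fact slightly more careful than the paper's, since you verify the point counts explicitly and state the answer translation (small distance $\Rightarrow$ intersection $\Rightarrow$ ``not disjoint'') unambiguously.
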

\begin{proof}
First Alice and Bob convert their inputs to their binary 
representation. Alice uses her binary vectors $a_1,\ldots,a_k$ and constructs a sequence of points $Q= \bigcirc_{i=1}^k (y_1 \circ A_i \circ y_2)$, as described above. Similarly, Bob uses his binary vectors $b_1,\ldots,b_m$ and constructs $P=W \circ x_1 \circ \bigcirc_{i=1}^m (s \circ B_i) \circ s \circ x_2 \circ W$. Then, Alice and Bob run the assumed $[a,b]$-protocol which allows them to determine whether $\d_{dF}(P,Q)\leq 1$ or $\d_{dF}(P,Q)\geq 1.11$. If $\d_{dF}(P,Q)\leq 1$ then the answer to the $(k,U)$-Disjointness instance is “yes” and if $\d_{dF}(P,Q)\geq 1.11$ then the answer is “no”. Lemmas \ref{lemma:isnotempty} and \ref{lemma:isempty} imply that in either case  the answer is correct. 
  \end{proof}

\begin{restatable}{theorem}{theoremcellprobeld}
\label{theorem:cplb1}
Consider any  discrete Fr\'echet distance oracle in the cell-probe model 
which supports point sequences from bounded domains in $\RR^2$, as follows:  for any $k\leq m\leq U$, it stores any  point sequence of length $m$, it  supports  queries of length $k$, and it achieves performance parameters $t$, $w$, $s$, and approximation factor $c<1.11$. 
There exist
\begin{align*}
    w_0 &= \Omega\left(\frac{k}{t \log m}\cdot \left(\frac{U}{k} \right)^{1-\eps}\right), & 
    s_0&=2^{\Omega \left( \frac{ k \cdot  \log(U/k)}{t \log m} \right)},
\end{align*}
such that if $w< w_0$, then $s\geq s_0$, for any constant $\eps>0$. 
\end{restatable}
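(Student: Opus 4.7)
The plan is to mirror the proof of Theorem~\ref{thm:lowerbound} but replace the continuous reduction of Theorem~\ref{theorem:frechetdecision} with the discrete reduction of Theorem~\ref{theorem:boundedplanarfrechet}. The argument composes three standard pieces.

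First, I would apply Miltersen's simulation~\cite{M94}: a cell-probe discrete Fr\'echet distance oracle with parameters $(s,w,t)$ and approximation factor $c<1.11$ yields a randomized $[2t\lceil\log s\rceil,\,2tw]$-protocol for the corresponding Discrete Fr\'echet DTEP, where Alice (holding the query of length $k$) sends cell indices of $\lceil\log s\rceil$ bits each and Bob (holding the stored sequence of length $m$) responds with cell contents of $w$ bits each.

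Second, I would invoke Theorem~\ref{theorem:boundedplanarfrechet} to turn this into a protocol for $(k^*,U^*)$-Disjointness. Since Alice's and Bob's gadgets have lengths $\Theta(k^*\log U^*)$ and $\Theta(m^*\log U^*)$ respectively, the parameters would be set as $U^*=\Theta(U)$, $k^*=\Theta(k/\log U)$, and $m^*=\Theta(m/\log U)$, so that the gadgets fit within the oracle's query length $k$ and input length $m$. The hypothesis $k\leq m\leq U$ then preserves $k^*\leq m^*\leq U^*$, which is the regime required by Theorem~\ref{theorem:disjointnesslb}.

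Third, I would apply Theorem~\ref{theorem:disjointnesslb} to the resulting $[2t\lceil\log s\rceil,\,2tw]$-protocol: for every $\delta>0$, either $2t\lceil\log s\rceil\geq \delta k^* \log(U^*/k^*)$ or $2tw\geq \Omega\bigl(k^*(U^*/k^*)^{1-C\delta}\bigr)$, with $C=1799$. Substituting $k^*=\Theta(k/\log m)$ (valid in the relevant regime, where $\log U=\Theta(\log m)$) and $U^*/k^*=\Omega(U/k)$, rearranging, and setting $\eps=C\delta$ produces the stated $w_0$ and $s_0$, so that $w<w_0$ forces $s\geq s_0$.

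The only real obstacle I anticipate is the parameter bookkeeping: choosing $k^*, m^*, U^*$ so that simultaneously (a) the gadget lengths from Theorem~\ref{theorem:boundedplanarfrechet} fit within $k$ and $m$, (b) the disjointness constraint $k^*\leq m^*\leq U^*$ is preserved, and (c) the substitution back through Miltersen's simulation yields the claimed form of $w_0$ and $s_0$. Once these substitutions are settled, the result is a direct composition of the two cited theorems with no additional combinatorial or geometric content.
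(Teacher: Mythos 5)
Your skeleton --- Miltersen's simulation, the gadget reduction of Theorem~\ref{theorem:boundedplanarfrechet}, and P\u{a}tra\c{s}cu's lower bound --- is exactly the paper's, but there is a genuine gap in the parameter bookkeeping that you yourself flag as the ``only real obstacle'': the step where you replace $\log U$ by $\log m$. With your choice $k^*=\Theta(k/\log U)$, $m^*=\Theta(m/\log U)$, the bounds you obtain after substitution have $\log U$ in the denominators, i.e.\ $s\geq 2^{\Omega(k\log(U^*/k^*)/(t\log U))}$ and similarly for $w_0$. The theorem claims $\log m$ there, and your justification ``valid in the relevant regime, where $\log U=\Theta(\log m)$'' does not hold in general: the statement quantifies over all $k\leq m\leq U$ with no upper bound on $U$, and the bound is precisely most interesting when $U$ is superpolynomial in $m$ (e.g.\ $U=2^m$), where your version degrades to something much weaker than claimed.

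The missing idea is a public-coin universe reduction performed \emph{before} building the gadgets. Alice and Bob use shared randomness to hash every element of $[U]$ to a bit string of length $2\log(mk)$; by a union bound the hash is injective on $S\cup T$ with probability $1-(mk)^{-1}$, so disjointness is preserved, and the gadgets are now built over a universe of size $m^{\OO(1)}$. This decouples the two roles of the universe: the gadget lengths become $\Theta(k\log m)$ and $\Theta(m\log m)$ (independent of $U$), while the lower bound of Theorem~\ref{theorem:disjointnesslb} is still applied to the \emph{original} $(k,U)$-Disjointness instance with its full universe $[U]$, since the hashing costs no communication. Only with this step does the final substitution $k''=\Theta(k\log m)$, $m''=\Theta(m\log m)$ yield the stated $w_0$ and $s_0$ with $\log m$ in the denominators. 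The rest of your plan (simulation, preservation of $k^*\leq m^*\leq U$, setting $\eps=C\delta$) is fine once this reduction is inserted.
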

\begin{proof}
By Theorem \ref{theorem:boundedplanarfrechet}, for sufficiently large $k'=\mathcal{O}(k\log U)$ and $m'=\mathcal{O}(m \log U)$, there exists a  bounded domain $X\subset \RR^2$, for which if there exists a randomized $[a,b]$-protocol for the discrete Fr\'echet DTEP  with approximation factor $c<1.11$,
Alice's input length equal to $k'$, Bob's input length equal to $m'$, 
then there exists a randomized $[a,b]$-protocol for $(k,U)$-Disjointness, where Alice receives a set $S\subseteq [U]$ and Bob receives a set $T\subseteq [U]$ of size $m$, with $k\leq m \leq U$. 

Now consider the following randomized $[a,b]$-protocol. First, Alice and Bob use public random coins to map all elements of $U$ to random bit strings of dimension $D=2 \log(mk)$. By a union bound over at most $mk$ different elements of $U$, distinct elements in $T\cup S$ will be mapped to distinct bit strings with probability at least $1-(mk)^{-1}$. Then, Alice and Bob use the protocol of Theorem \ref{theorem:boundedplanarfrechet} to solve $(k,U)$-Disjointness in a universe of size $m^{\mathcal{O}(1)}$. Hence, for sufficiently large $k''=\Theta(k\log m)$ and $m''=\Theta(m \log m)$, there exists a  bounded domain $X\subset \RR^2$, for which if there exists a randomized $[a,b]$-protocol for the discrete Fr\'echet DTEP  with approximation factor $c<1.11$,
Alice's input length equal to $k''$, Bob's input length equal to $m''$, 
then there exists a randomized $[a,b]$-protocol for $(k,U)$-Disjointness in an arbitrary universe $[U]$, where Alice receives a set $S\subseteq [U]$ and Bob receives a set $T\subseteq [U]$ of size $m$, with $k\leq m$.

By Theorem \ref{theorem:disjointnesslb}, for  any $\delta>0$, a randomized $[a,b]$-protocol for $(k,U)$-Disjointness, for any $m\leq U$, where $U$ is the size of the universe, requires either $a\geq \delta k \log \left(\frac{U}{k} \right)$ or $b \geq b_0$, where $b_0= \Omega\left(k \left(\frac{U}{k}\right)^{1-1799\delta}\right)$. 
Hence, for any $\delta>0$, and any $k$, $m$,  such that $k\leq m$, if there exists a randomized $[a,b]$-protocol for the discrete Fr\'echet DTEP with the above-mentioned input parameters, then either $a\geq \delta k \log \left(\frac{U}{k} \right)$ or $b \geq  b_0$.

The simulation argument implies that if there exists a cell-probe discrete Fr\'echet distance oracle with parameters $t$, $w$, $s$ for point sequences of size $k''$ and $m''$, for points in $D$, then there exists a randomized  $[2t \log s,2tw]$-protocol for the discrete Fr\'echet DTEP. Hence, it should be that either $2t \log s \geq \delta k \log \left(\frac{U}{k} \right)$  or 
$2tw  \geq b_0$. In other words, 
if $w <  b_0$, then 
$s\geq   2^{\frac{\delta k \log(U/k)}{2t}} $. 
Rescaling for $k''=\Theta(k \log m)$ and $m''=\Theta( m \log m)$ implies that there exists 
\[w_0= \Omega\left(\frac{k''}{t \log m}\right)\cdot \left(\frac{U \log m}{k''} \right)^{1-1799\delta}=
\Omega\left(\frac{k''}{t \log m''}\right)\cdot \left(\frac{U }{k''} \right)^{1-1799\delta}
\] 
and 
\[s_0=
2^{\Omega \left( \frac{\delta k''}{t \log m}  \cdot \log \left(\frac{U\log m}{k''} \right)\right)}=2^{\Omega \left( \frac{\delta k''}{t \log m''}  \cdot \log \left(\frac{U}{k''} \right)\right)}
\]
such that 
if $w< w_0$, then $s\geq s_0$. The theorem is now implied by just renaming variables $k''$, $m''$ and setting $\delta=\epsilon/1799$.

  \end{proof}

\subsection{High dimension}
\label{ss:reduction2}
The reduction in the previous section uses point sequences in the plane.
We now describe a second reduction to show a dependency on the ambient dimension $d$ of the point sequences in case $d$ is sufficiently high. 
For all $i\in [U]$, $e_i \in \RR^{U}$ denotes the vector of the standard basis, i.e.\ the vector with all elements equal to $0$ except the $i$-th coordinate which is $1$. We use the following points in $\RR^{U+2}$:
\[
w=(1,1,0,\ldots,0),x_1=(1,-1,0,\ldots,0),
s=(0,0,0,\ldots,0) ,  \tilde{b}_i=(0,0,~~e_i~~),
\]
\[x_2=(-1,1,0,\ldots,0),
y_1=(1,0,0,\ldots,0), \tilde{a}_i=(1,1,~~e_i~~), y_2=(0,1,0,\ldots,0)
\]

Given $S=\{s_1,\ldots,s_k\}$, $T=\{t_1,\ldots,t_m\}$ as in Definition \ref{definition:disjointness1}, we construct the following point sequences:
\[
P=w \circ x_1 \circ \bigcirc_{i=1}^m (s \circ \tilde{b}_{t_i}) \circ s \circ x_2 \circ w,
\] 
\[
Q= \bigcirc_{i=1}^k (y_1 \circ \tilde{a}_{s_i} \circ y_2) .
\]

Notice that $P$ is a point sequence of length $2m+5$ and $Q$ is a point sequence of length $3k$. All points lie in $\RR^{U+2}$. Point $w$ serves as a skipping gadget since it is near to any point of $Q$, and points $s$, $x_1$, $x_2$,$y_1$, $y_2$ are needed for synchronization: $x_1$ is close to $y_1$ but no other point in $Q$,  $x_2$ is close to $y_2$ but no other point in $Q$, and $s$ is close to both $y_1$ and $y_2$ but no other point in $Q$. Our analysis is very similar to the one of Section \ref{ss:reduction2}. A new key component is the use of random projections, and in particular the random projection by Achlioptas \cite{A03} to reduce the dimension. 
\begin{lemma}
\label{lemma:isnotempty2}
If $S\cap T \neq \emptyset$ then $\d_{dF}(P,Q)\leq \sqrt{2}$.
\end{lemma}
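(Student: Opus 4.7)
}
The plan is to exhibit an explicit traversal $T$ of $P$ and $Q$ and verify that $\d_T(P,Q)\le \sqrt{2}$. The construction will closely mirror the analogous statement (Lemma~\ref{lemma:isnotempty}) used in the constant-dimensional reduction: we will use the two copies of $w$ at the beginning and end of $P$ as ``skipping gadgets'' to pass over the $Q$-gadgets that do not participate in the witness, and dually we will use $y_1^{(i^*)}$ and $y_2^{(i^*)}$ as stationary points in $Q$ while we race through the irrelevant portion of $P$.

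First I would record the handful of distance computations that make the construction go through. Since $w=(1,1,0,\ldots,0)$, one checks $\|w-y_1\|=\|w-y_2\|=\|w-\tilde{a}_i\|=1$ for every $i$, so $w$ is within distance~$1$ of every point appearing in $Q$. Symmetrically, $y_1$ is within distance~$1$ of $x_1$ and $s$, $y_2$ is within distance~$1$ of $x_2$ and $s$, and $\|\tilde{b}_i-y_1\|=\|\tilde{b}_i-y_2\|=\sqrt{2}$. Finally, the key identity for the witness: if $t_{j^*}=s_{i^*}$ then $\tilde{b}_{t_{j^*}}-\tilde{a}_{s_{i^*}}=(-1,-1,0,\ldots,0)$, so $\|\tilde{b}_{t_{j^*}}-\tilde{a}_{s_{i^*}}\|=\sqrt{2}$. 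All bounds are at most $\sqrt{2}$.

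Given $i^*\in[k]$, $j^*\in[m]$ with $s_{i^*}=t_{j^*}$, the traversal I would write down proceeds in seven phases:
\begin{compactenum}[(i)]
\item Keep $P$ at its initial $w$ and advance $Q$ through $y_1^{(1)},\tilde{a}_{s_1},y_2^{(1)},\dots,y_1^{(i^*-1)},\tilde{a}_{s_{i^*-1}},y_2^{(i^*-1)},y_1^{(i^*)}$.
\item Advance $P$ to $x_1$, matched with $y_1^{(i^*)}$.
\item Stay at $y_1^{(i^*)}$ and advance $P$ through the first $j^*$ copies of $s$ and the intermediate $\tilde{b}_{t_1},\dots,\tilde{b}_{t_{j^*-1}}$.
\item Advance both curves in parallel: $\tilde{b}_{t_{j^*}}$ in $P$ is matched with $\tilde{a}_{s_{i^*}}$ in $Q$. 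Then advance $Q$ to $y_2^{(i^*)}$ while advancing $P$ to the $(j^*{+}1)$-th copy of $s$.
\item Stay at $y_2^{(i^*)}$ and advance $P$ through $\tilde{b}_{t_{j^*+1}},s,\dots,\tilde{b}_{t_m},s,x_2$.
\item Advance $P$ to its terminal $w$, still matched with $y_2^{(i^*)}$.
\item Keep $P$ at that terminal $w$ and advance $Q$ through $y_1^{(i^*+1)},\tilde{a}_{s_{i^*+1}},y_2^{(i^*+1)},\dots,y_2^{(k)}$.
\end{compactenum}
Each pair $(p_i,q_j)$ produced is one of the seven types handled in the distance list above, so $\d_T(P,Q)\le \sqrt{2}$, and hence $\d_{dF}(P,Q)\le \sqrt{2}$.

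There is no conceptual obstacle here; the only thing to be careful about is pure bookkeeping, namely verifying that the indices are advanced monotonically in both curves, that every point of $P$ and of $Q$ is covered exactly as prescribed by the traversal axioms, and that each phase's transition point is indeed reached at the right moment. The critical substantive step---the one phase that forces the constant $\sqrt{2}$ rather than $1$---is phase~(iv), where the equality $t_{j^*}=s_{i^*}$ of the disjointness witness is used to bound $\|\tilde{b}_{t_{j^*}}-\tilde{a}_{s_{i^*}}\|$ by $\sqrt{2}$. Without such a witness one would be forced to match some $\tilde{b}_{t_j}$ with some $\tilde{a}_{s_i}$ with $t_j\ne s_i$, which would produce distance $2$; this is exactly the behavior that the companion lower-bound lemma (the analog of Lemma~\ref{lemma:isempty}) will exploit.
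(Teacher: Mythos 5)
Your proposal is correct and follows essentially the same route as the paper: the same seven-phase traversal (the paper compresses it into six steps), using the two copies of $w$ to absorb the non-witness gadgets of $Q$ and $y_1^{(i^*)},y_2^{(i^*)}$ to absorb the non-witness portion of $P$, with the witness pair $\tilde{a}_{s_{i^*}},\tilde{b}_{t_{j^*}}$ at distance $\sqrt{2}$ forcing the constant. Your distance computations all check out, so nothing is missing.
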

\begin{proof}
Let $i^{*}$, $j^{*}$ such that $s_{i^{*}}=t_{j^{*}} \in T\cap S$.  We describe a traversal which achieves distance $\sqrt{2}$: 
\begin{enumerate}
    \item $w$ is matched with all points of $q$ before $y_1 \circ \tilde{a}_{s_{i^{*}}} \circ y_2$
    \item $x_1$ is matched with $y_1$
    \item  $y_1$ is matched with all points of $p$ before $b_{t_{j^{*}}}$
    \item  $\tilde{a}_{s_{i^{*}}}$ is matched with $\tilde{b}_{t_{j^{*}}}$
    \item $y_2$ is matched with the rest of $P$
    \item $w$ is matched with the rest of $Q$
\end{enumerate}
Only the following distances appear in the above matching:
\[
\|w-y_1\|_2,
\|w-y_2\|_2,
\|x_1-y_1\|_2,
\|s-y_1\|_2,
\|\tilde{a}_{s_{i^{*}}}-\tilde{b}_{t_{j^{*}}}\|_2,
\|s-y_2\|_2,
\|x_2-y_2\|_2,
\]
\[
\{\|w-\tilde{a}_{s_i}\|_2 \mid i\neq i^{*}\}, \{\|y_1-\tilde{b}_{t_j}\|_2 \mid j<j^{*}\},\{\|y_2-\tilde{b}_{t_j}\|_2 \mid j>j^{*}\}
\]
and all of them are at most $\sqrt{2}$.
  \end{proof}

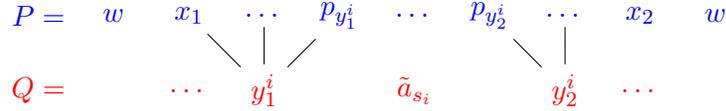
\begin{figure}[h]
\centering
\begin{tikzpicture}[scale=2.3]
   \tikzstyle{q}= [color=blue]
   \tikzstyle{p}= [color=red]
   \node[q] (W1) {$w$};
   \node[q, left of=W1] (q0) {$P=$};
   \node[p, below of =q0] (p0) {$Q=$};
  \node[q, right of=W1] (x1) {$x_1$};
  \node[q, right of=x1] (b1) {$\dots$};
  \node[q, right of=b1] (p1) {$p_{y_1^i}$};
  \node[q, right of=p1] (b2) {$\dots$};
  \node[q, right of=b2] (p2) {$p_{y_2^i}$};
  \node[q, right of=p2] (b3) {$\dots$};
  \node[q, right of=b3] (x2) {$x_2$};
  \node[q, right of=x2] (W2) {$w$};
  \node[p, below of=x1] (a1) {$\dots$};
  \node[p, below of=b1] (y1) {$y_1^i$};
  \node[p, below of=b2] (bi) {$\tilde{a}_{s_i}$};
  \node[p, below of=b3] (y2) {$y_2^i$};
  \node[p, right of=y2] (a2) {$\dots$};
    \draw (y1) -- (x1);  
    \draw (y1) -- (b1);
    \draw (y1) -- (p1);  
    \draw (y2) -- (p2);
    \draw (y2) -- (b3);
  \end{tikzpicture}
 \caption{$x_1$ is matched with $y_1^i$, 
 $p_{y_1^i}$ is the last point in $p$ which is matched with $y_1^i$ and $p_{y_2^i}$ is the first point in $p$ which is matched with $y_2^i$. }
 \label{figure:proofgadget2}
\end{figure}

\begin{lemma}
\label{lemma:isempty2}
If $S\cap T = \emptyset$ then $\d_{dF}(P,Q)\geq \sqrt{3}$.
\end{lemma}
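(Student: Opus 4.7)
The plan is to assume for contradiction that some traversal $T^{*}$ achieves $\d_{T^{*}}(P,Q) < \sqrt{3}$ and reach a contradiction, paralleling the stranding argument of Lemma~\ref{lemma:isempty}. First I would compute the pairwise Euclidean distances between the gadget points in $\RR^{U+2}$. A direct calculation gives $\|w - y_1\|_2 = \|w - y_2\|_2 = \|w - \tilde{a}_j\|_2 = 1$; $\|x_1 - y_1\|_2 = \|x_2 - y_2\|_2 = 1$ while $\|x_1 - y_2\|_2 = \|x_1 - \tilde{a}_j\|_2 = \|x_2 - y_1\|_2 = \|x_2 - \tilde{a}_j\|_2 = \sqrt{5}$; $\|s - y_1\|_2 = \|s - y_2\|_2 = 1$ and $\|s - \tilde{a}_j\|_2 = \sqrt{3}$; and $\|\tilde{b}_i - y_1\|_2 = \|\tilde{b}_i - y_2\|_2 = \sqrt{2}$. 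The critical point, which uses the hypothesis $S \cap T = \emptyset$, is that every pair $(\tilde{b}_{t_i}, \tilde{a}_{s_j})$ appearing in $P,Q$ has $t_i \neq s_j$, so $\|\tilde{b}_{t_i} - \tilde{a}_{s_j}\|_2 = \sqrt{1+1+2} = 2$.

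From $\d_{T^{*}} < \sqrt{3}$ and these distances, I would extract three structural constraints on the pairs of $T^{*}$: the point $x_1$ in $P$ can only be matched with $y_1$-type points of $Q$; the point $x_2$ can only be matched with $y_2$-type points; and every $\tilde{a}_{s_j}$ in $Q$ can only be matched with one of the two $w$'s in $P$, since each of $s, \tilde{b}_{t_i}, x_1, x_2$ has distance at least $\sqrt{3}$ from every $\tilde{a}$.

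Next, since the $y_1$-occurrences in $Q$ are never adjacent (each pair is separated by an intervening $\tilde{a}$ and $y_2$) and the range of $Q$-indices matched with $x_1$ is contiguous, I would conclude that $x_1$ is matched with a unique $y_1^{j_1}$; symmetrically $x_2$ is matched with a unique $y_2^{j_2}$, and monotonicity forces $j_1 \leq j_2$. The contradiction then comes from $\tilde{a}_{s_{j_1}}$: its $Q$-index $3j_1-1$ lies strictly between the $Q$-indices of $y_1^{j_1}$ (matched with the $P$-index $2$ of $x_1$) and $y_2^{j_2}$ (matched with the $P$-index $2m+4$ of $x_2$). Monotonicity therefore forces the $P$-match of $\tilde{a}_{s_{j_1}}$ into the range $[2, 2m+4]$, but this range contains only $x_1, s, \tilde{b}_{t_1}, \ldots, s, \tilde{b}_{t_m}, s, x_2$ and no copy of $w$, contradicting the earlier conclusion that $\tilde{a}_{s_{j_1}}$ must be matched with $w$.

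The main obstacle is the careful bookkeeping of the monotonicity argument, specifically showing that exactly one $y_1$-occurrence is matched with $x_1$ and that the inevitable interior $\tilde{a}_{s_{j_1}}$ gets stranded in the segment of $P$ where no $w$ is available. This is the same stranding phenomenon as in Lemma~\ref{lemma:isempty}, adapted to the high-dimensional coordinate-indicator gadgets.
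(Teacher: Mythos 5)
Your proof is correct: the distance computations check out (in particular $\|s-\tilde{a}_{s_j}\|_2=\sqrt{3}$, $\|x_1-\tilde{a}_{s_j}\|_2=\|x_2-\tilde{a}_{s_j}\|_2=\sqrt{5}$, and $\|\tilde{b}_{t_i}-\tilde{a}_{s_j}\|_2=2$ precisely because $S\cap T=\emptyset$), and the stranding contradiction goes through. The paper proves the same lemma by the same underlying phenomenon but organizes it differently: it fixes the $y_1$-occurrence matched with $x_1$, defines the last $P$-point matched with it and the first $P$-point matched with the corresponding $y_2$, and runs an explicit case analysis over what that handoff point can be ($x_1$, a copy of $s$, a $\tilde{b}_{t_j}$, $x_2$, or $w$), each case forcing a pair at distance at least $\sqrt{3}$; it also needs the auxiliary assumption that every $s$ is matched with a $y_1$ or $y_2$. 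Your version replaces that case analysis with the single observation that every $\tilde{a}_{s_j}$ can only be matched with one of the two copies of $w$, and then uses contiguity of matched ranges plus monotonicity to strand $\tilde{a}_{s_{j_1}}$ in the $w$-free middle segment of $P$. This is a cleaner and more robust bookkeeping of the same argument --- it isolates the one constraint that actually drives the contradiction --- whereas the paper's case analysis mirrors its planar counterpart (Lemma~\ref{lemma:isempty}) at the cost of more cases and an extra assumption. Either write-up is acceptable; if you adopt yours, do spell out why the set of $Q$-indices matched with a fixed $P$-index is a contiguous interval (it follows from conditions (ii) and (iii) of the traversal definition), since that is the one step you currently assert without justification.
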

\begin{proof}

Consider the optimal traversal for $P$ and $Q$. 
We assume that $x_1$ is matched with some $y_1$ and no other point of $Q$. Likewise, $x_2$ is matched with some $y_2$ and no other point of $Q$. If these assumptions do not hold and $x_1$ or $x_2$ are matched with some other point then $\d_{dF}(P,Q)\geq \sqrt{5}$. Furthermore we assume that each $s$ is matched with either a $y_1$ or a $y_2$, because otherwise $d_{dF}(P,Q) \geq \sqrt{3}$. 

Now let $y_{1}^{i}$ be the $i$th appearance of point $y_1$ in $Q$ and assume that $x_1$ is matched with it. 
Now let $p_{y_1^i}$ be the last point in $P$ which is matched with $y_{1}^{i}$ and let $p_{y_2^i}$ be the first point in $P$ which is matched with $y_{2}^{i}$ (see Fig.~\ref{figure:proofgadget2}). 

We consider all cases for   $p_{y_1^i}$:

\begin{itemize} 
    \item If $p_{y_1^i}$ is $x_1$ then at least one of the following must happen:
    \begin{itemize}
        \item the first appearance of $s$ is matched with $\tilde{a}_{s_i} $ and hence $d_{dF}(P,Q) \geq \sqrt{3}$,
        \item  $x_1$ is matched with $\tilde{a}_{s_i}$ and hence $d_{dF}(P,Q) \geq \sqrt{3}$. 
    \end{itemize}
    
    \item If $p_{y_1^i}$ is the $j$th appearance of $s$ then: 
    \begin{itemize}
        \item If $j=k+1$ then  $\tilde{a}_{s_i}$ is matched with either $s$ or $x_2$ (or both). Hence, the distance is at least $\sqrt{3}$. 
        \item  If $j<k+1$, then by our initial assumption that $s$ is always matched with either a $y_1$ or a $y_2$, $p_{y_2^i}$ cannot appear after the $(j+1)$th appearance of $s$. Hence, $\tilde{b}_j$ is matched with  $\tilde{a}_i$ (because $s$ is assumed not to be matched with $\tilde{a}_{s_i}$). This implies that $\d_{dF}(P,Q) \geq 2$. 
    \end{itemize}
   \item If $p_{y_1^i}$ is $x_2$ then one of the aforementioned assumptions is not satisfied and hence $d_{dF}(P,Q)\geq \sqrt{3}$. 
    \item If $p_{y_1^i}$ is some point $\tilde{b}_{t_j}$ then $\tilde{a}_{s_i}$ is either matched with $\tilde{b}_{t_j}$ or with $s$. Hence, $\d_{dF}(P,Q) \geq \sqrt{3}$.
    \item If $p_{y_1^i}$ is $w$ then this means that $x_2$ is matched with $y_1^i$ because of monotonicity of the matching, but then the distance is at least $\sqrt{3}$.  
\end{itemize}
We conclude that if $T\cap S=\emptyset$, then $\d_{dF}(P,Q) \geq \sqrt{3}$. 
  \end{proof}

Both point sequences $P$ and $Q$ consist of points in $\{-1,0,1\}^{U+2}$. In order to reduce the dimension, we will use  the following (slighlty rephrased) result by Achlioptas. 
\begin{theorem}[Theorem 1.1 \cite{A03}]
\label{theorem:JL}
Let $P$ be an arbitrary set of $n$ points in $\RR^d$. Given $\eps,\beta>0$, let \[d_0= \frac{4+2\beta}{\eps^2/2-\eps^3/3} \log n.\] 
For integer $d\geq d_0$, let $R$ be a $d' \times d$ random matrix with each $R(i,j)$ being an independent random variable following the uniform distribution in $\{-1,1\}$. With probability at least $1=n^{-\beta}$, for all $u,v\in P$:
\[ \left\|\frac{1}{\sqrt{d'}} Ru -\frac{1}{\sqrt{d'}} Rv \right\|_2 \in (1\pm \eps) \|u-v \|_2.  \]
\end{theorem}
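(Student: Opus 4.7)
The plan is to follow the standard Johnson--Lindenstrauss strategy adapted to $\pm1$ random matrices, as in Achlioptas's original argument. By the homogeneity of the norm, it suffices to show that for any fixed unit vector $x \in \RR^d$, the random variable $Q(x) := \| Rx/\sqrt{d'}\|_2^2$ concentrates sharply around $1$, and then to apply a union bound over the $\binom{n}{2}$ difference vectors $x = u-v$ with $u,v \in P$.

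First I would observe that $Q(x) = \frac{1}{d'} \sum_{i=1}^{d'} Y_i^2$, where $Y_i = \sum_{j=1}^{d} R_{ij} x_j$. Since the entries $R_{ij}$ are i.i.d.\ uniform in $\{-1,+1\}$ and $\|x\|_2 = 1$, each $Y_i$ has mean $0$, variance $1$, and the $Y_i$ are mutually independent. Thus $\EE[Q(x)] = 1$, and the task reduces to a Chernoff-style tail bound on a sum of i.i.d.\ squared sub-Gaussian variables. I would apply the Markov/Chernoff inequality with parameter $h>0$:
\[
\Pr\!\left[ Q(x) \geq 1+\eps \right] \leq \e^{-h(1+\eps) d'} \prod_{i=1}^{d'} \EE\!\left[ \e^{h Y_i^2}\right],
\]
and symmetrically for the lower tail.

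The core technical step—and the main obstacle—is controlling $\EE[\e^{h Y_i^2}]$. Here I would use Achlioptas's comparison trick: introduce an independent standard normal $Z$ and write $\e^{h Y_i^2} = \EE_Z[\e^{\sqrt{2h}\, Z\, Y_i}\mid Y_i]$ (via the Gaussian MGF identity $\EE[\e^{tZ}] = \e^{t^2/2}$). Swapping the order of expectation and using that $\EE[\e^{tR_{ij}}] = \cosh(t) \leq \e^{t^2/2}$ for each Rademacher entry, one obtains $\EE[\e^{h Y_i^2}] \leq \EE[\e^{h Z^2}] = (1-2h)^{-1/2}$, which matches the Gaussian MGF. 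From this point the analysis is identical to the Gaussian JL case: optimizing over $h$ and using the inequality $\ln(1-2h) + 2h(1+\eps) \geq \eps^2/2 - \eps^3/3$ on the relevant range of $h$ yields
\[
\Pr\!\left[ |Q(x) - 1| \geq \eps \right] \leq 2\exp\!\left(-d'(\eps^2/2 - \eps^3/3)/2\right),
\]
and a corresponding two-sided bound with the exponent $d'(\eps^2/2-\eps^3/3)$ after combining both tails.

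Finally I would take the union bound over all $\binom{n}{2} \leq n^2/2$ difference vectors from $P$. Requiring $d' \geq \frac{4+2\beta}{\eps^2/2 - \eps^3/3}\log n$ ensures that each pair fails the norm-preservation condition with probability at most $n^{-(2+\beta)}$, so by a union bound the overall failure probability is at most $n^{-\beta}$, and the stated guarantee holds for \emph{all} pairs simultaneously. The main subtlety is the comparison with a Gaussian to obtain the clean MGF bound; once that is in place, the remainder is routine Chernoff optimization and union bounding.
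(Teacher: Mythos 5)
This statement is quoted verbatim (up to minor rephrasing and a typo: ``$1=n^{-\beta}$'' should read ``$1-n^{-\beta}$'') from Achlioptas's paper \cite{A03}; the present paper gives no proof of its own, so the only meaningful comparison is with the argument in the cited source. Your sketch is correct and follows essentially that standard route: reduction to a fixed unit vector, a Chernoff bound on $\sum_i Y_i^2$ whose moment generating function is dominated by the Gaussian one, optimization over $h$ yielding the exponent $\tfrac{d'}{2}(\eps^2/2-\eps^3/3)$, and a union bound over the $\binom{n}{2}$ pairs. The only stylistic difference is that Achlioptas establishes the domination $\EE[e^{hY^2}]\leq \EE[e^{hZ^2}]=(1-2h)^{-1/2}$ by comparing moments $\EE[Y^{2k}]\leq\EE[Z^{2k}]$ term by term (machinery he needs anyway for the sparse $\{-1,0,1\}$ variant), whereas you obtain it in one line via the Gaussian interpolation identity $e^{hY^2}=\EE_Z[e^{\sqrt{2h}\,Z Y}]$ combined with $\cosh(t)\leq e^{t^2/2}$; both give the same constants, so your derivation is a valid and slightly slicker reconstruction of the cited result.
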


\begin{theorem}
\label{theorem:protocolhd}
Suppose that there exists a randomized $[a, b]$-protocol for the discrete Fr\'echet DTEP with approximation factor  $c<\sqrt{3/2}$ where Alice receives a sequence of $3k$ points in $\left([-3,3]\cap\ZZ\right)^{\Theta( \log m)}$ and Bob receives  a sequence of $2m+5$ points in $\left([-3,3]\cap\ZZ\right)^{\Theta( \log m)}$. Then there exists a randomized $[a,b]$-protocol for the $(k,U)$-Disjointness problem in a universe $[U]$, where Alice receives a set $S \subseteq [U]$ of size $k$ and Bob receives a set $T \subseteq [U]$ of size $m$. 
\end{theorem}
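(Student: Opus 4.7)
The plan is to compose the gadget-based reduction already set up in Lemmas \ref{lemma:isnotempty2} and \ref{lemma:isempty2} with a random dimensionality reduction based on Achlioptas's matrix (Theorem \ref{theorem:JL}), so that the ambient dimension drops from $U+2$ to $\Theta(\log m)$ while preserving the $\sqrt{3}$ vs.\ $\sqrt{2}$ gap up to a factor arbitrarily close to $1$. Given an instance $(S,T)$ of $(k,U)$-Disjointness, Alice and Bob first use public random coins to sample a matrix $R \in \{-1,+1\}^{d' \times (U+2)}$ with independent uniform entries, where $d' = \Theta(\log m)$ is chosen so that Theorem \ref{theorem:JL} applies with failure probability at most $1/\poly(m)$ and distortion $(1\pm\epsilon)$ for a small constant $\epsilon$ to be fixed below. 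Crucially, $R$ is shared and does not need to be communicated. Each player locally constructs their side of the high-dimensional sequences $P$ and $Q$ from Section \ref{ss:reduction2}, applies $R$ to every point, and obtains projected sequences $RP$ (of length $2m+5$) and $RQ$ (of length $3k$) living in $\RR^{d'}$.

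The second step is to verify the claimed coordinate range. Each point in $P\cup Q$ has at most three nonzero coordinates, each equal to $\pm 1$: among the gadget points, only $\tilde{a}_i = (1,1,e_i)$ attains three nonzero coordinates, and $w$, $x_1$, $x_2$, $y_1$, $y_2$, $\tilde{b}_i$ have at most two. Consequently, each entry of $Ru$ is a sum of at most three $\pm 1$ random signs, and therefore lies in $\{-3,-2,\ldots,2,3\}$, matching the required input format $\left([-3,3]\cap\ZZ\right)^{\Theta(\log m)}$ of the assumed DTEP protocol. Notice that we use the unnormalized projection $Ru$ rather than $\tfrac{1}{\sqrt{d'}}Ru$; this only rescales every pairwise distance uniformly by $\sqrt{d'}$ and leaves the discrete Fr\'echet distance rescaled by the same factor, so the approximation factor $c$ is unaffected.

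The third step is to translate the distance-preservation guarantee through $\d_{dF}$. Since the discrete Fr\'echet distance is a minimum over traversals of the maximum pairwise Euclidean distance between matched points, a union bound over the $O(m)$ distinct points appearing in $P\cup Q$ gives that, with probability at least $1-1/\poly(m)$, every pairwise distance between these points is preserved up to factor $(1\pm\epsilon)$ by $R/\sqrt{d'}$. Combined with Lemmas \ref{lemma:isnotempty2} and \ref{lemma:isempty2}, this yields
\[
\d_{dF}(RP,RQ) \;\leq\; (1+\epsilon)\sqrt{2\,d'} \quad \text{if } S\cap T \neq \emptyset, \qquad \d_{dF}(RP,RQ) \;\geq\; (1-\epsilon)\sqrt{3\,d'} \quad \text{if } S\cap T = \emptyset.
\]
Since $c < \sqrt{3/2}$, we can pick $\epsilon>0$ small enough that $(1-\epsilon)\sqrt{3} > c\,(1+\epsilon)\sqrt{2}$; then invoking the assumed $c$-approximate DTEP $[a,b]$-protocol with threshold $r = (1+\epsilon)\sqrt{2\,d'}$ on the inputs $RP$ and $RQ$ correctly distinguishes the two cases. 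This produces a randomized $[a,b]$-protocol for $(k,U)$-Disjointness with two-sided error bounded away from $1/2$, after absorbing the $1/\poly(m)$ JL failure probability into the protocol's error.

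The only step with any subtlety is the integrality-and-boundedness check on the projected coordinates, which is what pins the ambient dimension to $\Theta(\log m)$ with the specific range $[-3,3]\cap\ZZ$; this depends on the sparsity of the gadget points (at most three nonzero $\pm 1$ coordinates per point). Everything else is a routine composition of the gadget analysis from the previous subsection with the standard JL-style projection, plus careful bookkeeping of the $\sqrt{d'}$ scaling factor.
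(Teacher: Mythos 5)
Your proposal is correct and follows essentially the same route as the paper: construct the high-dimensional gadget sequences $P,Q$, apply a shared Achlioptas $\pm1$ random projection to dimension $\Theta(\log m)$, observe that sparsity of the gadget points keeps the projected coordinates in $[-3,3]\cap\ZZ$, and use the preserved $\sqrt{2}$-vs-$\sqrt{3}$ gap from Lemmas~\ref{lemma:isnotempty2} and~\ref{lemma:isempty2} to invoke the assumed DTEP protocol. Your explicit choice of $\epsilon$ as a function of $c$ (rather than the paper's fixed $0.99/1.01$ distortion constants) and the bookkeeping of the $\sqrt{d'}$ scaling are if anything slightly more careful than the paper's write-up.
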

\begin{proof}
Alice constructs a sequence $Q$ of $3k$ points as described above and similarly Bob constructs a sequence $P$ of $2m+5$ points. Let $S$ be the set of all points in $P,Q$. 
Alice and Bob use a source of public random coins to construct the same Johnson Lindenstrauss randomized mapping. In particular, we use Theorem \ref{theorem:JL}.   
Let $R$ be a $d' \times d$ matrix with each element $R(i,j)$ chosen uniformly at random from $\{-1,1\}$ and let $f:~\RR^d \mapsto \RR^{d'}$ be the function which maps any vector $v\in \RR^d$ to $  R v$. 
 Alice and Bob sample $f(\cdot)$ and project their points to dimension $d'=\mathcal{O}(\log (m+k))=\mathcal{O}(\log m)$. With high probability, for any two points $x,y\in S$ we have \[
\| f(x)-f(y)\|_2^2\in \left[0.99\cdot d' \cdot \|x-y\|_2^2 ,1.01 \cdot d' \cdot \|x-y\|_2^2\right].\]  
Each element of vector $f(x)$ is produced by an inner product of a vector of $d$ random signs and a vector of at least $d-3$ zeros and at most $3$ elements from $\{-1,1\}$. 
Hence, $\|f(x)\|_{\infty}\leq 3$ and moreover $f(x) \in \ZZ^{d'}$. Let $f(P)$ and $f(Q)$ be the two point sequences after randomly projecting the points. 
By Lemmas \ref{lemma:isnotempty2} and \ref{lemma:isempty2} we get that if $T\cap S \neq \emptyset$ then $\d_{dF}(f(P),f(Q)) \leq \sqrt{2.02 \cdot d'}$ and if 
$T\cap S = \emptyset$ then $\d_{dF}(f(P),f(Q)) \geq \sqrt{2.97 d'} $. 

Hence Alice and Bob can now use the assumed protocol for computing the discrete Fr\'echet distance and decide whether $T\cap S \neq \emptyset$ or $T\cap S = \emptyset$. 
  \end{proof}

\begin{restatable}{theorem}{theoremcellprobehd}
\label{theorem:cplb2}
There exists $d_0 = \mathcal{O}(\log m)$, such that the following holds. 
Consider any discrete Fr\'echet distance oracle in the cell-probe model which supports point sequences in $\RR^d$,  $d\geq d_0$, as follows:  for any $k\leq m\leq U$, it stores any  sequence of length $m$, it  supports  queries of length $k$, and it achieves performance parameters $t$, $w$, $s$, and approximation factor $c<\sqrt{3/2}$. There exist 
\begin{align*}
w_0&=\Omega \left(\frac{k}{t} \left(\frac{U}{k} \right)^{1-\eps}\right), & 
    s_0&= 2^{\Omega \left(\frac{ k \log(U/k)}{t} \right)}
\end{align*}
such that if $w<w_0$
then 
$s\geq s_0$, for any constant $\eps>0$.

\end{restatable}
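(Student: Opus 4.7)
The plan is to mirror the proof of Theorem~\ref{theorem:cplb1}, but substituting Theorem~\ref{theorem:protocolhd} for Theorem~\ref{theorem:boundedplanarfrechet} and omitting the hashing-to-small-universe step that previously contributed the extra $\log m$ factors in $w_0$ and in the exponent of $s_0$. Since Theorem~\ref{theorem:protocolhd} already operates over an arbitrary universe $[U]$ and only requires ambient dimension $\Theta(\log m)$ (via the Johnson--Lindenstrauss projection baked into its proof), no rescaling by $\log m$ is incurred. This is precisely why the high-dimensional bound avoids the $\log m$ loss present in Theorem~\ref{theorem:cplb1}.

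First I would apply the standard Miltersen simulation: given a cell-probe discrete Fr\'echet distance oracle with parameters $(t,w,s)$ supporting input sequences of length $m$ and queries of length $k$ in $\RR^d$ with $d \geq d_0 = \Theta(\log m)$, one obtains a randomized $[2t\lceil \log s \rceil,\, 2tw]$-protocol for the Discrete Fr\'echet DTEP with approximation factor $c<\sqrt{3/2}$. In this protocol Alice holds the query and sends cell addresses, while Bob holds the stored sequence and responds with cell contents, over $t$ rounds in each direction.

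Next I would invoke Theorem~\ref{theorem:protocolhd}: an $[a,b]$-protocol for the Discrete Fr\'echet DTEP with approximation factor $c<\sqrt{3/2}$, Alice's input a sequence of $3k$ points and Bob's input a sequence of $2m+5$ points in $\left([-3,3]\cap\ZZ\right)^{\Theta(\log m)}$, translates into an $[a,b]$-protocol for $(k,U)$-Disjointness. Chaining this with the previous step, the oracle gives a $[2t\log s,\,2tw]$-protocol for $(k,U)$-Disjointness for every $k \leq m \leq U$. A minor rescaling of the input lengths by the constant factors $3$ and $2$ (which turn the query length $3k$ back into $k$ and input length $2m+5$ back into $m$) is absorbed in the asymptotic notation.

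Finally, I apply Theorem~\ref{theorem:disjointnesslb} with $C=1799$: for every $\delta>0$, either $2t\log s \geq \delta k \log(U/k)$ or $2tw \geq \Omega\!\left(k (U/k)^{1-C\delta}\right)$. Setting $\delta = \eps/C$ and solving gives
\[
w_0 \;=\; \Omega\!\left(\tfrac{k}{t}\,(U/k)^{1-\eps}\right), \qquad
s_0 \;=\; 2^{\Omega\!\left(\,k\log(U/k)/t\,\right)},
\]
such that $w<w_0$ forces $s \geq s_0$, as claimed. The argument is mostly bookkeeping once Theorem~\ref{theorem:protocolhd} is in hand; the only place to be careful is verifying that the $O(\log m)$-dimensional domain required by Theorem~\ref{theorem:protocolhd} can indeed be met by the oracle (hence the hypothesis $d \geq d_0$) and tracking the constant-factor rescaling so as not to accidentally lose a factor of $\log m$, which would collapse the improvement over Theorem~\ref{theorem:cplb1}.
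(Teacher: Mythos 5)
Your proposal is correct and follows essentially the same route as the paper's proof: simulate the cell-probe oracle as a $[2t\log s,\,2tw]$-protocol for the discrete Fr\'echet DTEP, reduce $(k,U)$-Disjointness to it via Theorem~\ref{theorem:protocolhd}, and apply Theorem~\ref{theorem:disjointnesslb} with $\delta=\eps/1799$, absorbing the constant-factor rescalings $k'=\Theta(k)$, $m'=\Theta(m)$. Your observation about why the $\log m$ factors from Theorem~\ref{theorem:cplb1} disappear (the basis-vector encoding keeps sequence lengths linear in $k$ and $m$) matches the paper's reasoning.
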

\begin{proof}

By Theorem \ref{theorem:protocolhd}, there exists a set $X\subset \RR^{d_0}$, for which if there exists a randomized $[a,b]$-protocol for the discrete Fr\'echet DTEP  with approximation factor $c<\sqrt{3/2}$,
Alice's input length equal to $k'$, Bob's input length equal to $m'$, 
then there exists a randomized $[a,b]$-protocol for $(k,U)$-Disjointness in an arbitrary universe $[U]$, where Alice receives a set $S\subseteq [U]$ and Bob receives a set $T\subseteq [U]$ of size $m$, with $k\leq m \leq U$. By Theorem \ref{theorem:disjointnesslb}, for  any $\delta>0$, a randomized $[a,b]$-protocol for $(k,U)$-Disjointness, for any $m\leq U$, where $U$ is the size of the universe, requires either $a\geq \delta k \log \left(\frac{U}{k} \right)$ or $b \geq b_0$, where $b_0= \Omega\left(k \left(\frac{U}{k}\right)^{1-1799\delta}\right)$.  
Hence, for any $\delta>0$, and any $k$, $m$,  such that $k\leq m$, if there exists a randomized $[a,b]$-protocol for the discrete Fr\'echet DTEP with the abovementioned input parameters, then either $a\geq \delta k \log \left(\frac{U}{k} \right)$ or $b \geq b_0$.

The simulation argument implies that if there exists a cell-probe data structure with parameters $t$, $w$, $s$ for point sequences of size $k$ and $m$, for points in $\RR^d$, then there exists a randomized  $[2t \log s,2tw]$-protocol for the discrete Fr\'echet DTEP. Hence it should be either that $2t \log s \geq \delta k \log \left(\frac{U}{k} \right)$  or 
$2tw  \geq b_0$. There exists a $w_0=\Omega\left(\frac{k}{t} \left(\frac{U}{k}\right)^{1-1799\delta}\right)$ such that 
if $w < w_0$, then 
$s\geq   2^{\frac{\delta k \log(U/k)}{2t}} $. 
The theorem is now implied by just rescaling $\delta=\epsilon/1799$ and substituting for $k'=\Theta(k)$, $m'=\Theta(m)$. 
  \end{proof}

\section{Conclusions}
We have described and analyzed a simple $(5+\eps)$ -ANN data structure. Focusing on improving the approximation factor, while compromising other performance parameters,   we presented a $(2+\eps)$-ANN data structure for time series under the continuous Fr\'echet distance. In doing so, we have presented the new technique of constructing so-called \emph{tight matchings}, which may be of independent interest. In addition, we have also presented a $\OO(k)$-ANN randomized data structure for time series under the Fr\'echet distance, with near-linear space usage and query time in $\OO(k \log n)$. 
We also showed lower bounds in the cell-probe model, which indicate that an approximation better than $2$ cannot be achieved, unless we allow space usage depending on the arclength of the time series or allow superconstant number of probes. 
Our bounds are not tight. In particular, they leave open the possibility of a data structure with approximation factor $(2+\eps)$, with space usage in $n\cdot \OO(\eps^{-1})^k$, and which answers any query using only a constant number of probes.\footnote{In fact, an earlier version of this manuscript claimed such a result, but it contained a flaw.} Moreover, it is possible that even an approximation factor of $(1+\eps)$ can be achieved with  space and query time similar to Theorem~\ref{theorem:firstconstantapprx}.

Apart from these improvements, several open questions remain, we discuss two main research directions:
\begin{compactenum}
    \item Are there data structures with similar guarantees for the ANN problem under the continuous Fr\'echet distance for curves in the plane (or higher dimensions)? Our approach uses signatures, which are tailored to the $1$-dimensional setting. A related concept for curves in higher dimensions is the \emph{curve simplification}. It is an open problem if it is possible to apply simplifications in place of signatures to obtain similar results. 
    \item The lower bounds presented in this paper are only  meaningful  when the number of probes is constant. Can we find lower bounds for the setting that query time is polynomial in $k$ and $m$, and logarithmic in $n$?
\end{compactenum}

One of the aspects that make our results and these open questions interesting is that known generic approaches designed for  general classes of metric spaces cannot be applied. There exist several data structures which operate on general metric spaces with bounded doubling dimension (see e.g. \cite{KL04, HM06, BKL06}). However, the doubling dimension of the metric space defined over the space of time series with the continuous Fr\'echet distance is unbounded \cite{DKS16}. 
Another aspect that makes our problem difficult, is that the Fr\'echet distance does not exhibit a norm structure. In this sense it is very similar to the well-known Hausdorff distance for sets, which is equally challenging from the point of view of data structures (see also the discussion in~\cite{fiHausdoff99,Indyk98}).  We hope that answering the above research questions will lead to new techniques for handling such distance measures.

 \bibliography{jlann}
 \bibliographystyle{splncs04}
 
 \appendix

\newpage
\section{Computational models}
\label{sectionappendix:compmodels}
Our data structures operate in the \emph{real-RAM model}. That is, we assume that the machine can store and access real numbers in constant time and the operations $(+,-,\times,\div, \leq)$ can be performed in constant time on these real numbers. In addition, we assume that the floor function of a real number can be computed in constant time. This model is commonly used in the literature, see for example \cite{H11,C08}. Nonetheless, the use of this computational model is controversial, since it allows all PSPACE and \#P problems to be computed in polynomial time  \cite{BMS85}. 
We stress the fact that, in our algorithms, the floor function is only used in snapping points to a canonical grid. 
In particular, in our data structures, the omission of the floor function (that is, simulating it by the other operations) merely leads to an additional factor in the query time which is bounded by $O(\log(\frac{C}{r}) + \log(\frac{1}{\eps}))$, where $C$ is the largest coordinate of any of the input points and $r$ is the parameter that defines the query radius of the ANN data structure. Moreover, the space and the number of cell probes to the data structure is unaffected by this change.
Our lower bounds hold in the \emph{cell probe model}. In this model of computation we are interested in the number of memory accesses (cell probes) to the data structure which are performed by a query. Given a universe of data and a universe of queries, a cell-probe data structure with performance parameters $s$, $t$, $w$, is a structure which consists of $s$ memory cells, each able to store $w$ bits, and any query can be answered by accessing $t$ memory cells. Note that unlike the real-RAM model, the cell-probe model inherently uses bit-complexity as a measure of space, however the space bounds are usually expressed in terms of the number of words.

\normalsize

\end{document}